\documentclass[aps,prx,twocolumn,superscriptaddress,nobibnotes]{revtex4-2}
\usepackage[dvipdfmx]{graphicx}
\usepackage{amsmath}
\usepackage{amssymb}
\usepackage{amsthm}
\usepackage{braket}
\usepackage{siunitx}
\usepackage{color}
\usepackage[utf8]{inputenc}
\usepackage{here}
\usepackage{quantikz}

\usepackage{tikz}
\usepackage{tikz-cd}
\usetikzlibrary{arrows}
\usetikzlibrary{intersections}
\usetikzlibrary{shapes.geometric}
\usetikzlibrary{decorations.pathmorphing, patterns,shapes}
\usetikzlibrary{decorations.markings}

\usepackage[most]{tcolorbox}

\theoremstyle{plain}
\newtheorem{thm}{Theorem}
\newtheorem{lm}{Lemma}

\newtheorem*{thm*}{Theorem}
\theoremstyle{definition}
\newtheorem{dfn}{Definition}

\usepackage{physics}
\usepackage{mathtools}

\usepackage[hypertexnames=true]{hyperref}
\hypersetup{
    setpagesize=false,
    bookmarksnumbered=true,
    bookmarksopen=true,
    colorlinks=true,
    linkcolor=blue,
    citecolor=blue,
    urlcolor=blue,
    linktoc=all, 
    pdfusetitle, 
}

\makeatletter
\def\Hy@chapterbookmark#1{%
    \if@mainmatter
        \bookmark[rellevel=1,keeplevel,dest=\Hy@chapapp.\thechapter]{#1}%
    \fi
}
\makeatother


\begin{document}

\title{Singular value transformation for unknown quantum channels}

\author{Ryotaro Niwa}
\affiliation{Department of Physics, The University of Tokyo, 7-3-1 Hongo, Bunkyo-ku, Tokyo 113-0033, Japan}

\author{Zane Marius Rossi}
\affiliation{Department of Physics, The University of Tokyo, 7-3-1 Hongo, Bunkyo-ku, Tokyo 113-0033, Japan}

\author{Philip Taranto}
\affiliation{Department of Physics \& Astronomy, University of Manchester, Manchester M13 9PL, United Kingdom}

\author{Mio Murao}
\affiliation{Department of Physics, The University of Tokyo, 7-3-1 Hongo, Bunkyo-ku, Tokyo 113-0033, Japan}


\date{\today}

\begin{abstract}
Given the ability to apply an unknown quantum channel acting on a $d$-dimensional system, we develop a quantum algorithm for transforming its singular values. 
The spectrum of a quantum channel as a superoperator is naturally tied to its Liouville representation, which is in general non-Hermitian. 
Our key contribution is an approximate block-encoding scheme for this representation in a Hermitized form, given only \emph{black-box access} to the channel; this immediately allows us to apply polynomial transformations to the channel's singular values by quantum singular value transformation (QSVT). 
We then demonstrate an $O(d^3/\delta)$ upper bound and an $\Omega(d/\delta)$ lower bound for the query complexity of constructing a quantum channel that is $\delta$-close in diamond norm to a block-encoding of the Hermitized Liouville representation. 
We show our method applies practically to the problem of learning the $q$-th singular value moments of unknown quantum channels for arbitrary $q>2, q\in \mathbb{R}$, which has implications for testing if a quantum channel is entanglement breaking. 
\end{abstract}

\maketitle


\textit{Introduction.---}
To date, various quantum algorithms have demonstrated that the laws of quantum mechanics can be leveraged to efficiently process information. A unifying framework encompassing many important quantum algorithms such as search, matrix inversion, and Hamiltonian simulation, is the quantum singular value transformation (QSVT)~\cite{Gily_n_2019}. Generalizing the results of quantum signal processing (QSP)~\cite{lyc_16_equiangular_gates}, it allows one to directly apply polynomial functions to the singular values of linear operators encoded into sub-blocks of a larger unitary.

While QSVT is a powerful framework for designing quantum algorithms, the cost of block-encoding a linear operator (Def.~\ref{def:block_encoding}) can sometimes dominate the overall algorithmic cost. To efficiently construct block-encodings, many works have restricted to specific cases, e.g., where the input matrix is sparse~\cite{camps2023explicitquantumcircuitsblock} or has special structure~\cite{S_nderhauf_2024}. When the matrix of interest has physical meaning, such as a density matrix or a quantum channel, it is often assumed that one has access to a purification of the density matrix~\cite{Gily_n_2019} or the Kraus operators of the channel, respectively~\cite{chen2023quantumthermalstatepreparation}. However, these access models may be too strong from the viewpoint of physical implementability; a purification of a density matrix or the Kraus operators of a channel cannot always be obtained given the quantum states or channels themselves.

\begin{figure}[t]
\centering
\includegraphics[width=0.95\columnwidth]{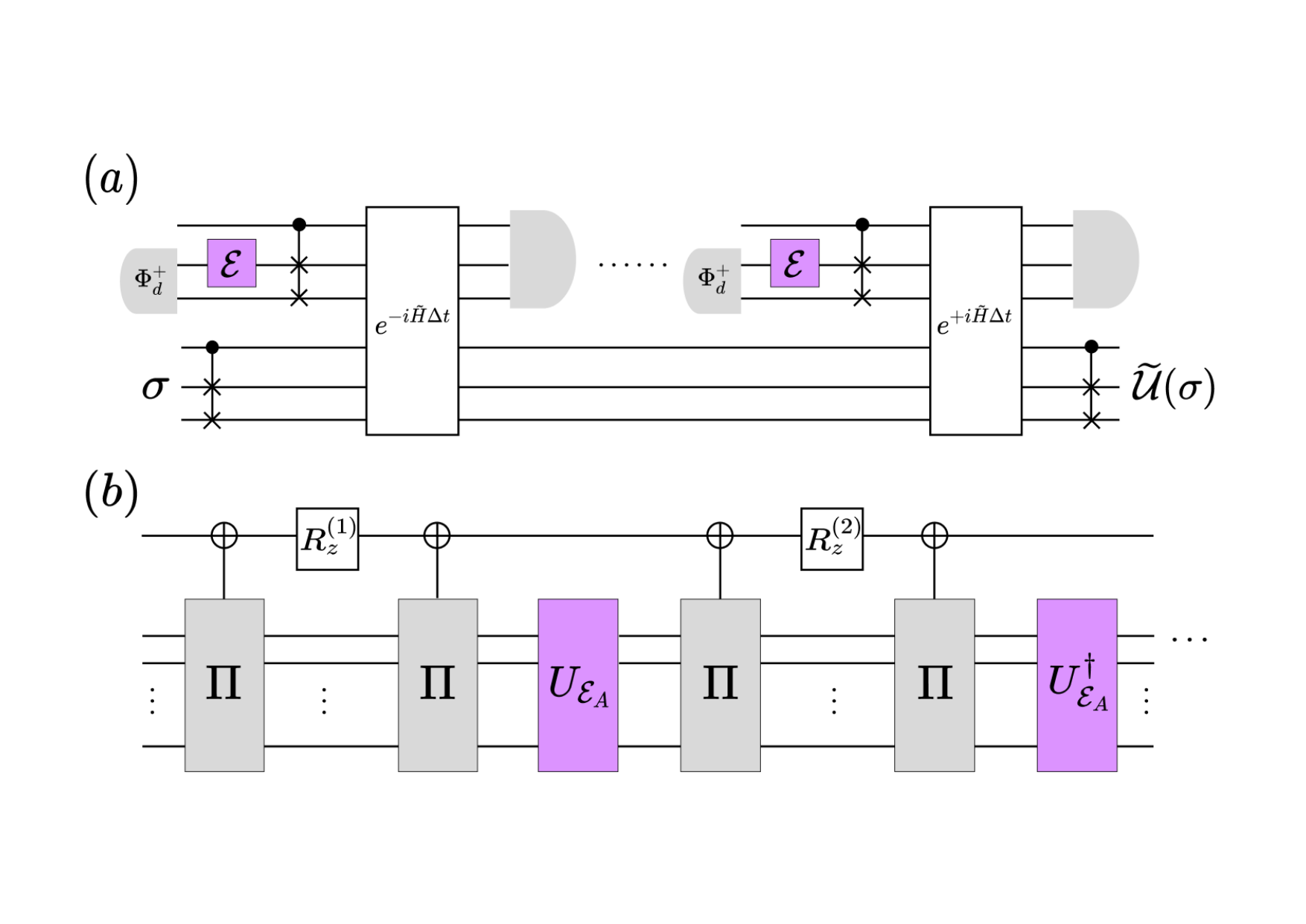}
\caption{(a) Approximate block-encoding scheme for the Liouville representation $\mathcal{E}_A$. By applying an unknown channel $\mathcal{E}$ multiple times we construct a quantum channel $\widetilde{\mathcal{U}}$ that approximates a unitary block-encoding $U_{\mathcal{E}_A}$ of the Hermitized Liouville representation. (b) Circuit for quantum singular value transformation (QSVT). Here, $R_z^{(n)}=e^{-i\phi_n \sigma_z}$ are single qubit rotations while grey boxes are projection-controlled-NOT operations locating $\mathcal{E}_A$ as a sub-block of $U_{\mathcal{E}_A}$.}
\label{fig:channelQSVT}
\end{figure}

In this Letter, we consider the fundamental task of transforming the singular values of quantum channels in the most general and physical access model: one where the quantum channel can only be applied as a \textit{black box}. As we aim to investigate the singular values of the quantum channel---a superoperator acting on density matrices---we first have to make sense of density matrices as vectors. Under such \emph{vectorization}, the action of a quantum channel is given by a matrix called the \textit{Liouville representation}~\cite{Wallman_2014}, which is in general non-Hermitian. This representation translates channel application into matrix multiplication, making it particularly suited for discussing spectral properties of a channel, for instance, the mixing time of Lindblad dynamics~\cite{Lindblad} dictated by its spectral gap. This unique feature is absent in other matrix representations of the channel such as the Choi representation~\cite{JAMIOLKOWSKI1972275, CHOI1975285}. Although the non-Hermiticity of the Liouville representation seemingly complicates its physical implementation, we show we can construct a quantum channel arbitrarily close to a unitary encoding this matrix as a sub-block. We emphasize that our scheme is achieved under the black-box channel access model, avoiding assumptions such as having access to the channel's Kraus operators. With this block-encoded Liouville representation at hand, we can freely apply standard quantum algorithmic tools such as QSVT, which will allow us to transform the singular values of block-encoded matrices without having to know their singular values \emph{or} singular vectors (see FIG.~\ref{fig:channelQSVT}).

As a concrete application of our scheme, we consider the task of learning the $q$-th singular value moments of unknown quantum channels for arbitrary $q>2, q\in \mathbb{R}$, based on a polynomial approximation to $f(x)=\frac{1}{2}x^{q-2}$. This extends previous works on entanglement detection~\cite{PhysRevLett.101.190503, PhysRevLett.129.260501}, which discuss the special case of \textit{even integer} $q$, and builds upon recent work for estimating the trace of quantum state powers $\Tr \rho^q$ for $q\in \mathbb{R}$~\cite{Liu_2025}. We also show that our exponentiation scheme is useful for testing if a quantum channel is entanglement breaking based on the reshuffling criterion of entanglement~\cite{Rudolph_2005,chen2003matrixrealignmentmethodrecognizing}. 

Our work not only extends recent work~\cite{Wei:2024znk} discussing exponentiation of Hermiticity-preserving maps to the realm of non-Hermiticity preserving maps, but also provides a foundation for universally manipulating black-box quantum channels with standard techniques such as QSVT, allowing us to directly evaluate functions of unknown quantum channels without reducing to tomography and/or classical postprocessing.


\textit{Preliminaries.---} Let $\mathfrak{S}(\mathcal{H}), \mathfrak{L}(\mathcal{H})$ respectively denote the set of density matrices and linear operators on the Hilbert space $\mathcal{H}$, and let $\mathfrak{C}(\mathcal{H})$ denote the set of completely positive trace-preserving (CPTP) maps acting on $\mathfrak{L}(\mathcal{H})$. A quantum channel $\mathcal{E}\in \mathfrak{C}(\mathcal{H})$ has various representations; two particularly important ones relevant to our manuscript are the \textit{Liouville representation} and the \textit{Choi representation}~\footnote{The Liouville representation $\mathcal{E}_
A$ and the Choi state $\mathcal{E}_B$ are sometimes called the {\it A-form} and the {\it B-form} matrices, respectively (as they were originally derived in Ref.~\cite{PhysRev.121.920})}, as we describe below.

\textbf{(A)} Liouville representation: When $\rho \in \mathfrak{S}(\mathcal{H})$ is represented by its vectorized form $|\rho\rangle \rangle \in \mathcal{H}\otimes \mathcal{H}$ as 
\begin{align}
    \rho = \sum_{ij} \rho_{ij} |i\rangle \langle j| \rightarrow |\rho\rangle \rangle := \sum_{ij} \rho_{ij}|i\rangle \otimes |j\rangle,
\end{align} 
the action of a quantum channel $\mathcal{E} \in \mathfrak{C}(\mathcal{H})$ becomes the \textit{matrix multiplication}
\begin{align}\label{eq:Liouvillematrix}
    |\mathcal{E}(\rho)\rangle \rangle = \mathcal{E}_A |\rho\rangle \rangle. 
\end{align}
The matrix $\mathcal{E}_A \in \mathfrak{L}(\mathcal{H} \otimes \mathcal{H})$ is called the \textit{Liouville representation}~\cite{Wallman_2014} of the channel $\mathcal{E}$, which is in general \textit{non-Hermitian}. This suggests that $\mathcal{E}_A$ itself cannot be obtained as a quantum state. On the other hand, an important consequence of Eq.~\eqref{eq:Liouvillematrix} is that $\mathcal{E}_A$ is naturally associated with the spectrum of the channel $\mathcal{E}$ as a superoperator. Indeed, if 
\begin{align}
    \mathcal{E}(\rho_n) = \lambda_n \rho_n \quad (\lambda_n\in \mathbb{C})
\end{align}
for some $\rho_n \in \mathfrak{L}(\mathcal{H})$ (not necessarily $\rho_n \in \mathfrak{S}(\mathcal{H})$), then
\begin{align}
    \mathcal{E}_A|\rho_n \rangle\rangle = |\mathcal{E}(\rho_n) \rangle \rangle = \lambda_n |\rho_n \rangle \rangle. 
\end{align}
Thus the eigenvalues of the non-Hermitian matrix $\mathcal{E}_A$ coincide with the eigenvalues of the quantum channel (see Appendix~\ref{app:channelprop}). Analogously, the singular values of the quantum channel $\mathcal{E}$ can be defined as the square root of the eigenvalues of the map $\mathcal{E}^\dagger \circ \mathcal{E}$. Since map composition becomes matrix multiplication in the Liouville representation due to~\eqref{eq:Liouvillematrix}, the \emph{singular values} of the quantum channel coincide with those of $\mathcal{E}_A$:
\begin{align}
    \mathcal{E}_A = U\Sigma V^\dagger, \Sigma=\mathrm{diag}[\sigma_1,\sigma_2, \cdots, \sigma_{d^2}], 
\end{align}
where $U,V \in \mathfrak{L}(\mathcal{H} \otimes \mathcal{H})$ are unitary matrices. Note that the eigenvalues and singular values of $\mathcal{E}_A$ are not equivalent in general because of the non-Hermiticity of $\mathcal{E}_A$. 

\textbf{(B)} Choi representation: Alternatively, under the Choi–Jamio{\l}kowski isomorphism~\cite{JAMIOLKOWSKI1972275, CHOI1975285}, a quantum channel $\mathcal{E}$ can be identified with its (normalized) \textit{Choi state} $\mathcal{E}_B \in \mathfrak{L}(\mathcal{H} \otimes \mathcal{H})$ defined by
\begin{align}
    \mathcal{E}_B := (\mathcal{E}\otimes I) |\Phi_d^+\rangle \langle \Phi_d^+|,
\end{align}
where $|\Phi_d^+\rangle:=\frac{1}{\sqrt{d}} \sum_i |i\rangle \otimes |i\rangle$ denotes the maximally entangled state (MES), and $d=\dim \mathcal{H}$~\footnote{We note that the (unnormalized) Choi matrix $d\mathcal{E}_B$ is also often considered in the literature, which we distinguish in this manuscript from the Choi state $\mathcal{E}_B$.}. It reflects some foundational qualities of the map, for instance, any completely positive map corresponds to a positive semidefinite $\mathcal{E}_B$, with trace preservation further implying an affine constraint. Therefore, for any $\mathcal{E} \in \mathfrak{C}(\mathcal{H})$, the Choi state $\mathcal{E}_B$ can be prepared as a quantum state, i.e, $\mathcal{E}_B\in \mathfrak{S}(\mathcal{H}\otimes \mathcal{H})$, in contradistinction to the Liouville representation $\mathcal{E}_A$. On the other hand, map composition under the Choi representation translates to the \textit{link product}~\cite{PhysRevA.80.022339} rather than matrix multiplication, and the spectrum of $\mathcal{E}_B$ is completely different from that of the quantum channel $\mathcal{E}$. Thus, the Choi representation is not necessarily suited for discussing the channel's spectral properties, which are naturally encapsulated in the Liouville representation $\mathcal{E}_A$. 

It is known that the Liouville representation $\mathcal{E}_A$ and the Choi state $\mathcal{E}_B$ of a channel $\mathcal{E}$ are related to each other via the \textit{reshuffling operation} $\mathcal{R}:|i \rangle \langle \underline{j}| \otimes |\underline{k} \rangle \langle l| \rightarrow |i \rangle \langle \underline{k}| \otimes |\underline{j} \rangle \langle l|$ (see, e.g., Ref.~\cite{Milz_2017}):
\begin{align}\label{eq:Choi}
    d\mathcal{R}(\mathcal{E}_B) &= \mathcal{E}_A.
\end{align}
Since $\mathcal{R}$ is linear but not completely-positive, it cannot be applied to the Choi state $\mathcal{E}_B$ as a quantum operation to produce $\frac{1}{d}\mathcal{E}_A$; this makes it challenging at first glance to construct a desired block-encoding of $\mathcal{E}_A$. 


\textit{Block-encoding the Liouville representation.---} To address the issue of non-Hermiticity, we construct an approximate block-encoding for the Liouville representation in a \textit{Hermitized} form.  
\begin{dfn}\label{def:block_encoding}
    If $A$ is an $n$-qubit operator and $U$ is an $(n+a)$ qubit unitary operator that satisfies
    \begin{align}
        \lVert A-\alpha(\bra{0}^{\otimes a}\otimes I) U (\ket{0}^{\otimes a}\otimes I) \rVert \leq \epsilon,
    \end{align}
    we say that $U$ is an $(\alpha, a, \epsilon)$\textit{-block encoding} of $A$. 
\end{dfn}

\begin{thm}\label{thm1}
    Given the ability to apply an unknown quantum channel $\mathcal{E}$ acting on a $d$-dimensional system, one can construct a quantum channel $\widetilde{\mathcal{U}}$ that is $\delta$-close in diamond norm to a $(1, 3, 0)$-block encoding unitary of $\frac{2}{\pi}H$ with $O(\frac{d^{2k+1}}{\delta} \log^2 \frac{1}{\delta})$ queries to $\mathcal{E}$, where $H$ is the Hermitized Liouville representation 
    \begin{align}
        H:= \frac{1}{2d^{1-k}}\mqty[ O & \mathcal{E}_A\\ \mathcal{E}_A^\dagger & O]. 
    \end{align}
    $k\in [0,\frac{1}{2}]$ ensures for arbitrary channels that $\lVert H \rVert_\infty \leq \frac{1}{2}$. Restricted to unital channels, this condition relaxes to $k\in [0,1]$. 
\end{thm}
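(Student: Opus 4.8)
The plan is to realize the block-encoding through the following chain of reductions. First, the only matrix representation of $\mathcal{E}$ that can be physically prepared from black-box access is the Choi state $\mathcal{E}_B$, which one obtains with a single query by sending half of a maximally entangled state through $\mathcal{E}$. Second, $\mathcal{E}_A$ is recovered from $\mathcal{E}_B$ via the reshuffling identity $d\mathcal{R}(\mathcal{E}_B)=\mathcal{E}_A$ of Eq.~\eqref{eq:Choi}, but $\mathcal{R}$ is only linear, not completely positive, so it cannot be executed directly on $\mathcal{E}_B$. Third, $\mathcal{E}_A$ is non-Hermitian, so it cannot be exponentiated; we therefore pass to the Hermitized dilation $H$. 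Fourth, because forward black-box access to $\mathcal{E}$ gives us neither a Stinespring dilation, nor $\mathcal{E}^\dagger$, nor a controllable purification of $\mathcal{E}_B$, an \emph{exact} unitary block-encoding is out of reach; instead we build a quantum channel approximating it by a Lloyd--Mohseni--Rebentrost (density-matrix-exponentiation) protocol consuming many copies of $\mathcal{E}_B$, which extends the exponentiation of Hermiticity-preserving maps~\cite{Wei:2024znk} to arbitrary channels. Fifth, a quantum eigenvalue transformation~\cite{Gily_n_2019} converts the resulting Hamiltonian-simulation channel into a channel that is $\delta$-close to a block-encoding of $\tfrac{2}{\pi}H$.

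The core step is to implement, as a channel, the map $\rho\mapsto e^{-it_0H}\rho\,e^{it_0H}$ for a fixed constant $t_0\in[1,\pi]$. Adjoin a block-index qubit $Q$, so the work register is $Q$ together with the two $d$-dimensional factors carrying $\mathcal{E}_A$. Each infinitesimal step prepares a fresh copy of $\mathcal{E}_B$ on a register $R\cong\mathcal{H}\otimes\mathcal{H}$, applies a fixed bounded-norm Hermitian generator $e^{-i\,\delta t\,G}$ on $R$ together with the work register, and discards $R$. The operator $G$ is engineered so that $\mathrm{tr}_R\!\big[G(\mathcal{E}_B\otimes\sigma)\big]=\tilde H_0\,\sigma$ with $\tilde H_0:=\tfrac12\,\mqty[O & \mathcal{R}(\mathcal{E}_B)\\ \mathcal{R}(\mathcal{E}_B)^\dagger & O]=H/d^{k}$: the reshuffling $\mathcal{R}$ is folded into the wiring of $G$ (using that the Choi state is $\mathcal{E}$ applied to half of a maximally entangled state, so its ``reference'' subsystem carries the transposed Kraus data through the ricochet identity), and placing $\mathcal{R}(\mathcal{E}_B)$ off-diagonal against $Q$ supplies the Hermitization, with $\mathcal{E}_A^\dagger$ appearing automatically as the adjoint of the upper block so that no resource for the non-CPTP map $\mathcal{E}^\dagger$ is needed. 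The standard first-order expansion then shows one step equals conjugation by $e^{-i\delta t\,\tilde H_0}$ up to $O(\delta t^2)$ in diamond norm; composing $n=t_0 d^{k}/\delta t$ steps (subadditivity of diamond-norm error under composition) gives conjugation by $e^{-it_0H}$ with accumulated error $O(n\,\delta t^{2})=O(t_0^{2}d^{2k}/n)$, so $n=O(d^{2k}/\delta')$ queries achieve accuracy $\delta'$; the same protocol with $\delta t\to-\delta t$ gives the inverse, and controlling each $G$ gives the controlled versions.

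For the cleanup, write $\mathcal{E}_A=U\Sigma V^\dagger$; the off-diagonal ($Q$-flipping) block of $e^{-it_0H}$ is $-iU\sin\!\big(t_0\Sigma/(2d^{1-k})\big)V^\dagger$. Since $k$ is chosen so that $\lVert H\rVert_\infty\le\tfrac12$, every argument $t_0\sigma_j/(2d^{1-k})$ lies in $[0,\pi/2]$, where $\sin$ is injective; feeding the channel approximating controlled $e^{\pm it_0H}$ into the eigenvalue transformation with a degree-$O(\log(1/\eta))$ polynomial that approximates $\tfrac{2}{\pi t_0}\arcsin$ on the relevant subinterval of $(-1,1)$ and is bounded by $1$ on $[-1,1]$ therefore produces a channel that is $O(\eta+\delta'\log(1/\eta))$-close in diamond norm to the channel of a $(1,3,0)$-block-encoding of $\tfrac{2}{\pi}H$ (the three ancilla qubits being the block-index qubit and the eigenvalue-transformation ancillas). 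Running the transformation costs $O(\log(1/\eta))$ uses of the core channel, so taking $\eta,\delta'\asymp\delta/\log(1/\delta)$ yields total query complexity $O\!\big(\tfrac{d^{2k}}{\delta}\log^{2}\tfrac1\delta\big)$ and the claimed $\widetilde{\mathcal{U}}$. Finally, the admissible range of $k$ follows from $\lVert\mathcal{E}_A\rVert_\infty=\sqrt{\lambda_{\max}(\mathcal{E}^\dagger\!\circ\!\mathcal{E})}\le\sqrt{\lVert\mathcal{E}(I)\rVert_\infty}$, which is $\le\sqrt d$ in general (as $\mathcal{E}(I)\ge0$ with $\mathrm{tr}\,\mathcal{E}(I)=d$) and $=1$ for unital $\mathcal{E}$, so that $\lVert H\rVert_\infty\le\tfrac12$ holds for $k\le\tfrac12$, respectively $k\le1$.

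The main obstacle is the combined step of turning the physically preparable $\mathcal{E}_B$ into the non-Hermitian, non-physical $\mathcal{E}_A$ using only forward calls to $\mathcal{E}$: concretely, exhibiting a genuinely Hermitian, bounded-norm generator $G$ whose contraction against one consumed copy of $\mathcal{E}_B$ realizes left-multiplication by the Hermitized reshuffle $H/d^{k}$ — because $\mathcal{R}$ is not completely positive this cannot be a plain SWAP, and one must exploit the special structure of the Choi state — and then carrying out the diamond-norm (rather than operator-norm) error analysis of the repeated imperfect step, verifying that discarding the consumed register composes additively and that $\lVert G\rVert_\infty=O(1)$ so the $O(\delta t^{2})$ per-step bound is uniform. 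The $\arcsin$/eigenvalue-transformation cleanup and the bookkeeping of the $d$-power normalizations are comparatively routine.
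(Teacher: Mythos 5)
Your proposal reproduces the paper's architecture faithfully at the level of strategy --- Choi-state preparation, an LMR-type exponentiation whose generator folds in the reshuffling and the Hermitization against a block-index qubit, accumulation over $O(d^{2k}/\delta)$ steps, and an $\arcsin$-based QSVT cleanup --- and your norm bound $\lVert\mathcal{E}_A\rVert_\infty\le\sqrt{\lVert\mathcal{E}(I)\rVert_\infty}$ via the operator Cauchy--Schwarz inequality for CP maps plus trace preservation is a clean alternative to the paper's Frobenius-norm argument. However, there is a genuine gap exactly where you flag ``the main obstacle'': the generator $G$ with $\Tr_R[G(\mathcal{E}_B\otimes\sigma)]=\tilde H_0\,\sigma$ is asserted to be ``engineered'' but never constructed, and this construction \emph{is} the content of the theorem. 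The paper does not use a single contraction against one copy of $\mathcal{E}_B$. Instead it (i) prepares two states $\rho_\pm$ by applying a controlled-\texttt{SWAP} to the Choi state with control $\ket{\pm}$, whose off-diagonal blocks are $\pm\frac{1}{2d}(\mathcal{E}_A\mathbb{F}_{YZ})^{T_Z}$ but whose diagonal blocks are unavoidably the positive operators $(\mathcal{E}\otimes I)\ket{\Phi_d^+}\bra{\Phi_d^+}$ and $(I\otimes\mathcal{E})\ket{\Phi_d^+}\bra{\Phi_d^+}$ (any state produced from one query must carry such diagonal blocks, since it is positive semidefinite); (ii) uses a partial-transpose variant of LMR with generator $\widetilde H=\mathbb{F}_{X\cup Y}\otimes d\ket{\Phi_d^+}\bra{\Phi_d^+}_Z$ to simulate $e^{-i\rho_\pm^{T_Z}\Delta t}$; (iii) cancels the unwanted diagonal blocks by taking the \emph{difference} product $e^{i\rho_-^{T_Z}\Delta t}e^{-i\rho_+^{T_Z}\Delta t}$; and (iv) strips the residual $\mathbb{F}_{YZ}$ by conjugating with controlled-\texttt{SWAP}s. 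Your one-state, one-generator formulation elides all of this, and in particular gives no mechanism for removing the diagonal blocks.

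A second, related problem is your claim that $\lVert G\rVert_\infty=O(1)$, from which you conclude the per-step error is uniformly $O(\delta t^2)$. The contraction operator realizing $\rho\mapsto\mathcal{R}(\rho)$ on all Choi states is essentially forced (Choi states span a full-dimensional affine set), and its operator norm grows with $d$; the paper's actual generator $\mathbb{F}_{X\cup Y}\otimes d\ket{\Phi_d^+}\bra{\Phi_d^+}_Z$ has norm $d$. The $O(\Delta t^2)$ per-step bound with a $d$-independent constant therefore does not follow from a generic ``bounded generator'' argument but from the specific rank-one-projector structure of $\widetilde H$ under the partial trace. So both the existence of the claimed primitive and the error analysis built on it need the explicit construction you deferred; the remaining steps (the $\sin$/$\arcsin$ inversion, the ancilla and query bookkeeping, and the range of $k$) are correct and match the paper.
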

\begin{proof}
Consider the states $\rho_\pm \in \mathfrak{S}(\mathcal{H}_X \otimes \mathcal{H}_Y \otimes \mathcal{H}_Z)$ depicted below, where $\mathcal{H}_X =\mathbb{C}^2, \mathcal{H}_Y =\mathbb{C}^d$ and $\mathcal{H}_Z =\mathbb{C}^d$: 
\begin{equation} \label{eq:swap_circ}
\begin{quantikz}[thin lines]
    \lstick{{$|\pm\rangle$}} &&\ctrl{1}&\rstick[wires=5]{$\: \rho_{\pm}$}\qw \\
    \lstick[wires = 2]{$|\Phi_d^+\rangle$} &\gate{\mathcal{E}}&\swap{1}&\qw \\
    &&\targX{}&\qw
\end{quantikz}
\end{equation}
\noindent 
The top line represents the control qubit system $\mathcal{H}_X$, while the second and third lines represent $d$-dimensional systems  $\mathcal{H}_Y$ and $\mathcal{H}_Z$, respectively. They are obtained by applying the controlled-\texttt{SWAP} operation to the Choi state $\mathcal{E}_B$ with the control qubit initialized as $\ket{\pm} := (\ket{0} \pm \ket{1})/\sqrt{2}$. The states $\rho_\pm$ have the form 
\begin{align}\label{eq:AAdaggerstate}
    \rho_\pm &= \frac{1}{2}|0\rangle \langle 0| \otimes (\mathcal{E}\otimes I)|\Phi_d^+\rangle \langle \Phi_d^+| \pm \frac{1}{2d}|0\rangle \langle 1| \otimes (\mathcal{E}_A\mathbb{F}_{YZ})^{T_Z}\nonumber\\
    &\pm \frac{1}{2d}|1\rangle \langle 0| \otimes (\mathbb{F}_{YZ}\mathcal{E}_A^\dagger)^{T_Z} +\frac{1}{2}|1\rangle \langle 1| \otimes (I \otimes \mathcal{E})|\Phi_d^+\rangle \langle \Phi_d^+|, 
\end{align}
where $\mathbb{F}_{YZ}$ denotes the \texttt{SWAP} operator swapping $\mathcal{H}_Y$ and $\mathcal{H}_Z$, and $(\cdot)^{T_Z}$ denotes the partial transposition with respect to $\mathcal{H}_Z$. 

Utilizing a variant of density matrix exponentiation~\cite{Lloyd_2014, Wei:2024znk}, the unitary $e^{-i\rho^{T_Z}t}$ for a state $\rho \in \mathfrak{S}(\mathcal{H}_X \otimes \mathcal{H}_Y \otimes \mathcal{H}_Z)$ can be  approximately applied on a state $\sigma \in \mathfrak{S}(\mathcal{H}^\prime_{X} \otimes \mathcal{H^\prime}_{Y} \otimes \mathcal{H}^\prime_{Z})$ by interspersing $e^{-i\widetilde{H} \Delta t}$ with $\widetilde{H}:=\mathbb{F}_{X\cup Y} \otimes d|\Phi_d^+\rangle \langle \Phi_d^+|_Z$ and tracing out the Hilbert space of $\rho$ as
\begin{align}
  \Tr_{\rho}[e^{-i\widetilde{H}\Delta t} (\rho\otimes \sigma) e^{i\widetilde{H}\Delta t}] &= \sigma -i\Delta t[\rho^{T_Z},  \sigma]+O(d\Delta t^2)\nonumber,\\
  &=  e^{-i\rho^{T_Z} \Delta t} \sigma e^{i\rho^{T_Z} \Delta t} +O(d\Delta t^2).  
\end{align}
Here, the \texttt{SWAP} operator $\mathbb{F}_{X\cup Y}$ swaps $\mathcal{H}_X \otimes \mathcal{H}_Y$ of $\rho_\pm$ and $\mathcal{H}^\prime_X \otimes \mathcal{H}^\prime_Y$ of $\sigma$, while the unnormalized MES $d|\Phi_d^+\rangle \langle \Phi_d^+|_Z$ acts on $\mathcal{H}_Z$ of $\rho$ and $\mathcal{H}^\prime_Z$ of $\sigma$. Choosing  $\rho=\rho_\pm$, we can approximate the unitary 
\begin{align}\label{eq:UEAprime}
    U_{\mathcal{E}_A}' & := e^{i\rho_-^{T_Z}\Delta t} e^{-i\rho_+^{T_Z} \Delta t} \nonumber\\
    &= \mathrm{exp}\qty[-\frac{i}{d}\mqty(O & \mathcal{E}_A\mathbb{F}_{YZ}
    \\ \mathbb{F}_{YZ}\mathcal{E}_A^\dagger & O)\Delta t]+O(d\Delta t^2). 
\end{align}
Further choosing $\Delta t = \frac{2\delta}{d^{k+1}}$ and applying the above procedure $O(\frac{d^{2k+1}}{4\delta})$ times, we have $O(d\Delta t^2) \times O(\frac{d^{2k+1}}{4\delta}) = O(\delta)$. Thus, we arrive at a $\delta$-close approximation of the unitary 
\begin{align}
    \widetilde{U}_{\mathcal{E}_A} := \mathrm{exp}\qty[-\frac{i}{2d^{1-k}}\mqty(O & \mathcal{E}_A\mathbb{F}_{YZ}
    \\ \mathbb{F}_{YZ}\mathcal{E}_A^\dagger & O)].
\end{align}
By conjugating $\widetilde{U}_{\mathcal{E}_A}$ with controlled-\texttt{SWAP}s, we obtain 
\begin{align}
    U_{\mathcal{E}_A} &= \mqty(I & O\\ O & \mathbb{F}_{YZ})U'_{\mathcal{E}_A}\mqty(I & O\\ O & \mathbb{F}_{YZ}) \nonumber\\
    &= \mathrm{exp}\qty[-\frac{i}{2d^{1-k}}\mqty(O & \mathcal{E}_A
    \\ \mathcal{E}_A^\dagger & O)]:=e^{-iH}. 
\end{align}
Here, $H$ satisfies $\lVert H\rVert_\infty \leq \frac{1}{2}$ as long as $k\leq \frac{1}{2}$ for general channels and $k\leq 1$ for unital channels, respectively (see Appendix \ref{app:channelprop}, Lemma~\ref{lm:channelopnorm}). Note that $U_{\mathcal{E}_A}^\dagger$ is also obtained by replacing $e^{i\rho_-^{T_Z}\Delta t}e^{-i\rho_+^{T_Z}\Delta t}\to e^{-i\rho_-^{T_Z}\Delta t}e^{i\rho_+^{T_Z}\Delta t}$ in Eq.~\eqref{eq:UEAprime}. Furthermore, control-$U_{\mathcal{E}_A}$ and its inverse can be similarly obtained by replacing $\rho_\pm$ with $|1\rangle \langle 1| \otimes \rho_\pm$~\cite{Kimmel_2017, Wei:2024znk}.

Now, suppose $U = e^{-iH}$, where $\lVert H \rVert_\infty \leq \frac{1}{2}$. Given access to control-$U$ and its inverse, we can obtain a block encoding of $\sin H$ since (c-$U$ indicating controlled-$U$)
\begin{align}
    \sin H =  (\langle +| \otimes I) (\text{c-}U) (Y \otimes I) (\text{c-}U^\dagger) (|+\rangle \otimes I).    
\end{align}
For $\epsilon' \in (0, \frac{1}{2}]$, there exists an efficiently computable odd polynomial $P(x)\in \mathbb{R}[x]$ of degree $O(\log \frac{1}{\epsilon'})$ such that $\lVert P(x)\rVert_{[-1,1]} \leq 1$ and $\lVert P(x)-\frac{2}{\pi}\arcsin{x}\rVert_{[-\frac{1}{2}, \frac{1}{2}]} \leq \epsilon'$~(Ref. \cite{Gily_n_2019}, Lemma 70). Thus, we can apply the quantum singular value transformation (QSVT)~\cite{Gily_n_2019} (see Appendix~\ref{app::algorithmtools}) based on the polynomial approximation of $\frac{2}{\pi} \arcsin(x)$ on the domain $[-\frac{1}{2}, \frac{1}{2}]$ to obtain a block-encoding of $\frac{2}{\pi}H$ with $O(\log \frac{1}{\epsilon'})$ queries to $U$.

Putting this all together, we can construct a quantum channel $\widetilde{\mathcal{U}}$ that is $\delta$-close in diamond norm to a $(1, 2, \epsilon')$-block encoding of $\frac{2}{\pi} H$ using $O\qty(\frac{d^{2k+1}}{\delta} \log^2 \frac{1}{\epsilon'})$ queries to the black-box channel $\mathcal{E}$ (see FIG.~\ref{fig:channelQSVT} (a)). The substitution $\epsilon' \rightarrow O(\delta), \delta\rightarrow \delta/2$ in the aforementioned statement further implies that we have a quantum channel $\widetilde{\mathcal{U}}$ that is $\delta$-close in diamond norm to a $(1, 3, 0)$-block-encoding of $\frac{2}{\pi} H$, using $O\qty(\frac{d^{2k+1}}{\delta} \log^2 \frac{1}{\delta})$ queries to $\mathcal{E}$. 
\end{proof}

We note that it is also possible to first implement the reshuffling operation \textit{at the state level} in a Hermitized form, using an auxiliary MES. This allows us to estimate the full singular value spectrum of unital quantum channels. However, when similarly used for constructing an approximate block-encoding of the Hermitized Liouville representation, it has a sub-optimal normalization factor (see Appendix~\ref{ap:stateBE}).  


\textit{Lower bound for query complexity.---} 
To evaluate the efficiency of our proposed protocol, we now prove a corresponding lower bound for the query complexity of constructing an approximate block-encoding for the channel's Hermitized Liouville representation, taking $d=2^n$. 
\begin{thm}\label{thm2} 
Given the ability to apply an unknown quantum channel $\mathcal{E}$ acting on a $d=2^n$ dimensional system, a universal scheme that converts $\mathcal{E}$ into a channel $\widetilde{\mathcal{U}}$ that is $\delta$-close in diamond norm to the unitary channel 
\begin{align}
\mathcal{U}(t)(\cdot)&=U_{\mathcal{E}_A}(t)(\cdot)U_{\mathcal{E}_A}(t)^\dagger\nonumber\\
U_{\mathcal{E}_A}(t) &= \mathrm{exp} \qty(-i\mqty[O & \mathcal{E}_A\\ \mathcal{E}_A^\dagger & O]t)
\end{align}
requires $\Omega(\frac{dt^2}{\delta})$ applications of the unknown channel $\mathcal{E}$. 
\end{thm}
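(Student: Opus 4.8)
The plan is to establish the $\Omega(dt^2/\delta)$ lower bound via a hybrid/adversary argument, reducing the block-encoding construction to a parameter-estimation task whose query complexity is already known. First I would identify a low-dimensional family of unknown channels $\{\mathcal{E}_\theta\}$ — for instance depolarizing channels $\mathcal{E}_\theta(\rho) = (1-\theta)\rho + \theta\,\mathrm{tr}[\rho]\,I/d$, or a dephasing family — parameterized by a single real parameter $\theta$ in a small interval. For such channels the Liouville representation $(\mathcal{E}_\theta)_A$ is explicitly diagonalizable, so the target unitary $U_{(\mathcal{E}_\theta)_A}(t)$ is known in closed form, and in particular the map $\theta \mapsto U_{(\mathcal{E}_\theta)_A}(t)$ has derivative of order $t$ in operator norm (the off-diagonal block $(\mathcal{E}_\theta)_A$ changes at rate $O(1)$ in $\theta$, and exponentiating with time $t$ amplifies this to $O(t)$ in the generated unitary, hence $O(t)$ in an appropriately chosen matrix element). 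Consequently, two nearby parameters $\theta_0, \theta_1$ with $|\theta_0 - \theta_1| = \Theta(\delta/t)$ produce target channels that are $\Theta(\delta)$-separated in diamond norm, so any scheme achieving $\delta$-closeness must, by measuring its output, be able to distinguish $\mathcal{E}_{\theta_0}$ from $\mathcal{E}_{\theta_1}$.

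Next I would invoke the known query lower bound for distinguishing (or estimating the parameter of) these channels in the black-box access model. The key input is that discriminating two channels that differ by $\Theta(\epsilon)$ in the relevant parameter — here a depolarizing/dephasing parameter — requires $\Omega(d/\epsilon)$ sequential applications; this is the channel analogue of amplitude-estimation-type lower bounds and follows from a hybrid argument bounding how much a single application of $\mathcal{E}_\theta$ (even with arbitrary ancillas and intermediate processing) can perturb the global state, combined with the $\Theta(1/d)$ suppression of the distinguishing signal coming from the normalization $1/d$ in the vectorization. Setting $\epsilon = \Theta(\delta/t)$ then yields $\Omega(dt/\delta)$ — but I must be careful: the theorem claims $\Omega(dt^2/\delta)$, so the separation in parameter space must scale like $\delta/t^2$ rather than $\delta/t$, or the per-query information must carry an extra factor of $t$. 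I expect the resolution is that the natural ``resource'' being counted is not the raw parameter gap but the gap weighted by the $t$-dependence appearing in both the target-sensitivity and in an additional $t$ factor from the fact that each use of $\mathcal{E}$ inside the construction contributes at rate proportional to the evolution time it simulates; carefully, $\|\partial_\theta U_{(\mathcal{E}_\theta)_A}(t)\|$ should be $\Theta(t)$ while distinguishing requires resolving $\theta$ to precision $\Theta(\delta/t)$, and the channel-discrimination lower bound for this precision is $\Omega\big(d \cdot (t/\delta) \cdot t\big)$ because the achievable sensitivity per query also scales with $t$ in this simulation setting. I would make this precise by tracking the $t$-dependence through the hybrid argument explicitly.

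The main obstacle, and where I would concentrate the technical effort, is exactly pinning down the correct power of $t$: establishing both (i) the lower sensitivity bound — that the best any $N$-query scheme can do is produce an output whose dependence on $\theta$ is controlled by $N$ times a per-query factor scaling like $t/d$ (or $t/\sqrt d$) — via a telescoping/triangle-inequality hybrid over the $N$ slots where $\mathcal{E}_\theta$ is inserted, using that $\|\mathcal{E}_{\theta_0} - \mathcal{E}_{\theta_1}\|_\diamond = \Theta(|\theta_0-\theta_1|)$ and that the rest of the circuit is $\theta$-independent and non-expansive; and (ii) the matching upper sensitivity of the target, $\|\mathcal{U}_{\theta_0}(t) - \mathcal{U}_{\theta_1}(t)\|_\diamond = \Theta(|\theta_0-\theta_1|\,t)$, from direct computation on the diagonalized family. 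Combining (i) and (ii) with the requirement $\delta \geq \tfrac12\|\mathcal{U}_{\theta_0}(t)-\mathcal{U}_{\theta_1}(t)\|_\diamond - 2\delta$ forces $N = \Omega(dt^2/\delta)$. A secondary subtlety is ensuring the argument covers \emph{universal} schemes — arbitrary adaptive interleavings of $\mathcal{E}$ with channels, ancillas, and even parallel calls — which the hybrid method handles since it only uses monotonicity of diamond distance under CPTP post-processing and subadditivity of error over query slots.
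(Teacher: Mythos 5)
Your overall shape---reduce the universal conversion to discriminating a one-parameter channel family, then lower-bound the discrimination cost---matches the paper's strategy, but the quantitative core of your argument does not close, and you have correctly sensed where: the powers of $t$ (and, relatedly, of $d$) do not come out right. The discrimination bound you invoke, $\Omega(d/\epsilon)$ queries to resolve a parameter gap $\epsilon$, is what a telescoping hybrid argument gives (one factor of $\lVert\mathcal{E}_{\theta_0}-\mathcal{E}_{\theta_1}\rVert_\diamond = \Theta(\epsilon)$ per query slot), and it is \emph{linear} in $1/\epsilon$; run to completion this route yields only $N=\Omega(\sqrt{d}\,t)$ with no $1/\delta$ at all. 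The paper instead obtains a \emph{quadratic} bound $N=\Omega(1/\epsilon^2)$ by choosing the two channels to be teleportation covariant, so that the optimal $N$-query \emph{adaptive} success probability collapses to $\tfrac12+\tfrac14\lVert\mathcal{E}_{1B}^{\otimes N}-\mathcal{E}_{2B}^{\otimes N}\rVert_1$, and then applying the Fuchs--van de Graaf inequality together with multiplicativity of fidelity, $F(\mathcal{E}_{1B},\mathcal{E}_{2B})^{N}$ with $F=1-\Theta(\epsilon^2)$. There is no ``extra factor of $t$ per query'' to be found; the missing square in $1/\epsilon^{2}$ is the resolution of the discrepancy you flagged.

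Second, your candidate family (global depolarizing or dephasing) does not produce the factor of $d$. That factor comes from a mismatch the paper engineers deliberately: it takes $\mathcal{E}_a=\mathcal{D}_a\circ\Phi$ with $\mathcal{D}_a$ a \emph{single-qubit} depolarizing channel and $\Phi$ a trace-and-replace map on the remaining $n-1$ qubits, for which $\lVert\Delta\mathcal{E}_A\rVert_\infty=d\lVert\mathcal{R}(\Delta\mathcal{E}_B)\rVert_\infty=\sqrt{d/2}\,\epsilon$ (the trace-and-replace part saturates the $\lVert\mathcal{E}_A\rVert_\infty\le\sqrt{d}$ bound of Lemma~\ref{lm:channelopnorm}), while the Choi states remain only $\Theta(\epsilon)$-apart with $F=1-\tfrac34\epsilon^2+O(\epsilon^3)$. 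Hence the target unitaries become \emph{fully} distinguishable already at $t_\ast=\Theta(1/(\sqrt{d}\,\epsilon))$, giving $N=\Omega(1/\epsilon^2)=\Omega(dt_\ast^2)$; for a global depolarizing family $\lVert\Delta\mathcal{E}_A\rVert_\infty=\Theta(\epsilon)$ and the $\sqrt{d}$ amplification is lost. Finally, the $1/\delta$ is not obtained by shrinking the parameter gap to $\Theta(\delta/t)$---at that separation the discrimination advantage is swamped by the $2\delta$ approximation error and the inequality becomes vacuous---but by the time-subdivision argument $f(\delta,t)\ge\frac{1}{m}f(m\delta,mt)$ with $m=\lceil 1/(6\delta)\rceil$, applied after establishing the bound at full separation ($\delta=\tfrac13$, $t=t_\ast$). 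These three ingredients---the quadratic, adaptivity-robust fidelity bound, the $\sqrt{d}$-amplifying channel family, and the subdivision---are each needed and each absent from your sketch.
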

\noindent Our proof strategy is to consider a specific channel discrimination task (see Appendix~\ref{ap:samplecomplexity}). Since Hamiltonian simulation algorithms via QSVT~\cite{Low:2016znh} have dimension-independent query complexity that is linear in $t$, the above theorem implies a $\Omega(\frac{dt^2}{\delta})$ query complexity lower bound for approximating a block-encoding unitary of the Hermitized Liouville representation. We can compare this result with Thm.~\ref{thm1}, which implies an $\widetilde{O}(\frac{d^3t^2}{\delta})$ upper bound for approximating $\mathcal{U}(t)$ up to precision $\delta$. 


\textit{QSVT for unknown quantum channels.---}
With this approximate block-encoding of the Liouville representation in Thm.~\ref{thm1} at hand, we can freely apply QSVT to implement polynomial transformations to the singular values of unknown quantum channels (see FIG.~\ref{fig:channelQSVT} (b)). 
\begin{thm}\label{cor2} 
    Let $P_{\epsilon''}(x)$ be an $\epsilon''$-approximating polynomial for the function $f(x)$ on $[-1,1]$, with $\deg P = Q(\epsilon'')$. A quantum channel $\delta$-close in diamond norm to a $(1,6,0)$-block-encoding unitary of $f(H)$, where 
    \begin{align*}
    H = \frac{1}{d^{1-k}} \mqty[O & \mathcal{E}_A\\ \mathcal{E}_A^\dagger & O]
    \end{align*}
    can be constructed using $O\qty(\frac{d^{2k+1}Q^2}{\delta} \log^2 \frac{Q}{\delta})$ queries to the unknown channel $\mathcal{E}$ with $k \in [0,\frac{1}{2}]$ for general channels. Restricted to unital channels, this condition relaxes to $k\in [0,1]$. 
\end{thm}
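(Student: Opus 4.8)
The plan is to compose the block-encoding from Theorem~\ref{thm1} with a standard QSVT routine, tracking how the normalization $k$ and the polynomial degree $Q$ feed into the query count. First I would invoke Theorem~\ref{thm1} to obtain, with $O(d^{2k}\delta_1^{-1}\log^2\delta_1^{-1})$ queries to $\mathcal{E}$, a quantum channel that is $\delta_1$-close in diamond norm to a $(1,3,0)$-block-encoding unitary of $\tfrac{2}{\pi}\widetilde H$, where $\widetilde H = \tfrac{1}{2d^{1-k}}\bigl[\begin{smallmatrix} O & \mathcal{E}_A\\ \mathcal{E}_A^\dagger & O\end{smallmatrix}\bigr]$ so that $2\widetilde H = H$ in the notation of this theorem. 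Rescaling, this is a block-encoding of $\tfrac{1}{\pi}H$; equivalently a $(1,3,0)$-block-encoding of $\tfrac{2}{\pi}\widetilde H$ with $\|\widetilde H\|_\infty\le\tfrac12$ guaranteed by the stated range of $k$ ($k\in[0,\tfrac12]$ general, $k\in[0,1]$ unital).

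Next I would apply QSVT (Appendix~\ref{app::algorithmtools}, following Ref.~\cite{Gily_n_2019}) with the polynomial $P_{\epsilon''}$ approximating $f$. The subtlety is that the block-encoding we hold is of $\tfrac{2}{\pi}\widetilde H$ rather than $\widetilde H$ itself, so one must first undo the $\tfrac{2}{\pi}$ factor — but that is exactly what the $\arcsin$ preprocessing inside the proof of Theorem~\ref{thm1} already arranges: reading that proof, the $\tfrac{2}{\pi}\arcsin$ polynomial of degree $O(\log\tfrac{1}{\epsilon'})$ is precisely what converts the $(1,2,\epsilon')$-block-encoding of $\sin H$ into a $(1,2,\epsilon')$-block-encoding of $\tfrac{2}{\pi}H$, and one final rescaling gives a clean block-encoding of $\tfrac{1}{\pi}H$. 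So in fact I would recompose: take $U=e^{-iH}$ (controlled, plus inverse) from the density-matrix-exponentiation step; QSVT the composite polynomial $P_{\epsilon''}\!\bigl(\tfrac{2}{\pi}\arcsin(\cdot)\bigr)$ applied to the block-encoding of $\sin H$. This composite has degree $O(Q\log\tfrac{1}{\epsilon'})$ and singular-value-transforms $\sin H$ into $P_{\epsilon''}(H)$, hence $\epsilon''$-approximates $f(H)$, with the extra ancillas bringing the total to $6$ qubits of ancilla and error $0$ after the $\epsilon''\to O(\delta)$ substitution. The factor-of-two discrepancy between the $H$ in the statement of this theorem ($\tfrac{1}{d^{1-k}}[\cdots]$) and the $H$ in Theorem~\ref{thm1} ($\tfrac{1}{2d^{1-k}}[\cdots]$) is absorbed into which operator we call $H$; I would simply note the statement's $H$ equals twice Theorem~\ref{thm1}'s, and the $\arcsin$ step already produces a block-encoding of $\tfrac{2}{\pi}$ times that latter operator, i.e.\ of $\tfrac{1}{\pi}$ times the statement's $H$ — consistent with a $(1,6,0)$-block-encoding of $f(H)$ once $f$ is understood to act after this fixed rescaling (or, more cleanly, redefine $f$ to absorb it).

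For the query count I would multiply: each application of $U=e^{-iH}$ (or controlled, or inverse) costs $O(d^{2k}\delta_1^{-1}\log^2\delta_1^{-1})$ queries to $\mathcal{E}$ by Theorem~\ref{thm1} with precision $\delta_1$; the QSVT circuit makes $O(Q\log\tfrac{1}{\epsilon'})$ such calls; to keep the total diamond-norm error below $\delta$ after passing the QSVT circuit through, it suffices (by subadditivity of diamond distance under composition and the fact that QSVT is a fixed unitary circuit acting on the approximate block-encoding) to take each $\delta_1 = O\!\bigl(\tfrac{\delta}{Q\log(1/\epsilon')}\bigr)$ and $\epsilon' = O(\delta/Q)$, $\epsilon''=O(\delta)$. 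Collecting factors gives $O\!\bigl(Q\log\tfrac{Q}{\delta}\cdot d^{2k}\cdot \tfrac{Q\log(Q/\delta)}{\delta}\cdot\log^2\tfrac{Q}{\delta}\bigr)$, which simplifies to the stated $O\!\bigl(\tfrac{(d^kQ)^2}{\delta}\log^2\tfrac{Q}{\delta}\bigr)$ up to absorbing lower-order logarithmic factors (or one keeps the soft-$\widetilde O$ and states it with a tilde).

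The main obstacle I expect is bookkeeping rather than conceptual: correctly propagating the approximation errors through the composition — distinguishing (i) the $O(\Delta t^2)$ Trotter-type error in building $e^{-iH}$, (ii) the diamond-norm error $\delta_1$ of the realized channel versus the ideal block-encoding unitary, (iii) the QSVT polynomial approximation error $\epsilon''$ for $f$ itself, and (iv) the internal $\arcsin$-polynomial error $\epsilon'$ — and verifying that passing an imperfect block-encoding channel through the (exact) QSVT phase sequence degrades the diamond distance only by a multiplicative factor equal to the number of uses of the block-encoding, so that the error budget closes with the claimed $\log^2$ overhead. I would also need to double-check the ancilla count: Theorem~\ref{thm1} already consumes $3$ ancilla qubits, the $\sin H$ trick and the controlled-$U$ construction add the $\ket{+}$ register, and QSVT itself acts within the existing block-encoding ancilla, so the bump from $3$ to $6$ should be accounted for explicitly, e.g.\ one qubit for the $\ket{+}$ control in $\sin H =\langle+|(\text{c-}U)(Y)(\text{c-}U^\dagger)|+\rangle$, one for the QSVT signal-processing qubit, and one spare from the $\epsilon''\to O(\delta)$, $\delta\to\delta/2$ padding as in the proof of Theorem~\ref{thm1}.
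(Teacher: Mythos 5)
Your proposal is correct in substance and follows essentially the same route as the paper: the paper's stated proof is ``Theorem~\ref{thm1} plus the samplizer of Ref.~\cite{Liu_2025},'' and the samplizer is exactly the by-hand error propagation you carry out (split the diamond-norm budget evenly over the $O(Q)$ uses of the approximate block-encoding, so each use must be $O(\delta/Q)$-close and the total cost is $O(d^{2k}Q^2/\delta)$ times logarithmic factors). One bookkeeping correction: the $\log^2(1/\delta_1)$ you attach to \emph{each} call of $e^{-iH}$ is double-counted. The bare Trotter-type approximation of $e^{-iH}$ to diamond error $\delta_1$ costs only $O(d^{2k}/\delta_1)$ queries with no logarithmic overhead; the $\log^2$ appearing in Theorem~\ref{thm1} is generated by the $O(\log(1/\epsilon'))$-degree $\arcsin$ layer, which in your composite-polynomial formulation is already accounted for in the degree $O(Q\log(1/\epsilon'))$. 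Removing the duplicated factor turns your $\log^4(Q/\delta)$ into the stated $\log^2(Q/\delta)$; as written, $\log^4$ cannot simply be ``absorbed'' into an explicit $\log^2$ bound. Finally, the normalization mismatch you flag (the $H$ of Theorem~\ref{cor2} is twice that of Theorem~\ref{thm1}, and the delivered block-encoding is of $\tfrac{2}{\pi}$ times the latter) is a wrinkle in the paper's own statement, and your resolution of absorbing the fixed rescaling into $f$ is consistent with how Appendix~\ref{app::detailedproof} evaluates the polynomial at $\sigma_i/(\pi d^{1-k})$.
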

This result is a direct consequence of Thm.~\ref{thm1} and the notion of a \textit{samplizer}~\cite{wang_et_al:LIPIcs.ESA.2024.101}, which establishes the cost inherent in transforming a circuit querying unitaries block-encoding a density matrix $\rho$ into a quantum circuit given access to copies of $\rho$ itself (see Appendix~\ref{app::algorithmtools}). Note that the normalization factor inside the matrix function $f$ vanishes for unital channels by setting $k=1$.


\textit{Learning singular value moments.---} As a concrete example, we consider the problem of learning the $q$-th \textit{singular value moment} of an unknown quantum channel $\mathcal{E}$:
\begin{align}
    S_q := \frac{1}{d^q}\sum_{i=1}^{d^2} \sigma_i^q = \Tr [\mathcal{R}(\mathcal{E}_B)\mathcal{R}(\mathcal{E}_B)^\dagger]^\frac{q}{2}. 
\end{align}
The first moment $S_1$ serves as a criterion for testing if the channel is entanglement breaking. This is because the reshuffling criterion of entanglement~\cite{Rudolph_2005,chen2003matrixrealignmentmethodrecognizing} states that 
\begin{align}
    \rho \in \mathrm{SEP} \implies \lVert \mathcal{R}(\rho)\rVert_1 \leq 1, 
\end{align}
where $\mathrm{SEP}$ is the set of separable quantum states, and a quantum channel is entanglement breaking if and only if its Choi state is separable~\cite{Horodecki_2003}. In general, $S_q$ serves as a lower bound for $S_1$ in the regime $q>1$. It was previously considered in the context of entanglement detection for \emph{even integer} $q$, based on approaches such as \texttt{SWAP} circuits~\cite{PhysRevLett.101.190503}, and classical shadow tomography~\cite{PhysRevLett.129.260501}.

Here we extend the range to arbitrary real moments $q>2, q\in \mathbb{R}$. Our overall scheme is to perform QSVT on the block-encoded Hermitized Liouville representation based on an $\epsilon''$-approximating even polynomial for $\frac{1}{2}x^{q-2}$, which uniformly converges in the range $[-1,1]$ and is of degree $O(1/{\epsilon''^{\frac{1}{q-2}}})$. Then, we apply the circuit depicted below: 
\begin{figure}[H]
\centering
\begin{quantikz}[row sep = 0.6em, column sep = 1.1em, thin lines]
    \lstick{{$|\pm\rangle$}} &\ctrl{3}&&\ctrl{3}&&\ctrl{1}&&\ctrl{3}&&\ctrl{3}& \gate{H}&\meter{}\qw \\
    \lstick{$|0^a\rangle$} &&&&&\gate[wires=4]{U}&&&&&\qw\\
    \lstick{$|1\rangle \:$}&&&&&&&&&&\qw\\
    \lstick{{$\frac{I}{d} \;\:$}} &\swap{2}&\gate{\mathcal{E}}&\swap{2}&&&&\swap{2}&&\swap{2}&\qw \\
    \lstick[wires = 2]{$|\Phi_d^+\rangle$} &&&&&&&&&&\qw \\
    &\targX{}&&\targX{}&&&&\targX{}&\gate{\mathcal{E}}&\targX{}&\qw 
\end{quantikz}
\label{fig:modifiedHadamard}
\end{figure}
\noindent Here, $U$ is the (approximate) unitary block-encoding of $M \simeq f(H)$ obtained via QSVT, where $H$ is the (normalized) Hermitized Liouville representation. Repeating this circuit produces an estimate for $\frac{1}{d^2}\Tr[\mathcal{E}_A^\dagger \mathcal{E}_A \langle 1|M|1\rangle]$. The overall query complexity for estimating $S_q$ to additive precision $\epsilon$ with probability larger than $1-\delta$ is 
\begin{align}\label{eq:querycomp}
\begin{dcases}
    \widetilde{O}\qty(\frac{d\log \frac{1}{\delta}}{d^{\frac{3}{2}(q-2)}\epsilon^{3+\frac{2}{q-2}}}) \quad(\textrm{general channels})\\
    \widetilde{O}\qty(\frac{d\log \frac{1}{\delta}}{d^{3(q-2)}\epsilon^{3+\frac{2}{q-2}}}) \quad(\textrm{unital channels})\\
\end{dcases}
\end{align}
where $\widetilde{O}$ ignores logarithmic factors in $d,\epsilon$ (see Appendix~\ref{app::detailedproof} for details). This shows an exponential advantage in the large $d$ limit for $q>2+\frac{2}{3}$, compared to existing methods such as \texttt{SWAP} circuits or classical shadow tomography~\cite{PhysRevLett.101.190503, PhysRevLett.129.260501} that can be performed on Choi states (see Appendix~\ref{app:SWAP}). Notice how our black-box channel access model allows us to utilize the time degree of freedom in the above circuit, which was absent in states.

To directly estimate $S_1$, we can also approximate the DQC1 circuit~\cite{PhysRevLett.81.5672} (see Appendix~\ref{app::algorithmtools}) performed on the unitary $U_{\mathcal{E}_A}(t)$, leveraging Thm.~\ref{thm1}. Combined with sampling methods based on the Fourier-cosine expansion of the absolute function~\cite{Wei:2024znk} (see Appendix~\ref{ap:1stmoment}),  we can obtain an $\epsilon$-close estimate for the quantity $\lVert \mathcal{R}(\mathcal{E}_B) \rVert_1$ with probability over $P^{\mathrm{suc.}} \geq 1-\delta$ using 
\begin{align}\label{eq:FCexpansion}
\widetilde{O}\qty(\frac{d^5\lVert \mathcal{R}(\mathcal{E}_B)\rVert_1 \log \frac{1}{\delta}}{\epsilon^3})
\end{align}
queries to the black-box channel. In the worst case, this scaling becomes $\widetilde{O}\qty(\frac{d^6 \log \frac{1}{\delta}}{\epsilon^3})$. 

Let us compare the above scheme with Choi state tomography. Since the reshuffling operation can increase the 1-norm distance $||\rho_1-\rho_2||_1$ by a factor of $d$, i.e; $\lVert \mathcal{R}(\rho_1)-\mathcal{R}(\rho_2)\rVert_1 \leq d \lVert\rho_1-\rho_2\rVert_1$, estimating the 1st moment $\lVert \mathcal{R}(\mathcal{E}_B)\rVert_1 $ up to precision $\epsilon$ via full-tomography of Choi states can require 
$O\qty(\frac{d^6 \log\frac{1}{\delta}}{\epsilon^2})$ samples of $\mathcal{E}_B$ in the worst case (see Appendix~\ref{ap:fulltomography}). Thus, our approach in Eq.~\eqref{eq:FCexpansion} has comparable $d$-dependence. The advantage of our scheme is that $\lVert\mathcal{R}(\mathcal{E}_B)\rVert_1$ is directly measured.  This can be contrasted with the tomographic approach, where the learner is required to store the measurement data $\hat{\rho}$ in an exponentially large classical memory, realign the indices, and then perform diagonalization with a classical computer. 


\textit{Conclusion.---} We have developed a block-encoding scheme for the Liouville representation of a quantum channel under the most general and physical assumption: the black-box channel access model. Our scheme achieves a $\delta$-approximate block-encoding of the Hermitized Liouville representation with $O(d^3/\delta)$ queries. We have also provided a corresponding $\Omega(d/\delta)$ lower bound based on channel discrimination arguments. As a concrete example of our method, we considered the problem of calculating arbitrary singular value moments for $q>2, q \in \mathbb{R}$ by choosing an appropriate polynomial approximation of the function $f(x)=\frac{1}{2}x^{q-2}$. This shows an advantage in query complexity compared to existing methods that can be performed on Choi states. We finally showed that our method can be applied to directly estimate the first moment $||\mathcal{R}(\mathcal{E}_B)||_1$, which can be used to test if a channel is entanglement-breaking. 

Although we expect in general that our protocol for transforming singular values of quantum channels requires a dimension-dependent query complexity, our scheme should be advantageous compared to obtaining a classical description of the channel via tomography. This is because the block-encoding unitary can be efficiently processed fully on a quantum computer with standard tools such as QSVT, allowing one to transform the singular values of arbitrary black-box channels and efficiently measure useful quantities without requiring exponentially large classical memory and computation cost.

There are many directions for future research. First, we note that the upper and lower bounds proven in this Letter do not match---improving these bounds is of fundamental interest. From a more practical viewpoint, investigating the role of locality or structure of the black-box channels could permit more efficient protocols for reasonable and common experimental settings. We also anticipate that the recently proposed framework of quantum eigenvalue transformation \cite{Low_2024} could be combined with the block-encoding of the Liouville representation to efficiently transform the eigenvalues of quantum channels with \textit{complex} spectra. By providing a block-encoding circuit for the Liouville representation, our work opens new avenues towards expressively manipulating the spectra of black-box quantum channels with established techniques such as QSVT.


\textit{Acknowledgments.---} We are grateful to Pawel Wocjan, Anirban Chowdhury, Toshinari Itoko, and Fuchuan Wei for discussions and comments. This work was supported by MEXT Quantum Leap Flagship Program (MEXT QLEAP) JPMXS0118069605 and JPMXS0120351339, Japan Society for the Promotion of Science (JSPS) KAKENHI Grants No. 23K21643, and IBM Quantum. ZMR acknowledges funding from the Japan Society for the Promotion of Science (JSPS) Postdoctoral Fellowship for Research in Japan (KAKENHI 24KF0136).

\bibliography{references}

\begin{thebibliography}{38}%
\makeatletter
\providecommand \@ifxundefined [1]{%
 \@ifx{#1\undefined}
}%
\providecommand \@ifnum [1]{%
 \ifnum #1\expandafter \@firstoftwo
 \else \expandafter \@secondoftwo
 \fi
}%
\providecommand \@ifx [1]{%
 \ifx #1\expandafter \@firstoftwo
 \else \expandafter \@secondoftwo
 \fi
}%
\providecommand \natexlab [1]{#1}%
\providecommand \enquote  [1]{``#1''}%
\providecommand \bibnamefont  [1]{#1}%
\providecommand \bibfnamefont [1]{#1}%
\providecommand \citenamefont [1]{#1}%
\providecommand \href@noop [0]{\@secondoftwo}%
\providecommand \href [0]{\begingroup \@sanitize@url \@href}%
\providecommand \@href[1]{\@@startlink{#1}\@@href}%
\providecommand \@@href[1]{\endgroup#1\@@endlink}%
\providecommand \@sanitize@url [0]{\catcode `\\12\catcode `\$12\catcode `\&12\catcode `\#12\catcode `\^12\catcode `\_12\catcode `\%12\relax}%
\providecommand \@@startlink[1]{}%
\providecommand \@@endlink[0]{}%
\providecommand \url  [0]{\begingroup\@sanitize@url \@url }%
\providecommand \@url [1]{\endgroup\@href {#1}{\urlprefix }}%
\providecommand \urlprefix  [0]{URL }%
\providecommand \Eprint [0]{\href }%
\providecommand \doibase [0]{https://doi.org/}%
\providecommand \selectlanguage [0]{\@gobble}%
\providecommand \bibinfo  [0]{\@secondoftwo}%
\providecommand \bibfield  [0]{\@secondoftwo}%
\providecommand \translation [1]{[#1]}%
\providecommand \BibitemOpen [0]{}%
\providecommand \bibitemStop [0]{}%
\providecommand \bibitemNoStop [0]{.\EOS\space}%
\providecommand \EOS [0]{\spacefactor3000\relax}%
\providecommand \BibitemShut  [1]{\csname bibitem#1\endcsname}%
\let\auto@bib@innerbib\@empty
\bibitem [{\citenamefont {Gily{\'e}n}\ \emph {et~al.}(2019)\citenamefont {Gily{\'e}n}, \citenamefont {Su}, \citenamefont {Low},\ and\ \citenamefont {Wiebe}}]{Gily_n_2019}%
  \BibitemOpen
  \bibfield  {author} {\bibinfo {author} {\bibfnamefont {A.}~\bibnamefont {Gily{\'e}n}}, \bibinfo {author} {\bibfnamefont {Y.}~\bibnamefont {Su}}, \bibinfo {author} {\bibfnamefont {G.~H.}\ \bibnamefont {Low}},\ and\ \bibinfo {author} {\bibfnamefont {N.}~\bibnamefont {Wiebe}},\ }\bibfield  {title} {\bibinfo {title} {{Quantum singular value transformation and beyond: exponential improvements for quantum matrix arithmetics}},\ }in\ \href {https://doi.org/10.1145/3313276.3316366} {\emph {\bibinfo {booktitle} {Proc. 51st Annu. ACM SIGACT Symp. Theory Comput.}}}\ (\bibinfo {year} {2019})\ p.\ \bibinfo {pages} {193},\ \Eprint {https://arxiv.org/abs/arXiv:1806.01838} {arXiv:1806.01838 [quant-ph]} \BibitemShut {NoStop}%
\bibitem [{\citenamefont {Low}\ \emph {et~al.}(2016)\citenamefont {Low}, \citenamefont {Yoder},\ and\ \citenamefont {Chuang}}]{lyc_16_equiangular_gates}%
  \BibitemOpen
  \bibfield  {author} {\bibinfo {author} {\bibfnamefont {G.~H.}\ \bibnamefont {Low}}, \bibinfo {author} {\bibfnamefont {T.~J.}\ \bibnamefont {Yoder}},\ and\ \bibinfo {author} {\bibfnamefont {I.~L.}\ \bibnamefont {Chuang}},\ }\bibfield  {title} {\bibinfo {title} {{Methodology of Resonant Equiangular Composite Quantum Gates}},\ }\href {https://doi.org/10.1103/PhysRevX.6.041067} {\bibfield  {journal} {\bibinfo  {journal} {Phys. Rev. X}\ }\textbf {\bibinfo {volume} {6}},\ \bibinfo {pages} {041067} (\bibinfo {year} {2016})},\ \Eprint {https://arxiv.org/abs/arXiv:1603.03996} {arXiv:1603.03996 [quant-ph]} \BibitemShut {NoStop}%
\bibitem [{\citenamefont {Camps}\ \emph {et~al.}(2023)\citenamefont {Camps}, \citenamefont {Lin}, \citenamefont {Beeumen},\ and\ \citenamefont {Yang}}]{camps2023explicitquantumcircuitsblock}%
  \BibitemOpen
  \bibfield  {author} {\bibinfo {author} {\bibfnamefont {D.}~\bibnamefont {Camps}}, \bibinfo {author} {\bibfnamefont {L.}~\bibnamefont {Lin}}, \bibinfo {author} {\bibfnamefont {R.~V.}\ \bibnamefont {Beeumen}},\ and\ \bibinfo {author} {\bibfnamefont {C.}~\bibnamefont {Yang}},\ }\href@noop {} {\bibinfo {title} {{Explicit Quantum Circuits for Block Encodings of Certain Sparse Matrices}}} (\bibinfo {year} {2023}),\ \Eprint {https://arxiv.org/abs/2203.10236} {arXiv:2203.10236 [quant-ph]} \BibitemShut {NoStop}%
\bibitem [{\citenamefont {Sünderhauf}\ \emph {et~al.}(2024)\citenamefont {Sünderhauf}, \citenamefont {Campbell},\ and\ \citenamefont {Camps}}]{S_nderhauf_2024}%
  \BibitemOpen
  \bibfield  {author} {\bibinfo {author} {\bibfnamefont {C.}~\bibnamefont {Sünderhauf}}, \bibinfo {author} {\bibfnamefont {E.}~\bibnamefont {Campbell}},\ and\ \bibinfo {author} {\bibfnamefont {J.}~\bibnamefont {Camps}},\ }\bibfield  {title} {\bibinfo {title} {{Block-encoding structured matrices for data input in quantum computing}},\ }\href {https://doi.org/10.22331/q-2024-01-11-1226} {\bibfield  {journal} {\bibinfo  {journal} {Quantum}\ }\textbf {\bibinfo {volume} {8}},\ \bibinfo {pages} {1226} (\bibinfo {year} {2024})},\ \Eprint {https://arxiv.org/abs/arXiv:2302.10949} {arXiv:2302.10949 [quant-ph]} \BibitemShut {NoStop}%
\bibitem [{\citenamefont {Chen}\ \emph {et~al.}(2023)\citenamefont {Chen}, \citenamefont {Kastoryano}, \citenamefont {Brandão},\ and\ \citenamefont {Gilyén}}]{chen2023quantumthermalstatepreparation}%
  \BibitemOpen
  \bibfield  {author} {\bibinfo {author} {\bibfnamefont {C.-F.}\ \bibnamefont {Chen}}, \bibinfo {author} {\bibfnamefont {M.~J.}\ \bibnamefont {Kastoryano}}, \bibinfo {author} {\bibfnamefont {F.~G. S.~L.}\ \bibnamefont {Brandão}},\ and\ \bibinfo {author} {\bibfnamefont {A.}~\bibnamefont {Gilyén}},\ }\href@noop {} {\bibinfo {title} {{Quantum Thermal State Preparation}}} (\bibinfo {year} {2023}),\ \Eprint {https://arxiv.org/abs/2303.18224} {arXiv:2303.18224 [quant-ph]} \BibitemShut {NoStop}%
\bibitem [{\citenamefont {Wallman}\ and\ \citenamefont {Flammia}(2014)}]{Wallman_2014}%
  \BibitemOpen
  \bibfield  {author} {\bibinfo {author} {\bibfnamefont {J.~J.}\ \bibnamefont {Wallman}}\ and\ \bibinfo {author} {\bibfnamefont {S.~T.}\ \bibnamefont {Flammia}},\ }\bibfield  {title} {\bibinfo {title} {{Randomized benchmarking with confidence}},\ }\href {https://doi.org/10.1088/1367-2630/16/10/103032} {\bibfield  {journal} {\bibinfo  {journal} {New J. Phys.}\ }\textbf {\bibinfo {volume} {16}},\ \bibinfo {pages} {103032} (\bibinfo {year} {2014})},\ \Eprint {https://arxiv.org/abs/2303.182241404.6025} {arXiv:2303.182241404.6025 [quant-ph]} \BibitemShut {NoStop}%
\bibitem [{\citenamefont {Lindblad}(1976)}]{Lindblad}%
  \BibitemOpen
  \bibfield  {author} {\bibinfo {author} {\bibfnamefont {G.}~\bibnamefont {Lindblad}},\ }\bibfield  {title} {\bibinfo {title} {{On the generators of quantum dynamical semigroups}},\ }\href {https://doi.org/10.1007/BF01608499} {\bibfield  {journal} {\bibinfo  {journal} {Commun. Math. Phys.}\ }\textbf {\bibinfo {volume} {48}},\ \bibinfo {pages} {119} (\bibinfo {year} {1976})}\BibitemShut {NoStop}%
\bibitem [{\citenamefont {Jamio{\l}kowski}(1972)}]{JAMIOLKOWSKI1972275}%
  \BibitemOpen
  \bibfield  {author} {\bibinfo {author} {\bibfnamefont {A.}~\bibnamefont {Jamio{\l}kowski}},\ }\bibfield  {title} {\bibinfo {title} {{Linear transformations which preserve trace and positive semidefiniteness of operators}},\ }\href {https://doi.org/10.1016/0034-4877(72)90011-0} {\bibfield  {journal} {\bibinfo  {journal} {Rep. Math. Phys.}\ }\textbf {\bibinfo {volume} {3}},\ \bibinfo {pages} {275} (\bibinfo {year} {1972})}\BibitemShut {NoStop}%
\bibitem [{\citenamefont {Choi}(1975)}]{CHOI1975285}%
  \BibitemOpen
  \bibfield  {author} {\bibinfo {author} {\bibfnamefont {M.-D.}\ \bibnamefont {Choi}},\ }\bibfield  {title} {\bibinfo {title} {{Completely positive linear maps on complex matrices}},\ }\href {https://doi.org/10.1016/0024-3795(75)90075-0} {\bibfield  {journal} {\bibinfo  {journal} {Linear Algebra Appl.}\ }\textbf {\bibinfo {volume} {10}},\ \bibinfo {pages} {285} (\bibinfo {year} {1975})}\BibitemShut {NoStop}%
\bibitem [{\citenamefont {Cai}\ and\ \citenamefont {Song}(2008)}]{PhysRevLett.101.190503}%
  \BibitemOpen
  \bibfield  {author} {\bibinfo {author} {\bibfnamefont {J.}~\bibnamefont {Cai}}\ and\ \bibinfo {author} {\bibfnamefont {W.}~\bibnamefont {Song}},\ }\bibfield  {title} {\bibinfo {title} {{Novel Schemes for Directly Measuring Entanglement of General States}},\ }\href {https://doi.org/10.1103/PhysRevLett.101.190503} {\bibfield  {journal} {\bibinfo  {journal} {Phys. Rev. Lett.}\ }\textbf {\bibinfo {volume} {101}},\ \bibinfo {pages} {190503} (\bibinfo {year} {2008})},\ \Eprint {https://arxiv.org/abs/arXiv:0804.2246} {arXiv:0804.2246 [quant-ph]} \BibitemShut {NoStop}%
\bibitem [{\citenamefont {Liu}\ \emph {et~al.}(2022)\citenamefont {Liu}, \citenamefont {Tang}, \citenamefont {Dai}, \citenamefont {Liu}, \citenamefont {Chen},\ and\ \citenamefont {Ma}}]{PhysRevLett.129.260501}%
  \BibitemOpen
  \bibfield  {author} {\bibinfo {author} {\bibfnamefont {Z.}~\bibnamefont {Liu}}, \bibinfo {author} {\bibfnamefont {Y.}~\bibnamefont {Tang}}, \bibinfo {author} {\bibfnamefont {H.}~\bibnamefont {Dai}}, \bibinfo {author} {\bibfnamefont {P.}~\bibnamefont {Liu}}, \bibinfo {author} {\bibfnamefont {S.}~\bibnamefont {Chen}},\ and\ \bibinfo {author} {\bibfnamefont {X.}~\bibnamefont {Ma}},\ }\bibfield  {title} {\bibinfo {title} {{Detecting Entanglement in Quantum Many-Body Systems via Permutation Moments}},\ }\href {https://doi.org/10.1103/PhysRevLett.129.260501} {\bibfield  {journal} {\bibinfo  {journal} {Phys. Rev. Lett.}\ }\textbf {\bibinfo {volume} {129}},\ \bibinfo {pages} {260501} (\bibinfo {year} {2022})},\ \Eprint {https://arxiv.org/abs/2203.08391} {arXiv:2203.08391 [quant-ph]} \BibitemShut {NoStop}%
\bibitem [{\citenamefont {Liu}\ and\ \citenamefont {Wang}(2025)}]{Liu_2025}%
  \BibitemOpen
  \bibfield  {author} {\bibinfo {author} {\bibfnamefont {Y.}~\bibnamefont {Liu}}\ and\ \bibinfo {author} {\bibfnamefont {Q.}~\bibnamefont {Wang}},\ }\bibinfo {title} {{On Estimating the Trace of Quantum State Powers}},\ in\ \href {https://doi.org/10.1137/1.9781611978322.28} {\emph {\bibinfo {booktitle} {Proc. 2025 Annu. ACM-SIAM Symp. Discrete Algorithms}}}\ (\bibinfo {year} {2025})\ p.\ \bibinfo {pages} {947},\ \Eprint {https://arxiv.org/abs/2410.13559} {arXiv:2410.13559 [quant-ph]} \BibitemShut {NoStop}%
\bibitem [{\citenamefont {Rudolph}(2005)}]{Rudolph_2005}%
  \BibitemOpen
  \bibfield  {author} {\bibinfo {author} {\bibfnamefont {O.}~\bibnamefont {Rudolph}},\ }\bibfield  {title} {\bibinfo {title} {{Further Results on the Cross Norm Criterion for Separability}},\ }\href {https://doi.org/10.1007/s11128-005-5664-1} {\bibfield  {journal} {\bibinfo  {journal} {Quantum Inf. Process.}\ }\textbf {\bibinfo {volume} {4}},\ \bibinfo {pages} {219} (\bibinfo {year} {2005})},\ \Eprint {https://arxiv.org/abs/arXiv:quant-ph/0202121} {arXiv:quant-ph/0202121 [quant-ph]} \BibitemShut {NoStop}%
\bibitem [{\citenamefont {Chen}\ and\ \citenamefont {Wu}(2003)}]{chen2003matrixrealignmentmethodrecognizing}%
  \BibitemOpen
  \bibfield  {author} {\bibinfo {author} {\bibfnamefont {K.}~\bibnamefont {Chen}}\ and\ \bibinfo {author} {\bibfnamefont {L.-A.}\ \bibnamefont {Wu}},\ }\bibfield  {title} {\bibinfo {title} {{A matrix realignment method for recognizing entanglement}},\ }\href {https://dl.acm.org/doi/10.5555/2011534.2011535} {\bibfield  {journal} {\bibinfo  {journal} {Quantum Inf. Comput.}\ }\textbf {\bibinfo {volume} {3}},\ \bibinfo {pages} {193} (\bibinfo {year} {2003})},\ \Eprint {https://arxiv.org/abs/quant-ph/0205017} {arXiv:quant-ph/0205017 [quant-ph]} \BibitemShut {NoStop}%
\bibitem [{\citenamefont {Wei}\ \emph {et~al.}(2024)\citenamefont {Wei}, \citenamefont {Liu}, \citenamefont {Liu}, \citenamefont {Han}, \citenamefont {Deng},\ and\ \citenamefont {Liu}}]{Wei:2024znk}%
  \BibitemOpen
  \bibfield  {author} {\bibinfo {author} {\bibfnamefont {F.}~\bibnamefont {Wei}}, \bibinfo {author} {\bibfnamefont {Z.}~\bibnamefont {Liu}}, \bibinfo {author} {\bibfnamefont {G.}~\bibnamefont {Liu}}, \bibinfo {author} {\bibfnamefont {Z.}~\bibnamefont {Han}}, \bibinfo {author} {\bibfnamefont {D.-L.}\ \bibnamefont {Deng}},\ and\ \bibinfo {author} {\bibfnamefont {Z.}~\bibnamefont {Liu}},\ }\bibfield  {title} {\bibinfo {title} {{Simulating non-completely positive actions via exponentiation of Hermitian-preserving maps}},\ }\href {https://doi.org/10.1038/s41534-024-00949-z} {\bibfield  {journal} {\bibinfo  {journal} {npj Quantum Inf.}\ }\textbf {\bibinfo {volume} {10}},\ \bibinfo {pages} {134} (\bibinfo {year} {2024})},\ \Eprint {https://arxiv.org/abs/arXiv:2308.07956} {arXiv:2308.07956 [quant-ph]} \BibitemShut {NoStop}%
\bibitem [{Note1()}]{Note1}%
  \BibitemOpen
  \bibinfo {note} {The Liouville representation $\protect \mathcal {E}_ A$ and the Choi state $\protect \mathcal {E}_B$ are sometimes called the {\protect \it A-form} and the {\protect \it B-form} matrices, respectively (as they were originally derived in Ref.~\cite {PhysRev.121.920})}\BibitemShut {NoStop}%
\bibitem [{Note2()}]{Note2}%
  \BibitemOpen
  \bibinfo {note} {We note that the (unnormalized) Choi matrix $d\protect \mathcal {E}_B$ is also often considered in the literature, which we distinguish in this manuscript from the Choi state $\protect \mathcal {E}_B$.}\BibitemShut {Stop}%
\bibitem [{\citenamefont {Chiribella}\ \emph {et~al.}(2009)\citenamefont {Chiribella}, \citenamefont {D'Ariano},\ and\ \citenamefont {Perinotti}}]{PhysRevA.80.022339}%
  \BibitemOpen
  \bibfield  {author} {\bibinfo {author} {\bibfnamefont {G.}~\bibnamefont {Chiribella}}, \bibinfo {author} {\bibfnamefont {G.~M.}\ \bibnamefont {D'Ariano}},\ and\ \bibinfo {author} {\bibfnamefont {P.}~\bibnamefont {Perinotti}},\ }\bibfield  {title} {\bibinfo {title} {{Theoretical framework for quantum networks}},\ }\href {https://doi.org/10.1103/PhysRevA.80.022339} {\bibfield  {journal} {\bibinfo  {journal} {Phys. Rev. A}\ }\textbf {\bibinfo {volume} {80}},\ \bibinfo {pages} {022339} (\bibinfo {year} {2009})},\ \Eprint {https://arxiv.org/abs/arXiv:0904.4483} {arXiv:0904.4483 [quant-ph]} \BibitemShut {NoStop}%
\bibitem [{\citenamefont {Milz}\ \emph {et~al.}(2017)\citenamefont {Milz}, \citenamefont {Pollock},\ and\ \citenamefont {Modi}}]{Milz_2017}%
  \BibitemOpen
  \bibfield  {author} {\bibinfo {author} {\bibfnamefont {S.}~\bibnamefont {Milz}}, \bibinfo {author} {\bibfnamefont {F.~A.}\ \bibnamefont {Pollock}},\ and\ \bibinfo {author} {\bibfnamefont {K.}~\bibnamefont {Modi}},\ }\bibfield  {title} {\bibinfo {title} {{An Introduction to Operational Quantum Dynamics}},\ }\href {https://doi.org/10.1142/S1230161217400169} {\bibfield  {journal} {\bibinfo  {journal} {Open Syst. Inf. Dyn.}\ }\textbf {\bibinfo {volume} {24}},\ \bibinfo {pages} {1740016} (\bibinfo {year} {2017})},\ \Eprint {https://arxiv.org/abs/arXiv:1708.00769} {arXiv:1708.00769 [quant-ph]} \BibitemShut {NoStop}%
\bibitem [{\citenamefont {Lloyd}\ \emph {et~al.}(2014)\citenamefont {Lloyd}, \citenamefont {Mohseni},\ and\ \citenamefont {Rebentrost}}]{Lloyd_2014}%
  \BibitemOpen
  \bibfield  {author} {\bibinfo {author} {\bibfnamefont {S.}~\bibnamefont {Lloyd}}, \bibinfo {author} {\bibfnamefont {M.}~\bibnamefont {Mohseni}},\ and\ \bibinfo {author} {\bibfnamefont {P.}~\bibnamefont {Rebentrost}},\ }\bibfield  {title} {\bibinfo {title} {{Quantum principal component analysis}},\ }\href {https://doi.org/10.1038/nphys3029} {\bibfield  {journal} {\bibinfo  {journal} {Nat. Phys.}\ }\textbf {\bibinfo {volume} {10}},\ \bibinfo {pages} {631} (\bibinfo {year} {2014})},\ \Eprint {https://arxiv.org/abs/arXiv:1307.0401} {arXiv:1307.0401 [quant-ph]} \BibitemShut {NoStop}%
\bibitem [{\citenamefont {Kimmel}\ \emph {et~al.}(2017)\citenamefont {Kimmel}, \citenamefont {Lin}, \citenamefont {Low}, \citenamefont {Ozols},\ and\ \citenamefont {Yoder}}]{Kimmel_2017}%
  \BibitemOpen
  \bibfield  {author} {\bibinfo {author} {\bibfnamefont {S.}~\bibnamefont {Kimmel}}, \bibinfo {author} {\bibfnamefont {C.~Y.-Y.}\ \bibnamefont {Lin}}, \bibinfo {author} {\bibfnamefont {G.~H.}\ \bibnamefont {Low}}, \bibinfo {author} {\bibfnamefont {M.}~\bibnamefont {Ozols}},\ and\ \bibinfo {author} {\bibfnamefont {T.~J.}\ \bibnamefont {Yoder}},\ }\bibfield  {title} {\bibinfo {title} {{Hamiltonian simulation with optimal sample complexity}},\ }\href {https://doi.org/10.1038/s41534-017-0013-7} {\bibfield  {journal} {\bibinfo  {journal} {npj Quantum Inf.}\ }\textbf {\bibinfo {volume} {3}},\ \bibinfo {pages} {1} (\bibinfo {year} {2017})},\ \Eprint {https://arxiv.org/abs/arXiv:1608.00281} {arXiv:1608.00281 [quant-ph]} \BibitemShut {NoStop}%
\bibitem [{\citenamefont {Low}\ and\ \citenamefont {Chuang}(2019)}]{Low:2016znh}%
  \BibitemOpen
  \bibfield  {author} {\bibinfo {author} {\bibfnamefont {G.~H.}\ \bibnamefont {Low}}\ and\ \bibinfo {author} {\bibfnamefont {I.~L.}\ \bibnamefont {Chuang}},\ }\bibfield  {title} {\bibinfo {title} {{Hamiltonian Simulation by Qubitization}},\ }\href {https://doi.org/10.22331/q-2019-07-12-163} {\bibfield  {journal} {\bibinfo  {journal} {Quantum}\ }\textbf {\bibinfo {volume} {3}},\ \bibinfo {pages} {163} (\bibinfo {year} {2019})},\ \Eprint {https://arxiv.org/abs/1610.06546} {arXiv:1610.06546 [quant-ph]} \BibitemShut {NoStop}%
\bibitem [{\citenamefont {Wang}\ and\ \citenamefont {Zhang}(2024)}]{wang_et_al:LIPIcs.ESA.2024.101}%
  \BibitemOpen
  \bibfield  {author} {\bibinfo {author} {\bibfnamefont {Q.}~\bibnamefont {Wang}}\ and\ \bibinfo {author} {\bibfnamefont {Z.}~\bibnamefont {Zhang}},\ }\bibfield  {title} {\bibinfo {title} {{Time-Efficient Quantum Entropy Estimator via Samplizer}},\ }in\ \href {https://doi.org/10.4230/LIPIcs.ESA.2024.101} {\emph {\bibinfo {booktitle} {32nd Annual European Symposium on Algorithms (ESA 2024)}}},\ \bibinfo {series} {Leibniz International Proceedings in Informatics (LIPIcs)}, Vol.\ \bibinfo {volume} {308},\ \bibinfo {editor} {edited by\ \bibinfo {editor} {\bibfnamefont {T.}~\bibnamefont {Chan}}, \bibinfo {editor} {\bibfnamefont {J.}~\bibnamefont {Fischer}}, \bibinfo {editor} {\bibfnamefont {J.}~\bibnamefont {Iacono}},\ and\ \bibinfo {editor} {\bibfnamefont {G.}~\bibnamefont {Herman}}}\ (\bibinfo  {publisher} {Schloss Dagstuhl -- Leibniz-Zentrum f{\"u}r Informatik},\ \bibinfo {address} {Dagstuhl, Germany},\ \bibinfo {year} {2024})\ pp.\ \bibinfo {pages} {101:1--101:15}\BibitemShut {NoStop}%
\bibitem [{\citenamefont {Horodecki}\ \emph {et~al.}(2003)\citenamefont {Horodecki}, \citenamefont {Shor},\ and\ \citenamefont {Ruskai}}]{Horodecki_2003}%
  \BibitemOpen
  \bibfield  {author} {\bibinfo {author} {\bibfnamefont {M.}~\bibnamefont {Horodecki}}, \bibinfo {author} {\bibfnamefont {P.~W.}\ \bibnamefont {Shor}},\ and\ \bibinfo {author} {\bibfnamefont {M.~B.}\ \bibnamefont {Ruskai}},\ }\bibfield  {title} {\bibinfo {title} {{Entanglement Breaking Channels}},\ }\href {https://doi.org/10.1142/S0129055X03001709} {\bibfield  {journal} {\bibinfo  {journal} {Rev. Math. Phys.}\ }\textbf {\bibinfo {volume} {15}},\ \bibinfo {pages} {629} (\bibinfo {year} {2003})},\ \Eprint {https://arxiv.org/abs/arXiv:quant-ph/0302031} {arXiv:quant-ph/0302031 [quant-ph]} \BibitemShut {NoStop}%
\bibitem [{\citenamefont {Knill}\ and\ \citenamefont {Laflamme}(1998)}]{PhysRevLett.81.5672}%
  \BibitemOpen
  \bibfield  {author} {\bibinfo {author} {\bibfnamefont {E.}~\bibnamefont {Knill}}\ and\ \bibinfo {author} {\bibfnamefont {R.}~\bibnamefont {Laflamme}},\ }\bibfield  {title} {\bibinfo {title} {{Power of One Bit of Quantum Information}},\ }\href {https://doi.org/10.1103/PhysRevLett.81.5672} {\bibfield  {journal} {\bibinfo  {journal} {Phys. Rev. Lett.}\ }\textbf {\bibinfo {volume} {81}},\ \bibinfo {pages} {5672} (\bibinfo {year} {1998})},\ \Eprint {https://arxiv.org/abs/arXiv:quant-ph/9802037} {arXiv:quant-ph/9802037 [quant-ph]} \BibitemShut {NoStop}%
\bibitem [{\citenamefont {Low}\ and\ \citenamefont {Su}(2024)}]{Low_2024}%
  \BibitemOpen
  \bibfield  {author} {\bibinfo {author} {\bibfnamefont {G.~H.}\ \bibnamefont {Low}}\ and\ \bibinfo {author} {\bibfnamefont {Y.}~\bibnamefont {Su}},\ }\bibfield  {title} {\bibinfo {title} {{Quantum Eigenvalue Processing}},\ }in\ \href {https://doi.org/10.1109/FOCS61266.2024.00070} {\emph {\bibinfo {booktitle} {2024 IEEE 65th Annu. Symp. Found. Comput. Sci.}}}\ (\bibinfo {year} {2024})\ p.\ \bibinfo {pages} {1051},\ \Eprint {https://arxiv.org/abs/arXiv:2401.06240} {arXiv:2401.06240 [quant-ph]} \BibitemShut {NoStop}%
\bibitem [{\citenamefont {Sudarshan}\ \emph {et~al.}(1961)\citenamefont {Sudarshan}, \citenamefont {Mathews},\ and\ \citenamefont {Rau}}]{PhysRev.121.920}%
  \BibitemOpen
  \bibfield  {author} {\bibinfo {author} {\bibfnamefont {E.~C.~G.}\ \bibnamefont {Sudarshan}}, \bibinfo {author} {\bibfnamefont {P.~M.}\ \bibnamefont {Mathews}},\ and\ \bibinfo {author} {\bibfnamefont {J.}~\bibnamefont {Rau}},\ }\bibfield  {title} {\bibinfo {title} {{Stochastic Dynamics of Quantum-Mechanical Systems}},\ }\href {https://doi.org/10.1103/PhysRev.121.920} {\bibfield  {journal} {\bibinfo  {journal} {Phys. Rev.}\ }\textbf {\bibinfo {volume} {121}},\ \bibinfo {pages} {920} (\bibinfo {year} {1961})}\BibitemShut {NoStop}%
\bibitem [{\citenamefont {Watrous}(2018)}]{Watrous_2018}%
  \BibitemOpen
  \bibfield  {author} {\bibinfo {author} {\bibfnamefont {J.}~\bibnamefont {Watrous}},\ }\href@noop {} {\emph {\bibinfo {title} {{The Theory of Quantum Information}}}}\ (\bibinfo  {publisher} {Cambridge Univ. Press},\ \bibinfo {address} {Cambridge},\ \bibinfo {year} {2018})\BibitemShut {NoStop}%
\bibitem [{\citenamefont {Martyn}\ \emph {et~al.}(2021)\citenamefont {Martyn}, \citenamefont {Rossi}, \citenamefont {Tan},\ and\ \citenamefont {Chuang}}]{mrtc_unification_21}%
  \BibitemOpen
  \bibfield  {author} {\bibinfo {author} {\bibfnamefont {J.~M.}\ \bibnamefont {Martyn}}, \bibinfo {author} {\bibfnamefont {Z.~M.}\ \bibnamefont {Rossi}}, \bibinfo {author} {\bibfnamefont {A.~K.}\ \bibnamefont {Tan}},\ and\ \bibinfo {author} {\bibfnamefont {I.~L.}\ \bibnamefont {Chuang}},\ }\bibfield  {title} {\bibinfo {title} {{Grand Unification of Quantum Algorithms}},\ }\href {https://doi.org/10.1103/PRXQuantum.2.040203} {\bibfield  {journal} {\bibinfo  {journal} {PRX Quantum}\ }\textbf {\bibinfo {volume} {2}},\ \bibinfo {pages} {040203} (\bibinfo {year} {2021})},\ \Eprint {https://arxiv.org/abs/arXiv:2105.02859} {arXiv:2105.02859 [quant-ph]} \BibitemShut {NoStop}%
\bibitem [{\citenamefont {Low}\ and\ \citenamefont {Chuang}(2017)}]{lc_17_simulation}%
  \BibitemOpen
  \bibfield  {author} {\bibinfo {author} {\bibfnamefont {G.~H.}\ \bibnamefont {Low}}\ and\ \bibinfo {author} {\bibfnamefont {I.~L.}\ \bibnamefont {Chuang}},\ }\bibfield  {title} {\bibinfo {title} {{Optimal {H}amiltonian Simulation by Quantum Signal Processing}},\ }\href {https://doi.org/10.1103/PhysRevLett.118.010501} {\bibfield  {journal} {\bibinfo  {journal} {Phys. Rev. Lett.}\ }\textbf {\bibinfo {volume} {118}},\ \bibinfo {pages} {010501} (\bibinfo {year} {2017})},\ \Eprint {https://arxiv.org/abs/arXiv:1606.02685} {arXiv:1606.02685 [quant-ph]} \BibitemShut {NoStop}%
\bibitem [{\citenamefont {Tang}\ and\ \citenamefont {Tian}(2023)}]{cs_qsvt_tang_tian}%
  \BibitemOpen
  \bibfield  {author} {\bibinfo {author} {\bibfnamefont {E.}~\bibnamefont {Tang}}\ and\ \bibinfo {author} {\bibfnamefont {K.}~\bibnamefont {Tian}},\ }\href@noop {} {\bibinfo {title} {A {CS} guide to the quantum singular value transformation}} (\bibinfo {year} {2023}),\ \Eprint {https://arxiv.org/abs/arXiv:2302.14324} {arXiv:2302.14324 [quant-ph]} \BibitemShut {NoStop}%
\bibitem [{Note3()}]{Note3}%
  \BibitemOpen
  \bibinfo {note} {In truth we have to know milder things, like a bound on the range of possible singular values, and the span of the left and right singular vectors respectively, but these are often trivially known.}\BibitemShut {Stop}%
\bibitem [{\citenamefont {Keyl}\ and\ \citenamefont {Werner}(2001)}]{PhysRevA.64.052311}%
  \BibitemOpen
  \bibfield  {author} {\bibinfo {author} {\bibfnamefont {M.}~\bibnamefont {Keyl}}\ and\ \bibinfo {author} {\bibfnamefont {R.~F.}\ \bibnamefont {Werner}},\ }\bibfield  {title} {\bibinfo {title} {{Estimating the spectrum of a density operator}},\ }\href {https://doi.org/10.1103/PhysRevA.64.052311} {\bibfield  {journal} {\bibinfo  {journal} {Phys. Rev. A}\ }\textbf {\bibinfo {volume} {64}},\ \bibinfo {pages} {052311} (\bibinfo {year} {2001})},\ \Eprint {https://arxiv.org/abs/arXiv:quant-ph/0102027} {arXiv:quant-ph/0102027 [quant-ph]} \BibitemShut {NoStop}%
\bibitem [{\citenamefont {O'Donnell}\ and\ \citenamefont {Wright}(2016)}]{odonnell2016efficientquantumtomographyii}%
  \BibitemOpen
  \bibfield  {author} {\bibinfo {author} {\bibfnamefont {R.}~\bibnamefont {O'Donnell}}\ and\ \bibinfo {author} {\bibfnamefont {J.}~\bibnamefont {Wright}},\ }\href@noop {} {\bibinfo {title} {Efficient quantum tomography {II}}} (\bibinfo {year} {2016}),\ \Eprint {https://arxiv.org/abs/1612.00034} {arXiv:1612.00034 [quant-ph]} \BibitemShut {NoStop}%
\bibitem [{\citenamefont {Chiribella}\ \emph {et~al.}(2008)\citenamefont {Chiribella}, \citenamefont {D’Ariano},\ and\ \citenamefont {Perinotti}}]{Chiribella_2008}%
  \BibitemOpen
  \bibfield  {author} {\bibinfo {author} {\bibfnamefont {G.}~\bibnamefont {Chiribella}}, \bibinfo {author} {\bibfnamefont {G.~M.}\ \bibnamefont {D’Ariano}},\ and\ \bibinfo {author} {\bibfnamefont {P.}~\bibnamefont {Perinotti}},\ }\bibfield  {title} {\bibinfo {title} {{Transforming quantum operations: Quantum supermaps}},\ }\href {https://doi.org/10.1209/0295-5075/83/30004} {\bibfield  {journal} {\bibinfo  {journal} {EPL (Europhys. Lett.)}\ }\textbf {\bibinfo {volume} {83}},\ \bibinfo {pages} {30004} (\bibinfo {year} {2008})},\ \Eprint {https://arxiv.org/abs/arXiv:0804.0180} {arXiv:0804.0180 [quant-ph]} \BibitemShut {NoStop}%
\bibitem [{\citenamefont {Gour}(2019)}]{Gour_2019}%
  \BibitemOpen
  \bibfield  {author} {\bibinfo {author} {\bibfnamefont {G.}~\bibnamefont {Gour}},\ }\bibfield  {title} {\bibinfo {title} {{Comparison of Quantum Channels by Superchannels}},\ }\href {https://doi.org/10.1109/tit.2019.2907989} {\bibfield  {journal} {\bibinfo  {journal} {IEEE Trans. Inf. Theory}\ }\textbf {\bibinfo {volume} {65}},\ \bibinfo {pages} {5880} (\bibinfo {year} {2019})},\ \Eprint {https://arxiv.org/abs/arXiv:1808.02607} {arXiv:1808.02607 [quant-ph]} \BibitemShut {NoStop}%
\bibitem [{\citenamefont {Pirandola}\ and\ \citenamefont {Lupo}(2017)}]{PhysRevLett.118.100502}%
  \BibitemOpen
  \bibfield  {author} {\bibinfo {author} {\bibfnamefont {S.}~\bibnamefont {Pirandola}}\ and\ \bibinfo {author} {\bibfnamefont {C.}~\bibnamefont {Lupo}},\ }\bibfield  {title} {\bibinfo {title} {Ultimate precision of adaptive noise estimation},\ }\href {https://doi.org/10.1103/PhysRevLett.118.100502} {\bibfield  {journal} {\bibinfo  {journal} {Phys. Rev. Lett.}\ }\textbf {\bibinfo {volume} {118}},\ \bibinfo {pages} {100502} (\bibinfo {year} {2017})},\ \Eprint {https://arxiv.org/abs/arXiv:1609.02160} {arXiv:1609.02160 [quant-ph]} \BibitemShut {NoStop}%
\bibitem [{Note4()}]{Note4}%
  \BibitemOpen
  \bibinfo {note} {$\protect \mathbb {U}_d$ consists of Heisenberg-Weyl operators, which is equivalent to Pauli operators for $n$ qubit systems}\BibitemShut {NoStop}%
\end{thebibliography}%


\newpage
\onecolumngrid
\appendix
\newpage 

\section*{Supplemental Material}
\setcounter{section}{0}

\section*{Diagram notations}\label{app::diagram} 
Here we introduce diagram notations that facilitate calculations. We assign an index to each bra and ket degrees of freedom, and represent the Kronecker delta $\delta_{ab}$ as a line connecting the indices $a$ and $b$. This leads to us to denote the MES $|\Phi_d^+\rangle\langle \Phi_d^+|$ and the \texttt{SWAP} operator $\mathbb{F}_{AB}$ in the following way; 
\begin{align}
\begin{dcases}
    |\Phi_d^+\rangle\langle \Phi_d^+| =  \frac{1}{d} \sum_{i,j} |i\rangle \langle j| \otimes |i\rangle \langle j| = \frac{1}{d} \quad \vcenter{\hbox{\includegraphics[clip,scale=0.2]{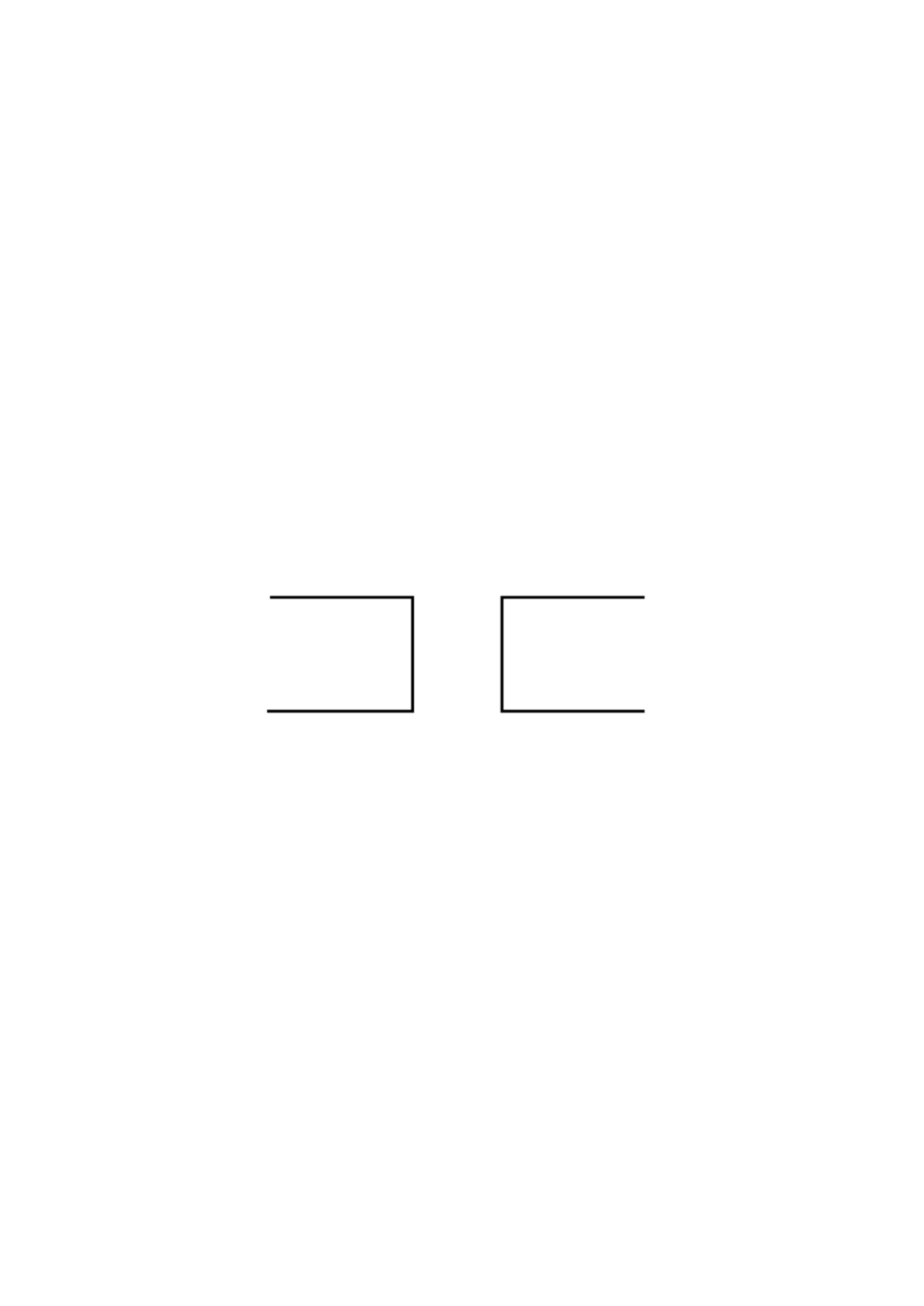}}}\\
    \mathbb{F}_{AB} = \sum_{i,j} |i\rangle \langle j| \otimes |j\rangle \langle i| = \vcenter{\hbox{\includegraphics[clip,scale=0.15]{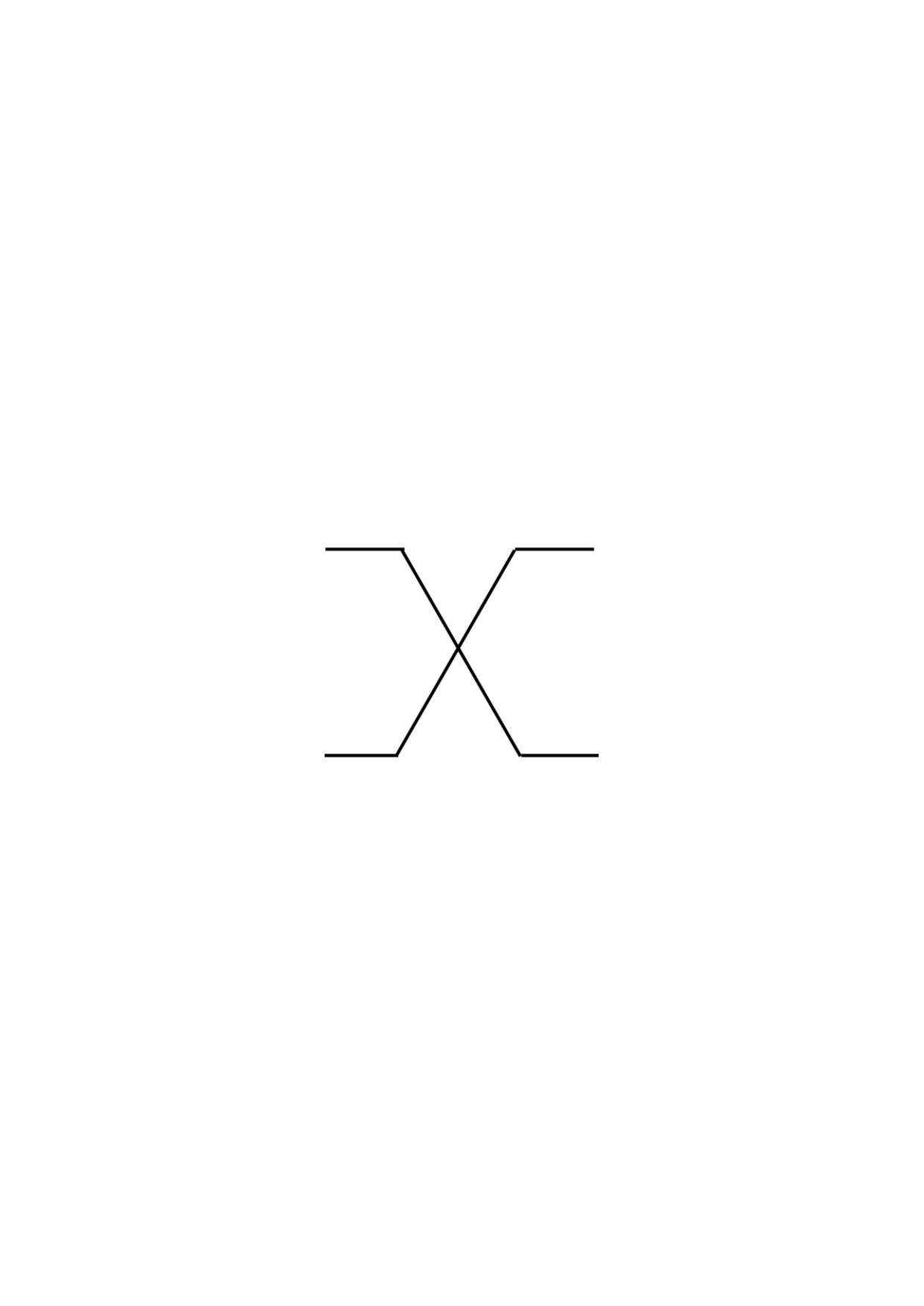}}}.
\end{dcases}
\end{align}
It is then easier to see relations such as 
\begin{align}
    d(|\Phi_d^+\rangle\langle \Phi_d^+|)^{T_A} = \, \mathbb{F}_{AB},
\end{align}
where $(\cdot)^{T_A}$ denotes the partial transposition over system A. We also denote the Kraus operators as below; 
\begin{align}
    \vcenter{\hbox{\includegraphics[clip,scale=0.15]{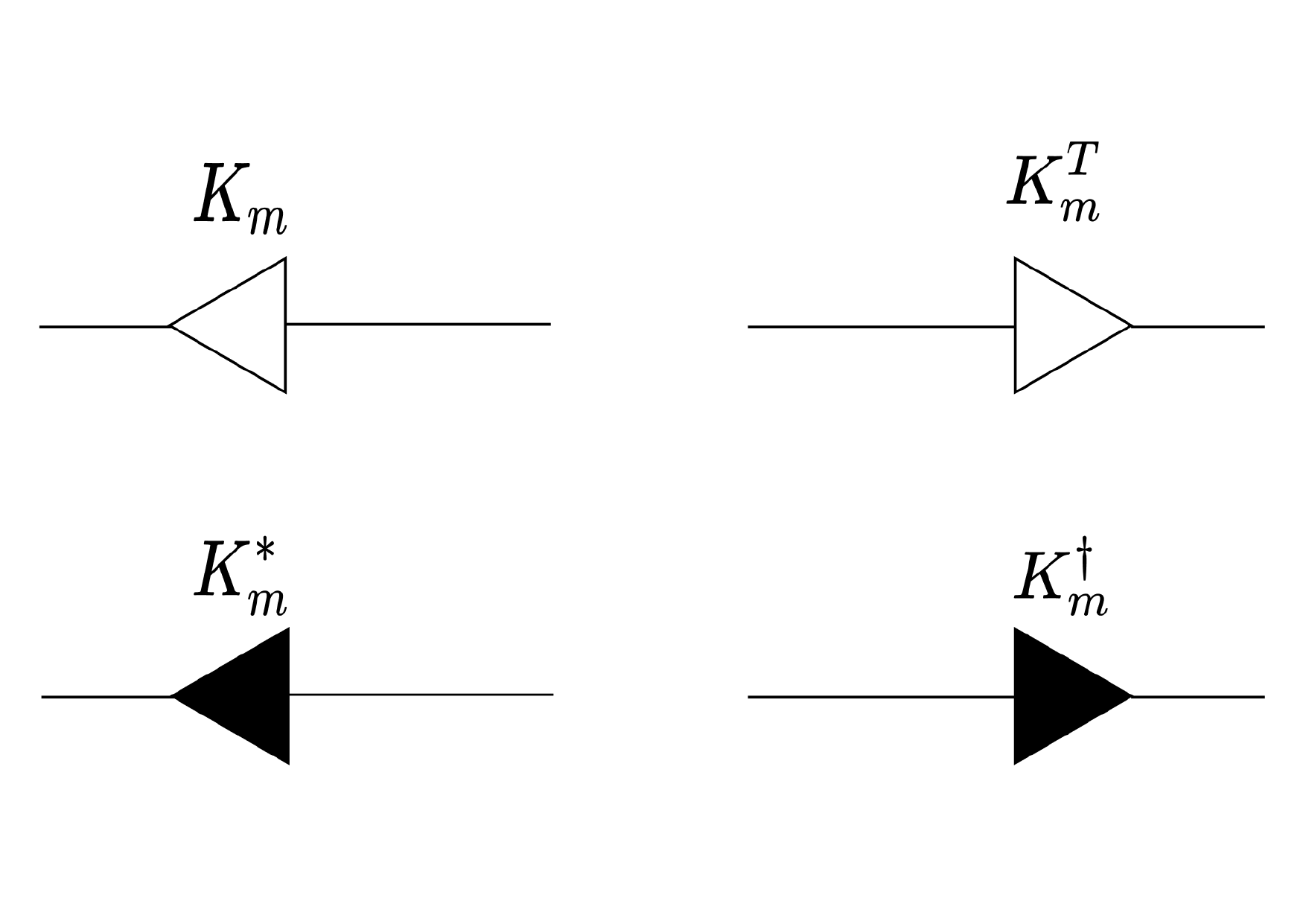}}}
\end{align}
The transposition of the Kraus operator is represented with the right pointing triangle, and the complex conjugate is represented with the black color. We always omit summations such as $\sum_{m,i,j}$ in the diagrammatic equation. When the channels are applied multiple times, we assign a number to the Kraus operators to distinguish $K_m$ and $K_{m'}$. When there is no confusion, we omit the numbers assigned to the Kraus operators. 

Using the above notations, it is now easy to see that the Liouville representation and the Choi state of the channel are related to each other via the reshuffling operation. 
\begin{align}
\begin{dcases}
    \mathcal{E}_B = \frac{1}{d}\sum_{m, i,j} K_m|i\rangle \langle j|K_m^\dagger \otimes |i\rangle \langle j| = \frac{1}{d} \, \vcenter{\hbox{\includegraphics[clip,scale=0.2]{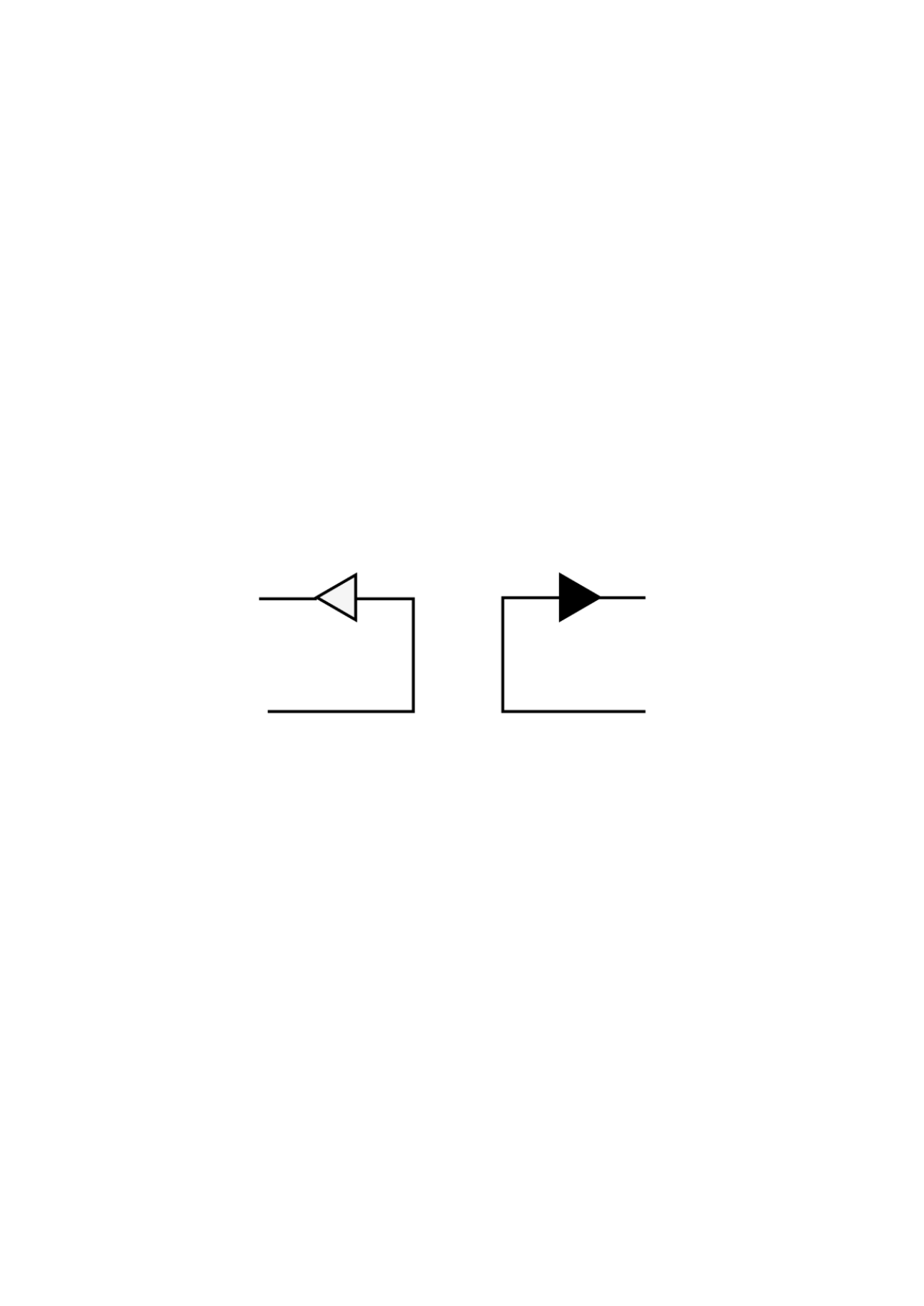}}} \\
    \mathcal{E}_A = \sum_{m} K_m \otimes K_m^* = d\mathcal{R}(\mathcal{E}_B) = \quad \vcenter{\hbox{\includegraphics[clip,scale=0.15]{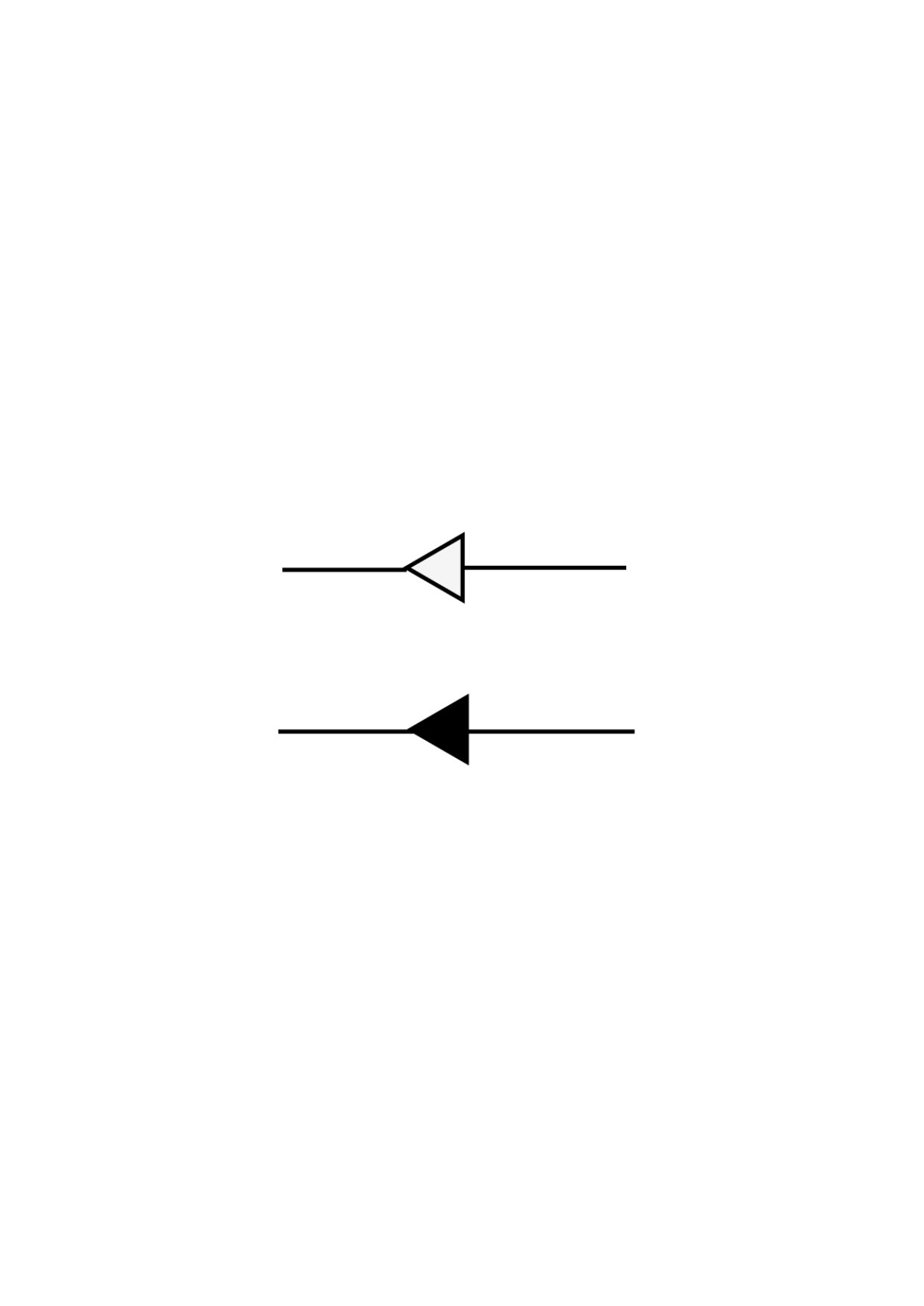}}}
\end{dcases}
\end{align}
\noindent Our key observation in~\eqref{eq:AAdaggerstate} can be easily seen from the diagram notation;
\begin{align}
    \rho_\pm = \frac{1}{d} \mqty[\vcenter{\hbox{\includegraphics[clip,scale=0.2]{figD.pdf}}}  & \pm \vcenter{\hbox{\includegraphics[clip,scale=0.2]{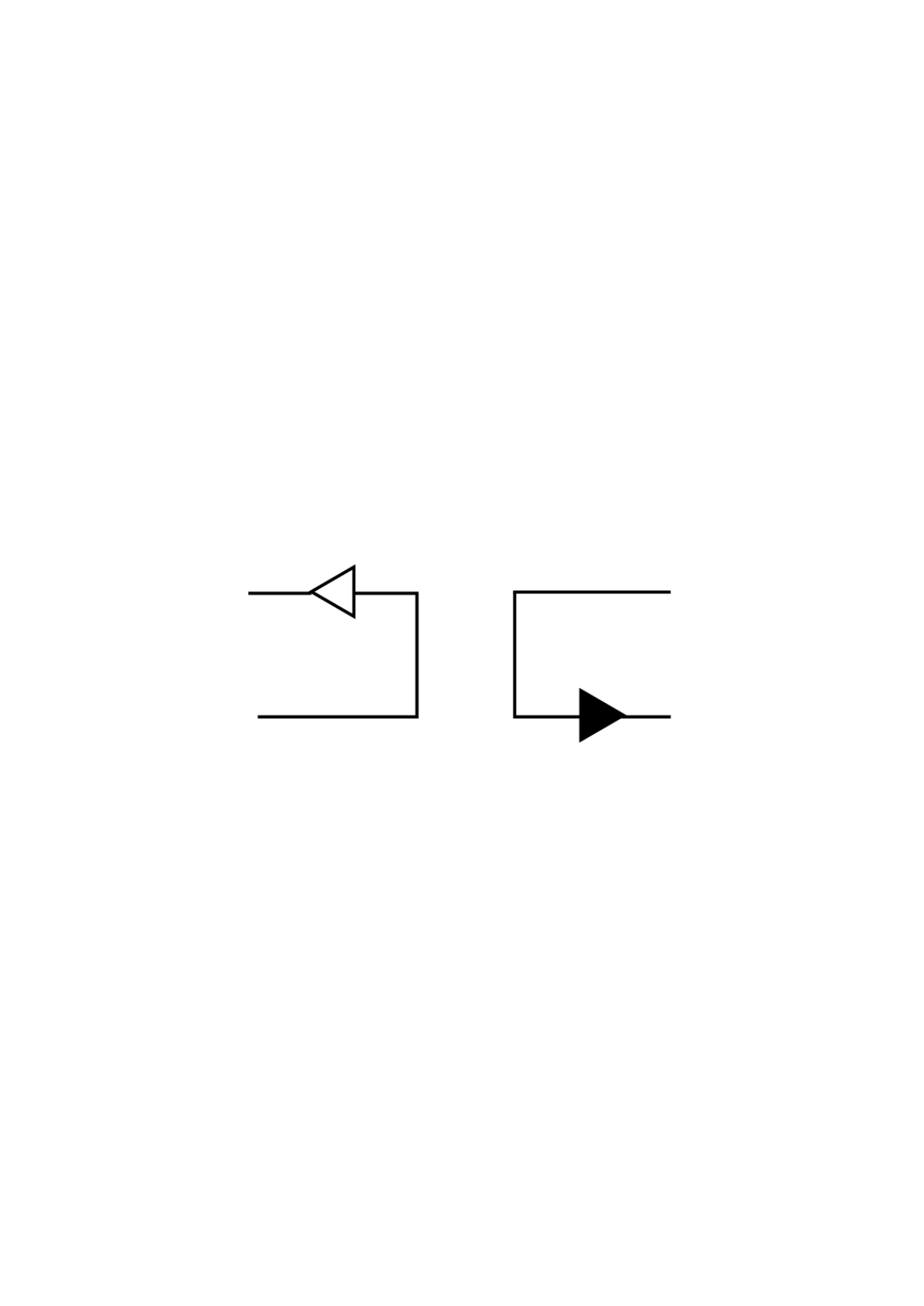}}} \\ \pm \vcenter{\hbox{\includegraphics[clip,scale=0.2]{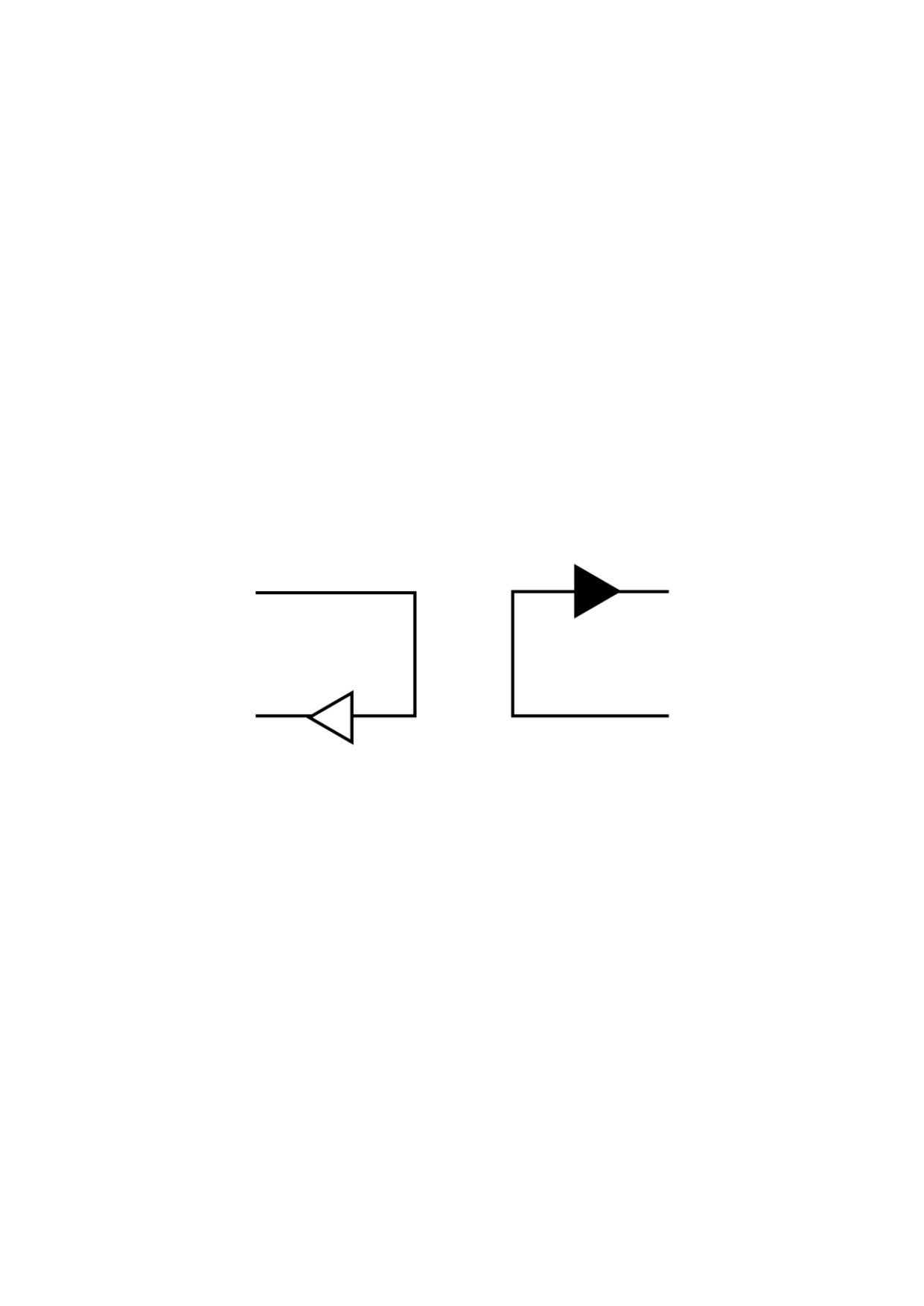}}} & \vcenter{\hbox{\includegraphics[clip,scale=0.2]{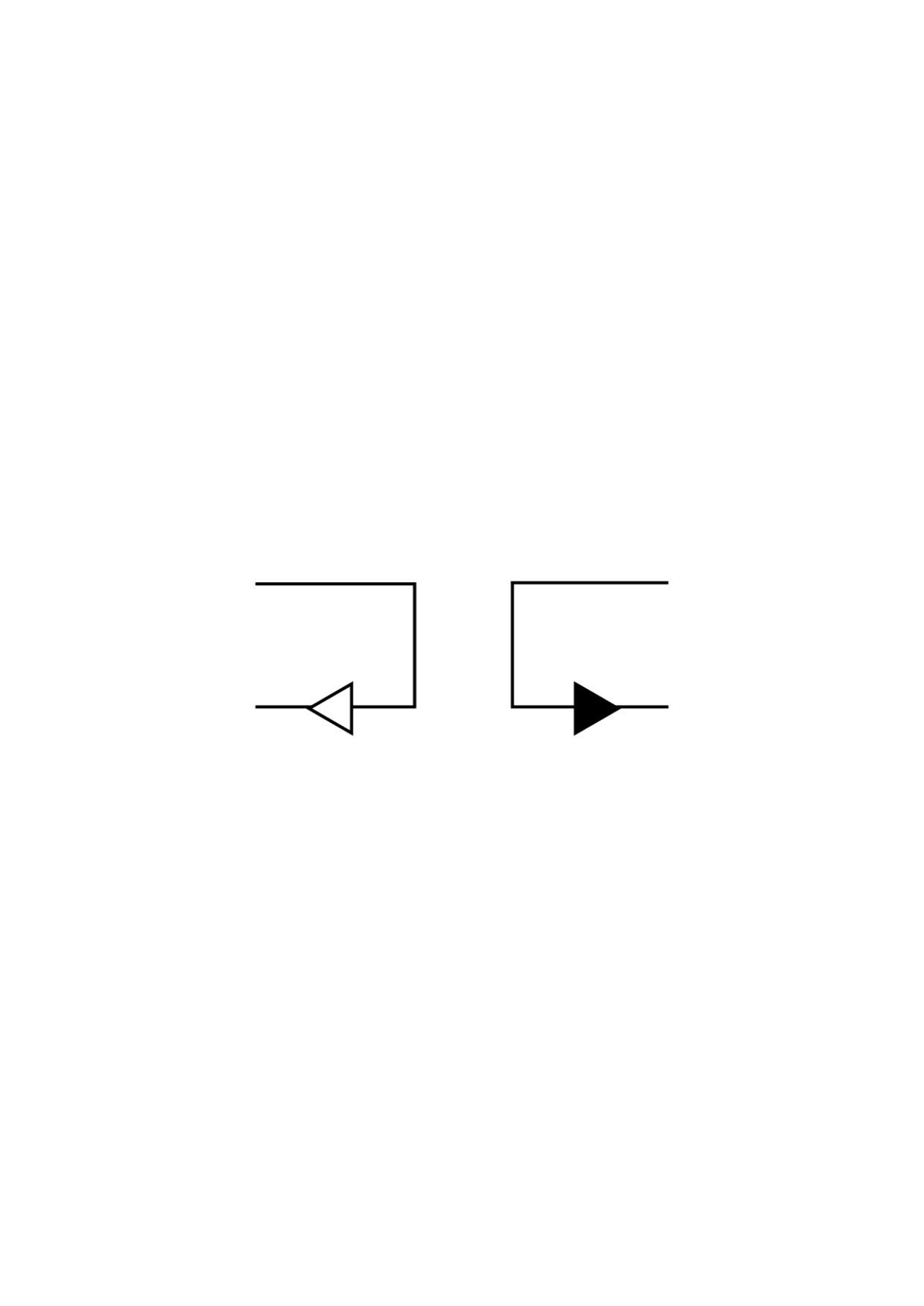}}} ] = \frac{1}{d} \mqty[\vcenter{\hbox{\includegraphics[clip,scale=0.2]{figD.pdf}}}  & \pm \qty(\vcenter{\hbox{\includegraphics[clip,scale=0.2]{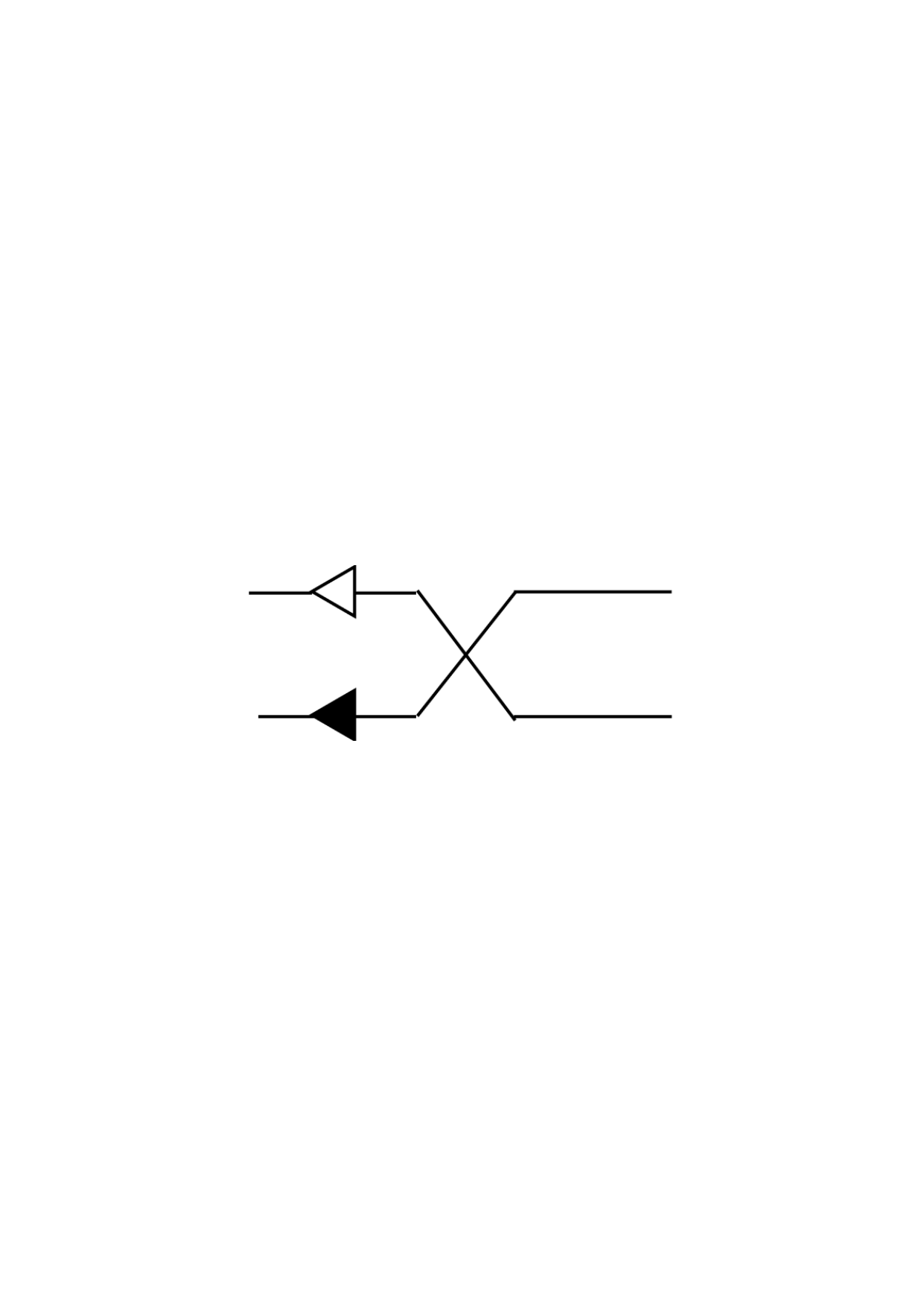}}})^{T_Z} \\ \pm \qty(\vcenter{\hbox{\includegraphics[clip,scale=0.2]{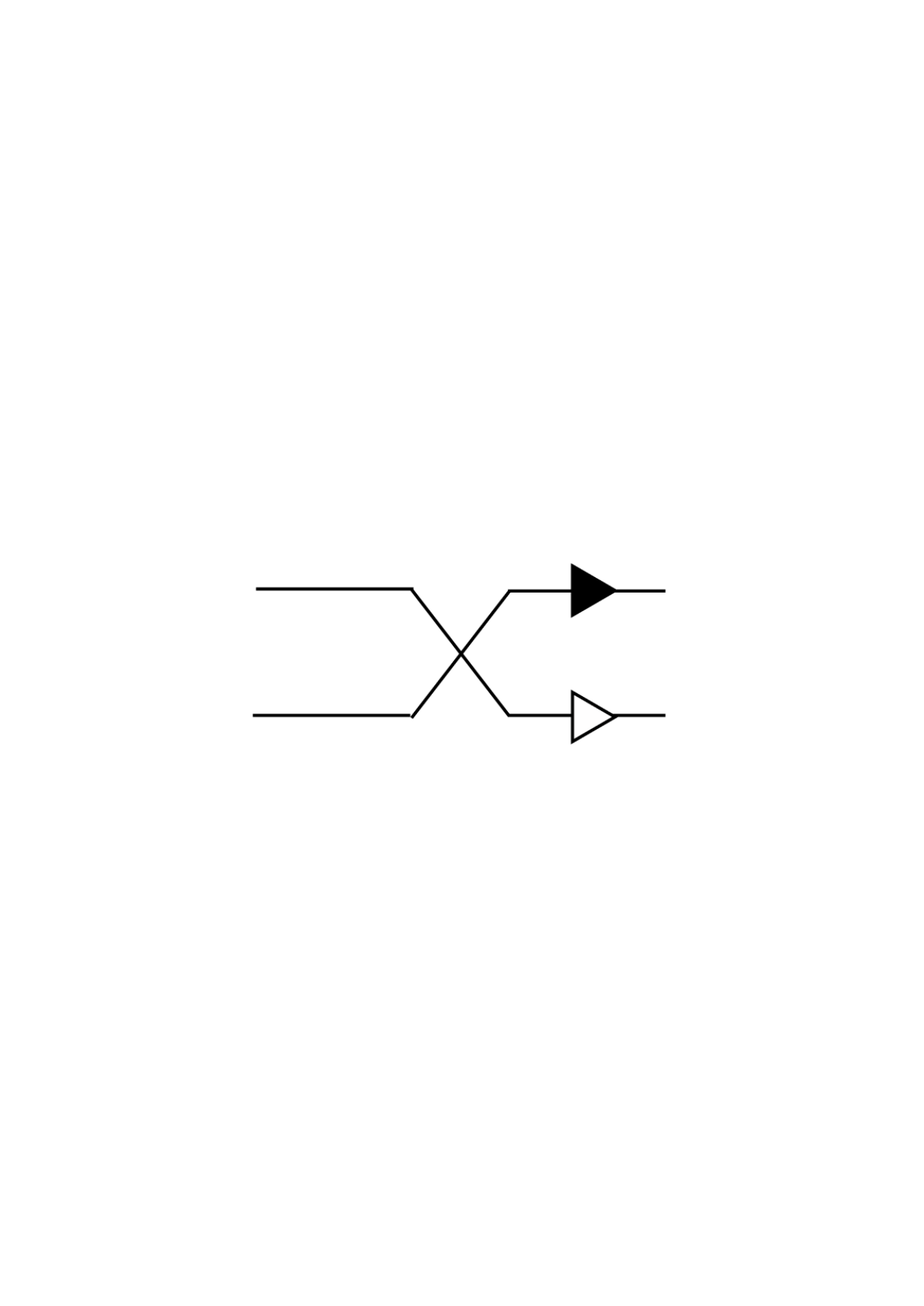}}})^{T_Z} & \vcenter{\hbox{\includegraphics[clip,scale=0.2]{figL.pdf}}} ]. 
\end{align}
\newpage

\section{Basic properties of a quantum channel}\label{app:channelprop}
Here we briefly review some basic properties of a quantum channel. As mentioned in the main text, the eigenvalue of a quantum channel $\mathcal{E}: \mathfrak{L}(\mathcal{H}) \rightarrow \mathfrak{L}(\mathcal{H})$ is defined as 
\begin{align}
    \mathcal{E}(\rho_n^R) &= \lambda_n \rho_n^R, \quad (\rho_n^R \in \mathfrak{L}(\mathcal{H}))
\end{align}
where $\lambda_n \in \mathbb{C}$. This also implies 
\begin{align}
    \mathcal{E}^\dagger(\rho_n^L) &= \lambda_n^* \rho_n^L, \quad (\rho_n^L \in \mathfrak{L}(\mathcal{H})), 
\end{align}
where $\rho_n^R, \rho_n^L$ are sometimes referred to as the right-eigenvector and left-eigenvector, respectively. They satisfy 
\begin{equation}\label{eq:biorthogonal}
\Tr [(\rho_m^L)^\dagger\rho_n^R]=0 \quad (m\neq n), 
\end{equation}
which allows one to decompose the action of the channel as
\begin{equation}
    \mathcal{E}(\rho) = \sum_n c_n \lambda_n \rho_n^R.
\end{equation}
This is essentially a spectrum problem of a \textit{non-Hermitian matrix}, and some notable differences arise from the Hermitian case. For example, in the Hermitian matrix case, eigenvectors corresponding to different eigenvalues are always orthogonal to each other, but this is not the case for non-Hermitian matrices. Below are some basic properties of the eigenvalues of quantum channels. 

\begin{lm}\label{lm:channeleigval}
The eigenvalue and eigenoperator of quantum channels have the following properties; 
\begin{enumerate}
    \item $\mathcal{E}$ has eigenvalue 1. 
    \item If $\lambda_n$ is an eigenvalue, $\lambda_n^*$ is also an eigenvalue. Also, right eigenoperator is traceless except for the fixed point. 
    \item $|\lambda_n|\leq 1$.
\end{enumerate}
\end{lm}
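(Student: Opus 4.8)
The plan is to establish each of the three properties using only elementary facts about CPTP maps and their duals, working with the Liouville/Hermitized picture only where convenient. For property 1, I would recall that a CPTP map is trace-preserving, which means its dual (Heisenberg-picture) map $\mathcal{E}^\dagger$ is unital: $\mathcal{E}^\dagger(I)=I$. Hence $I$ is a fixed point of $\mathcal{E}^\dagger$ with eigenvalue $1$, and since $\mathcal{E}$ and $\mathcal{E}^\dagger$ have the same spectrum (complex-conjugated, but $1$ is real), $\mathcal{E}$ has eigenvalue $1$. Alternatively, and more physically, one can invoke that a CPTP map on a finite-dimensional space always has a fixed density matrix (by Brouwer/Schauder applied to the compact convex set $\mathfrak{S}(\mathcal{H})$), which gives a genuine state $\rho_*$ with $\mathcal{E}(\rho_*)=\rho_*$; I would mention this as it also supplies the ``fixed point'' referenced in property 2.

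For property 2, the key observation is that $\mathcal{E}$ is \emph{Hermiticity-preserving}: if $X=X^\dagger$ then $\mathcal{E}(X)=\mathcal{E}(X)^\dagger$, which follows from the Kraus form $\mathcal{E}(X)=\sum_m K_m X K_m^\dagger$. Consequently, taking adjoints of the eigenvalue equation $\mathcal{E}(\rho_n^R)=\lambda_n\rho_n^R$ gives $\mathcal{E}((\rho_n^R)^\dagger)=\lambda_n^*(\rho_n^R)^\dagger$, so $\lambda_n^*$ is an eigenvalue with eigenoperator $(\rho_n^R)^\dagger$. For the traceless claim: trace-preservation gives $\Tr[\mathcal{E}(\rho_n^R)]=\Tr[\rho_n^R]$, while the eigenvalue equation gives $\Tr[\mathcal{E}(\rho_n^R)]=\lambda_n\Tr[\rho_n^R]$; subtracting, $(\lambda_n-1)\Tr[\rho_n^R]=0$, so $\Tr[\rho_n^R]=0$ whenever $\lambda_n\neq 1$. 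I would note the biorthogonality relation~\eqref{eq:biorthogonal} can be used here as a cross-check but is not strictly needed.

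For property 3, $|\lambda_n|\le 1$, I would argue via the dual map: $\mathcal{E}^\dagger$ is completely positive and unital, hence a quantum channel in the Heisenberg picture, and unital CP maps are contractions in operator norm, $\lVert\mathcal{E}^\dagger(X)\rVert_\infty\le\lVert X\rVert_\infty$ (this is Russo--Dye / a consequence of the Kadison--Schwarz inequality, or can be derived directly from $\sum_m K_m^\dagger K_m=I$ for $\mathcal{E}^\dagger$'s Kraus operators). If $\mathcal{E}^\dagger(Y)=\mu Y$ then $|\mu|\lVert Y\rVert_\infty=\lVert\mathcal{E}^\dagger(Y)\rVert_\infty\le\lVert Y\rVert_\infty$, forcing $|\mu|\le 1$; since the spectra of $\mathcal{E}$ and $\mathcal{E}^\dagger$ agree up to conjugation, $|\lambda_n|\le 1$. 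An equivalent route is to observe that $\mathcal{E}$ itself is a contraction in trace norm (a basic property of CPTP maps), so $|\lambda_n|\,\lVert\rho_n^R\rVert_1=\lVert\mathcal{E}(\rho_n^R)\rVert_1\le\lVert\rho_n^R\rVert_1$.

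The main obstacle—really the only nontrivial ingredient—is property 3: one must invoke the correct operator-level contractivity statement (trace-norm contractivity of a CPTP map, or operator-norm contractivity of its unital dual), rather than something weaker. Everything else is a one-line consequence of trace preservation, Hermiticity preservation, and the existence of a fixed state; I would present those crisply and spend the bulk of the written proof making the contraction argument for property 3 precise, citing the standard fact that CPTP maps do not increase trace distance.
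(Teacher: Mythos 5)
Your proposal is correct on all three points; the only place where you genuinely diverge from the paper is property 1. The paper proves the existence of the eigenvalue $1$ by a Ces\`aro-averaging construction (following Watrous): it forms $\Phi_n(\rho)=2^{-n}\sum_{k<2^n}\mathcal{E}^{(k)}(\rho)$, uses the nesting $D_{n+1}\subseteq D_n$ and compactness to extract $\rho_0\in\bigcap_n D_n$, and shows $\lVert\mathcal{E}(\rho_0)-\rho_0\rVert_1\leq 2^{-(n-1)}$ for all $n$, which exhibits an actual fixed state. Your primary route --- $\mathcal{E}^\dagger(I)=I$ plus the fact that the spectrum of the adjoint superoperator is the complex conjugate of that of $\mathcal{E}$ --- is shorter and perfectly valid for showing $1$ is an eigenvalue, but by itself it does not produce a fixed \emph{density matrix}, which is what the paper's construction (and your Brouwer alternative) delivers and what ``the fixed point'' in property 2 implicitly refers to; you correctly flag this, so there is no gap. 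For property 2 your argument (Hermiticity preservation for the conjugate eigenvalue, and $(\lambda_n-1)\Tr[\rho_n^R]=0$ for tracelessness) is the same as the paper's, stated slightly more explicitly. For property 3 the paper uses exactly your ``equivalent route,'' namely trace-norm contractivity of trace-preserving positive maps applied to $\lVert\mathcal{E}(\rho_n^R)\rVert_1$; your primary route via the unital dual and Russo--Dye is an equally standard dual formulation, and has the mild advantage of not requiring one to worry about whether trace-norm contractivity holds for non-Hermitian eigenoperators under mere positivity.
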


\begin{proof}
1; The following proof is given in~\cite{Watrous_2018}. Define a CPTP map $\Phi$ as 
\begin{equation}
    \Phi_n(\rho) = \frac{1}{2^n}\sum_{k=0}^{2^n-1} \mathcal{E}^{(k)}(\rho).
\end{equation}
We define the following set $D_n \equiv \{ \Phi_n(\rho) | \rho \in \mathcal{L}(\mathcal{H}) \}$. It holds that 
\begin{equation*}
    \Phi_{n+1}(\rho) = \frac{1}{2^{n+1}}\sum_{k=0}^{2^{n+1}-1} \mathcal{E}^{(k)}(\rho) = \frac{1}{2}\Phi_{n}(\rho) + \frac{1}{2}\Phi_{n}(\mathcal{E}^{(n)}(\rho)) = \Phi_{n}\qty(\frac{1}{2}\rho + \frac{1}{2}\mathcal{E}^{(n)}(\rho))
\end{equation*}
and thus $D_{n+1} \subseteq D_n$, so there exist $\rho_0 \in D_0 \bigcap D_1 \bigcap \cdots$. Then, for arbitrary positive integer $n$, $\rho_0=\Phi_n(\sigma)$ holds for some $\sigma\in \mathcal{L}(\mathcal{H})$. It follows that 
\begin{equation*}
    \mathcal{E}(\rho_0) - \rho_0 = \mathcal{E}(\Phi_n(\sigma))-\Phi_n(\sigma) = \frac{\mathcal{E}^{(2^n)}(\sigma)-\sigma}{2^n}.
\end{equation*}
Thus, 
\begin{equation}
    \lVert \mathcal{E}(\rho_0) - \rho_0 \rVert_1 \leq \frac{1}{2^{n-1}}
\end{equation}
for arbitrary $n$, meaning that $\mathcal{E}(\rho_0)=\rho_0$. 
\end{proof}
\begin{proof}
2;
$\mathcal{E}$ is Hermiticity preserving, so 
\begin{equation}
    \mathcal{E}((\rho_n^R)^\dagger) = (\mathcal{E}(\rho_n^R))^\dagger = \lambda_n^*(\rho_n^R)^\dagger
\end{equation}
The right eigenoperator is traceless $\Tr \rho_n^R=0$ except for the fixed point, since $\Tr \mathcal{E}(\rho) = \Tr \rho$. 
\end{proof}
\begin{proof}
3; 
Trace norm is monotonically decreasing under trace-preserving positive maps (see \cite{Watrous_2018}, 3.40)
\begin{equation}
    |\lambda_n| = \frac{\lVert \lambda_n \rho_n^R \rVert_1}{\lVert \rho_n^R \rVert_1} = \frac{\lVert \mathcal{E}(\rho_n^R)\rVert_1}{\lVert \rho_n^R\rVert_1}  \leq 1. 
\end{equation}
\end{proof}

\noindent Next, we prove some lemmas on the norm of quantum channels. The Shatten $p$-norm of the square matrix $A$ is 
\begin{equation}
    \lVert A \rVert_p := \qty(\sum_i \sigma_i^p)^{\frac{1}{p}},
\end{equation}
where $\sigma_i$ represents the singular values of $A$. It satisfies the monotonicity 
\begin{equation}
    \lVert A\rVert_{\infty} \leq \cdots \leq \lVert A\rVert_{2} \leq \lVert A\rVert_{1}.
\end{equation}
$||\cdot ||_1$ is called the \textit{trace norm} and $||\cdot ||_{\infty}$ is called the \textit{spectral norm}. 
The Hölder's inequality 
\begin{equation}
    \lVert AB\rVert_{1} \leq \lVert A\rVert_{p}\lVert B\rVert_q
\end{equation}
holds for $p,q\in [1, \infty]$ satisfying $\frac{1}{p}+\frac{1}{q} = 1$. 

\begin{lm}\label{lm:2ndmomentpurity}
The second singular value moment of a quantum channel is equivalent to the purity of its Choi state;
\begin{align}
    S_2 = \Tr (\mathcal{E}_B)^2
\end{align}
\end{lm}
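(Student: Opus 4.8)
The plan is to peel off the definition of $S_q$ at $q=2$ and reduce everything to the elementary fact that the reshuffling map $\mathcal{R}$ only permutes the entries of a matrix, hence preserves its Hilbert--Schmidt norm. Concretely, since the $\sigma_i$ are the singular values of $\mathcal{E}_A$,
\begin{align}
S_2 = \frac{1}{d^2}\sum_{i=1}^{d^2}\sigma_i^2 = \frac{1}{d^2}\lVert \mathcal{E}_A\rVert_2^2 = \frac{1}{d^2}\Tr[\mathcal{E}_A^\dagger \mathcal{E}_A],
\end{align}
and substituting $\mathcal{E}_A = d\,\mathcal{R}(\mathcal{E}_B)$ from Eq.~\eqref{eq:Choi} cancels the $d^2$ and leaves $S_2 = \lVert \mathcal{R}(\mathcal{E}_B)\rVert_2^2 = \Tr[\mathcal{R}(\mathcal{E}_B)\mathcal{R}(\mathcal{E}_B)^\dagger]$, which is just the $q=2$ case of the closed form quoted for $S_q$.

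The key step is the observation that, writing any $M\in\mathfrak{L}(\mathcal{H}\otimes\mathcal{H})$ in the operator basis $\{|i\rangle\langle j|\otimes|k\rangle\langle l|\}$, the map $\mathcal{R}:|i\rangle\langle \underline{j}|\otimes|\underline{k}\rangle\langle l|\mapsto |i\rangle\langle \underline{k}|\otimes|\underline{j}\rangle\langle l|$ acts on the coefficients as the bijection $(i,j,k,l)\mapsto(i,k,j,l)$ of index quadruples. Hence $\sum_{ijkl}|M_{ijkl}|^2$ is unchanged, i.e.\ $\lVert \mathcal{R}(M)\rVert_2 = \lVert M\rVert_2$ for all $M$. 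Applying this with $M=\mathcal{E}_B$ gives $S_2 = \lVert \mathcal{E}_B\rVert_2^2 = \Tr[\mathcal{E}_B^\dagger \mathcal{E}_B]$, and since $\mathcal{E}_B\in\mathfrak{S}(\mathcal{H}\otimes\mathcal{H})$ is Hermitian this is exactly $\Tr(\mathcal{E}_B)^2$, the purity of the Choi state.

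There is no genuine obstacle here; the only thing to be careful about is the bookkeeping of the various $d$-factors --- the $1/d^q$ normalization in the definition of $S_q$ and the factor $d$ relating the normalized Choi state $\mathcal{E}_B$ to $\mathcal{R}^{-1}(\mathcal{E}_A)$ --- which is precisely why the final answer has no leftover powers of $d$. As an independent cross-check matching the diagrammatic conventions of Appendix~\ref{app::diagram}, one can instead expand both sides in Kraus operators $\{K_m\}$ of $\mathcal{E}$: from $\mathcal{E}_A = \sum_m K_m\otimes K_m^*$ one gets $\Tr[\mathcal{E}_A^\dagger\mathcal{E}_A] = \sum_{m,m'}\lvert\Tr[K_m^\dagger K_{m'}]\rvert^2$, while $\mathcal{E}_B = \tfrac{1}{d}\sum_m (K_m\otimes I)|\Phi_d^+\rangle\langle\Phi_d^+|(K_m^\dagger\otimes I)$ gives $\Tr(\mathcal{E}_B)^2 = \tfrac{1}{d^2}\sum_{m,m'}\lvert\Tr[K_m^\dagger K_{m'}]\rvert^2$, so the two expressions agree after dividing by $d^2$.
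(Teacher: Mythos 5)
Your proof is correct and is essentially the paper's argument in algebraic rather than diagrammatic form: the paper also reduces $S_2$ to $\Tr[\mathcal{R}(\mathcal{E}_B)^\dagger\mathcal{R}(\mathcal{E}_B)]$ and then shows by rewiring the tensor diagrams that this equals $\Tr(\mathcal{E}_B)^2$, which is exactly your observation that $\mathcal{R}$ merely permutes matrix entries and hence preserves the Hilbert--Schmidt norm. Your Kraus-operator cross-check, yielding $\tfrac{1}{d^2}\sum_{m,m'}\lvert\Tr[K_m^\dagger K_{m'}]\rvert^2$ on both sides, is the explicit index computation that the paper's diagrams encode.
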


\begin{proof}
\begin{align}
    S_2 := \frac{1}{d^2} \sum_i \sigma_i^2 &= \Tr \qty[ \mathcal{R}\qty(\mathcal{E}_B)^\dagger \mathcal{R}\qty(\mathcal{E}_B)]\nonumber\\
    &= \frac{1}{d^2} \, \vcenter{\hbox{\includegraphics[clip,scale=0.3]{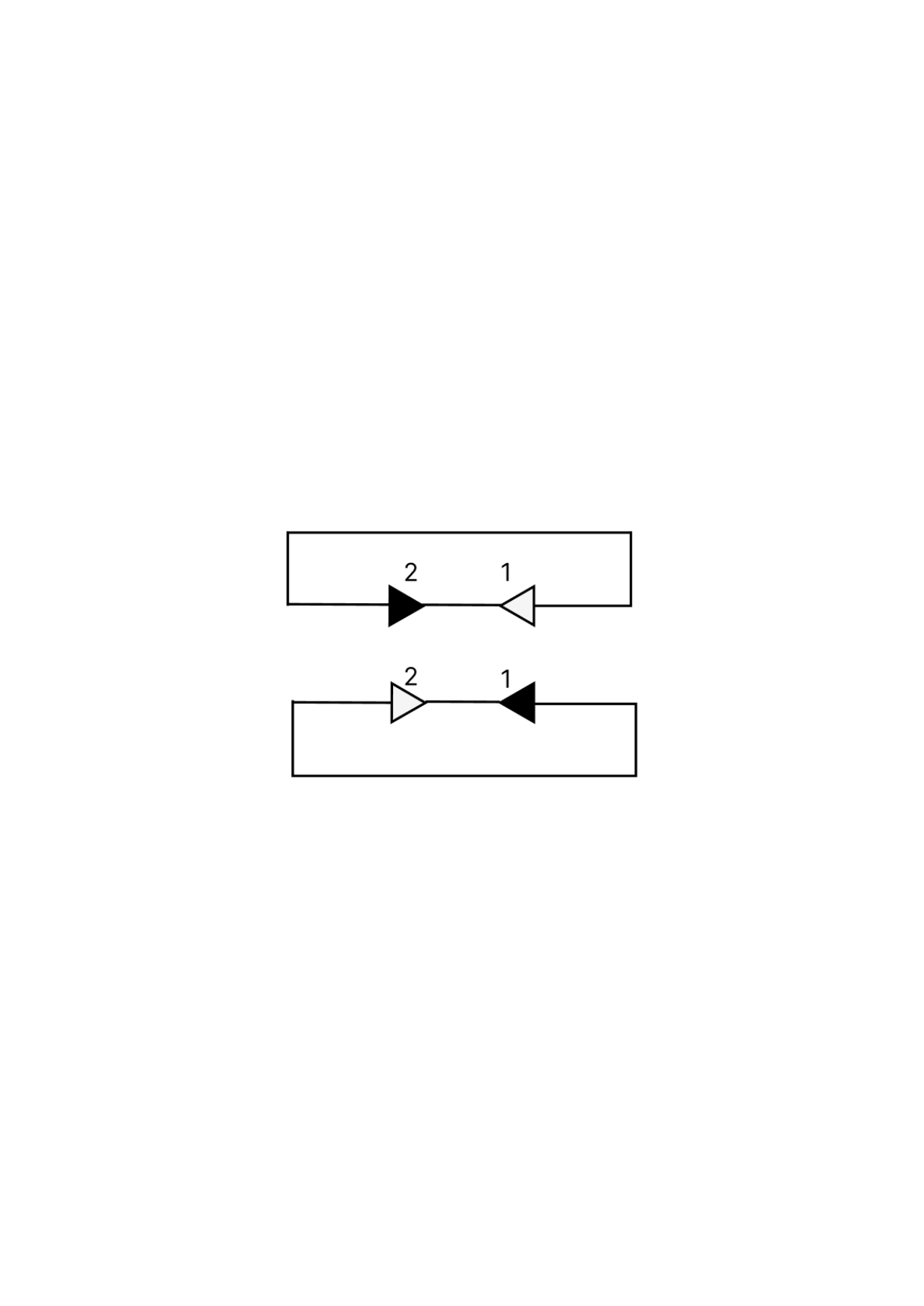}}} \nonumber\\
    &= \frac{1}{d^2} \, \vcenter{\hbox{\includegraphics[clip,scale=0.3]{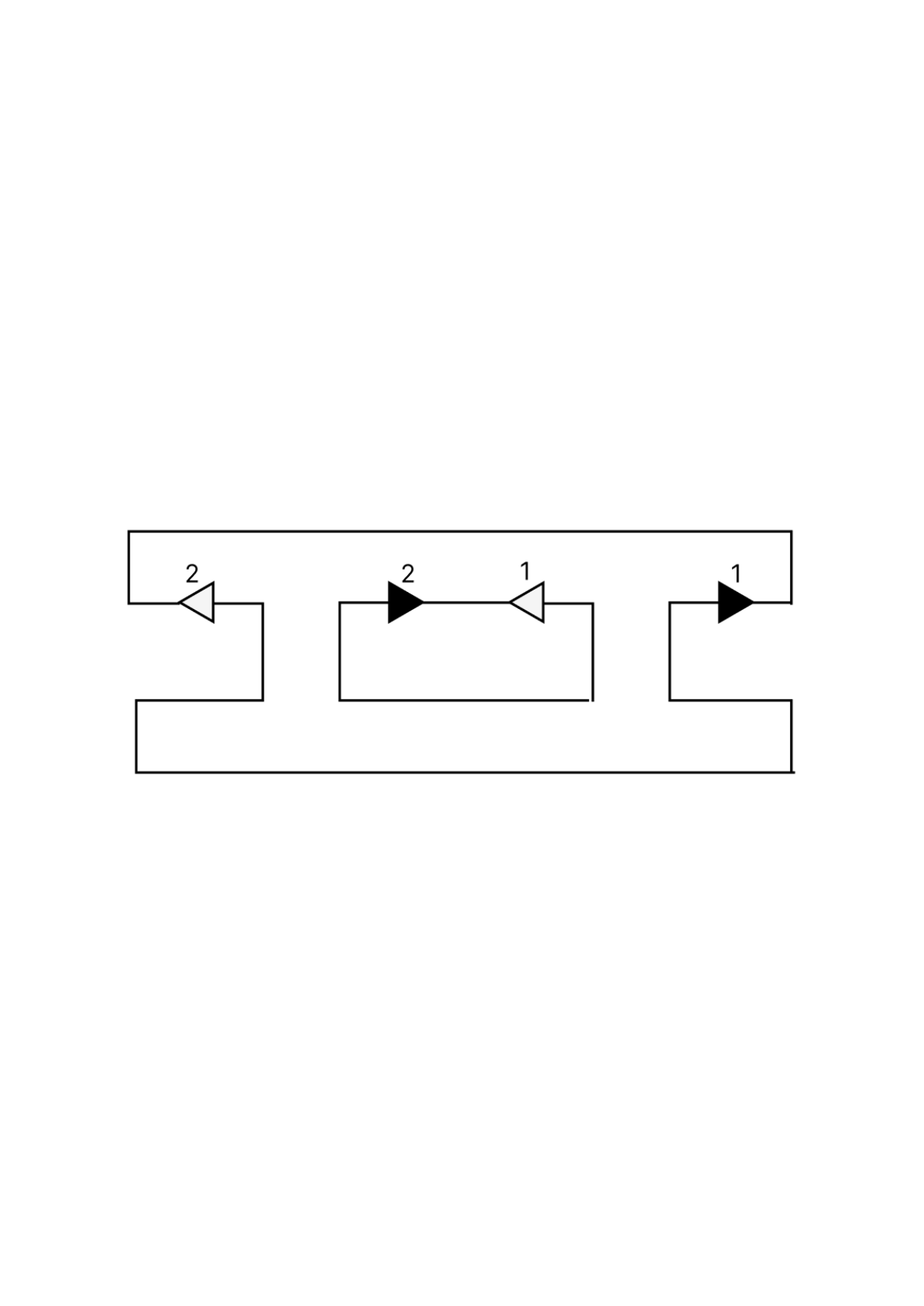}}}\nonumber \\
    &= \Tr \qty(\mathcal{E}_B)^2. 
\end{align}
\end{proof}

\begin{lm}\label{lm:channelopnorm}
Consider a quantum channel $\mathcal{E}$ acting on a $d$-dimensional system. The operator norm of its Liouville representation satisfies~\cite{Wallman_2014} 
    \begin{equation}
        1 \leq  \lVert \mathcal{E}_A \rVert_\infty \leq \sqrt{d}.
    \end{equation}
The lower bound is saturated by unital channels, while the upper bound is saturated by trace and replace channels such as $\mathcal{E}(\rho) = \Tr(\rho) |0\rangle \langle 0|$.  
\end{lm}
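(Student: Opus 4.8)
The plan is to work entirely in the Liouville (A-form) representation $\mathcal{E}_A = \sum_m K_m \otimes K_m^*$, where $\{K_m\}$ are Kraus operators satisfying $\sum_m K_m^\dagger K_m = I$. The operator norm $\lVert \mathcal{E}_A \rVert_\infty = \sigma_{\max}(\mathcal{E}_A)$ is the largest singular value, which by the discussion in the main text equals the square root of the largest eigenvalue of $\mathcal{E}^\dagger \circ \mathcal{E}$, or equivalently $\sup_{\lVert v \rVert = 1} \lVert \mathcal{E}_A v \rVert$. For the lower bound, I would exhibit a specific unit vector on which $\mathcal{E}_A$ acts with norm at least $1$; the natural candidate is the vectorized identity $|I\rangle\rangle = \sum_i |i\rangle \otimes |i\rangle$ (up to normalization $1/\sqrt{d}$), since trace preservation of $\mathcal{E}^\dagger$—equivalently unitality-type identities—controls how $\mathcal{E}_A$ or $\mathcal{E}_A^\dagger$ acts on it. Concretely, $\mathcal{E}_A^\dagger |I\rangle\rangle$ vectorizes $\sum_m K_m^\dagger K_m = I$, so $\lVert \mathcal{E}_A^\dagger |I\rangle\rangle \rVert = \lVert |I\rangle\rangle \rVert$, giving $\lVert \mathcal{E}_A \rVert_\infty = \lVert \mathcal{E}_A^\dagger \rVert_\infty \geq 1$, with equality forced when $\mathcal{E}$ is unital since then $\mathcal{E}_A |I\rangle\rangle = |I\rangle\rangle$ as well and one checks $|I\rangle\rangle$ is the top singular vector — actually the cleanest route to saturation is that for unital channels $\mathcal{E}_A$ is a contraction, which I address below.

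**The upper bound.**
For $\lVert \mathcal{E}_A \rVert_\infty \leq \sqrt{d}$, I would bound the Schatten-$\infty$ norm by a Schatten-$2$ (Frobenius) norm up to a dimension factor, or argue directly via the Choi state. The relation $\mathcal{E}_A = d\,\mathcal{R}(\mathcal{E}_B)$ from Eq.~\eqref{eq:Choi} together with Lemma~\ref{lm:2ndmomentpurity} gives $\lVert \mathcal{E}_A \rVert_2^2 = d^2 \Tr[\mathcal{R}(\mathcal{E}_B)^\dagger \mathcal{R}(\mathcal{E}_B)] = d^2 S_2 = d^2 \Tr(\mathcal{E}_B)^2 \leq d^2$, since the purity of any state is at most $1$. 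Hence $\lVert \mathcal{E}_A \rVert_\infty \leq \lVert \mathcal{E}_A \rVert_2 \leq d$, which is too weak by a factor $\sqrt d$. To sharpen to $\sqrt d$, I would instead use that $\mathcal{E}$, being trace non-increasing and positive, is a contraction in trace norm on $\mathfrak{L}(\mathcal{H})$, so $\lVert \mathcal{E}_A v \rVert \le \lVert \mathcal{E}(\rho_v) \rVert_1 \le \lVert \rho_v \rVert_1 \le \sqrt{d}\,\lVert \rho_v \rVert_2 = \sqrt d \lVert v \rVert$ for the matrix $\rho_v$ with $|\rho_v\rangle\rangle = v$; the factor $\sqrt d$ is exactly the norm-equivalence constant between $\lVert\cdot\rVert_1$ and $\lVert\cdot\rVert_2$ on $d\times d$ matrices. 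For saturation by the trace-and-replace channel $\mathcal{E}(\rho) = \Tr(\rho)|0\rangle\langle 0|$, I would compute $\mathcal{E}_A$ directly: $\mathcal{E}_A = |0\rangle\langle \cdot | \otimes$ (conjugate), which one sees has $\mathcal{E}_A |I\rangle\rangle = \sqrt d\,|0\rangle\otimes|0\rangle$ after normalization, achieving the bound.

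**Unital saturation and the main obstacle.**
For the claim that unital channels saturate the \emph{lower} bound, i.e.\ $\lVert \mathcal{E}_A \rVert_\infty = 1$ exactly, I would invoke that a unital channel is a contraction in both trace norm \emph{and} operator norm on $\mathfrak{L}(\mathcal{H})$ (the operator-norm contractivity is the standard fact that unital positive maps are contractive in operator norm), hence contractive in Hilbert–Schmidt norm by interpolation — equivalently, unitality plus trace preservation makes $\mathcal{E}^\dagger\circ\mathcal{E}$ have all eigenvalues $\le 1$. Combined with the lower bound $\ge 1$ already established, this pins $\lVert\mathcal{E}_A\rVert_\infty = 1$. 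I expect the main obstacle to be getting the \emph{tight} constant $\sqrt d$ in the upper bound cleanly rather than the loose $d$ that falls out of the Frobenius-norm estimate; the resolution is to recognize the bound as a composition of (i) trace-norm contractivity of CPTP/CP-trace-nonincreasing maps and (ii) the sharp $\lVert X\rVert_1 \le \sqrt d\,\lVert X\rVert_2$ inequality, with both steps tight simultaneously only for rank-one-image channels, explaining why trace-and-replace is the extremal example. A diagrammatic computation of $\mathcal{E}_A$ for that channel, in the notation of Appendix~\ref{app::diagram}, makes the saturation transparent.
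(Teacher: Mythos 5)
Your proposal is correct, and each of its three components takes a genuinely different route from the paper's. For the lower bound, the paper bounds $\lVert \mathcal{E}_A\rVert_\infty$ below by the spectral radius and invokes the existence of a fixed point of $\mathcal{E}$ (its Lemma~\ref{lm:channeleigval}, which needs a compactness/averaging argument); you instead note that trace preservation forces $\mathcal{E}_A^\dagger\ket{I}\rangle=\ket{I}\rangle$, which is a one-line, self-contained substitute. For the upper bound, the paper writes $\lVert\mathcal{E}_A\rVert_\infty=\sup_{\rho\neq 0}\sqrt{\Tr\mathcal{E}(\rho)^2/\Tr\rho^2}$ and applies the purity bounds $1/d\leq\Tr\rho^2\leq 1$; this implicitly restricts the supremum to density operators, whereas the operator norm requires ranging over all of $\mathfrak{L}(\mathcal{H})$, including traceless and non-Hermitian arguments. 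Your chain $\lVert\mathcal{E}(X)\rVert_2\leq\lVert\mathcal{E}(X)\rVert_1\leq\lVert X\rVert_1\leq\sqrt{d}\,\lVert X\rVert_2$ --- using the trace-norm contractivity of positive trace-preserving maps on arbitrary inputs, the same fact the paper already uses to prove $|\lambda_n|\leq 1$ --- sidesteps that issue entirely and lands on the sharp constant $\sqrt{d}$ directly, so it is arguably the cleaner argument. For unital saturation, the paper cites Watrous's Theorem 4.27 for the full equivalence, while your Riesz--Thorin interpolation between $1\to1$ and $\infty\to\infty$ contractivity proves the one direction actually needed here without an external reference. The extremal trace-and-replace example is handled essentially identically in both.
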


\begin{proof}
As mentioned in Lemma~\ref{lm:channeleigval}, every quantum channel has a fixed point. The spectral radius $r = \max \{ |\lambda_i|\}$ for any square matrix $A$ gives a lower bound on the spectral norm. For quantum channels, the spectral radius is 1, which means
\begin{equation}
    1 \leq \lVert \mathcal{E}_A\rVert_{\infty} = \max_i \sigma_i.
\end{equation}
The equality holds if and only if $\mathcal{E}$ is unital (see Theorem 4.27 of~\cite{Watrous_2018}) Next, consider the Frobenius norm 
\begin{equation}
    \lVert A \rVert_F = \sqrt{\sum_{ij}|a_{ij}|^2}.
\end{equation}
For a vectorized density matrix 
\begin{equation}
    |\rho\rangle \rangle = \sum_{ij} \rho_{ij}\ket{i} \otimes \ket{j},
\end{equation}
This corresponds to the usual inner product of vectors. Thus, it can be shown that
\begin{equation}
    \lVert \mathcal{E}_A \rVert_{\infty} = \sup_{\rho \neq 0} \frac{\lVert |\mathcal{E}(\rho)\rangle \rangle \rVert_2}{\lVert 
    |\rho\rangle \rangle \rVert_2} = \sup_{\rho \neq 0} \frac{\lVert \mathcal{E}(\rho)\rVert_F}{\lVert \rho \rVert_F} = \sup_{\rho \neq 0} \sqrt{\frac{\Tr \mathcal{E}(\rho)^2 }{\Tr \rho^2}} \leq \sqrt{d},
\end{equation}
which follows from the fact that $\frac{1}{d} \leq \Tr \rho^2 \leq 1$. Therefore, we conclude that 
\begin{equation}
    1 \leq \max_i \sigma_i = \lVert \mathcal{E}_A \rVert_\infty \leq \sqrt{d}.
\end{equation}
The upper bound is saturated if $\rho=I/d$, and $\mathcal{E}(\rho)$ is a pure state. This is achieved, for instance, by the trace and replace channel $\mathcal{E}(\rho)=\Tr(\rho) |0\rangle \langle 0|$. 
\end{proof}
\begin{lm}\label{lm:channeltrnorm}
Consider a quantum channel $\mathcal{E}$ acting on a $d$-dimensional system. The trace norm of its Liouville representation satisfies
    \begin{align}
        \lVert \mathcal{E}_A\rVert_1 \leq d^2. 
    \end{align}
\end{lm}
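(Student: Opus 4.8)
The plan is to control the trace norm $\lVert \mathcal{E}_A\rVert_1$ by the much more tractable Hilbert–Schmidt norm $\lVert\mathcal{E}_A\rVert_2$, which is essentially the second singular value moment and hence the purity of the Choi state. The first step is a Cauchy–Schwarz (equivalently Hölder, with the trivial factor $B=I$) estimate over the $d^2$ singular values $\sigma_1,\dots,\sigma_{d^2}$ of $\mathcal{E}_A\in\mathfrak{L}(\mathcal{H}\otimes\mathcal{H})$:
\begin{align*}
\lVert\mathcal{E}_A\rVert_1=\sum_{i=1}^{d^2}\sigma_i=\sum_{i=1}^{d^2}1\cdot\sigma_i\le\sqrt{d^2}\,\Big(\sum_{i=1}^{d^2}\sigma_i^2\Big)^{1/2}=d\,\lVert\mathcal{E}_A\rVert_2.
\end{align*}
The key point to keep straight here is simply that $\mathcal{E}_A$ is a $d^2\times d^2$ matrix, so the number of singular values is $d^2$ and not $d$.

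Next I would bound the Frobenius norm. By the definition of the singular value moment, $\lVert\mathcal{E}_A\rVert_2^2=\sum_{i=1}^{d^2}\sigma_i^2=d^2 S_2$, and Lemma~\ref{lm:2ndmomentpurity} identifies $S_2=\Tr(\mathcal{E}_B)^2$, the purity of the Choi state. Since $\mathcal{E}_B\in\mathfrak{S}(\mathcal{H}\otimes\mathcal{H})$ is a genuine density matrix, $\Tr(\mathcal{E}_B)^2\le 1$, so $\lVert\mathcal{E}_A\rVert_2\le d$, and combining with the previous step gives $\lVert\mathcal{E}_A\rVert_1\le d\cdot d=d^2$. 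Equivalently, one can run the argument directly on the realigned Choi state via $\mathcal{E}_A=d\,\mathcal{R}(\mathcal{E}_B)$ from Eq.~\eqref{eq:Choi}: reshuffling merely permutes matrix entries, hence preserves the Frobenius norm, so $\lVert\mathcal{R}(\mathcal{E}_B)\rVert_2=\lVert\mathcal{E}_B\rVert_2=\sqrt{\Tr\mathcal{E}_B^2}\le 1$, and Cauchy–Schwarz on its (at most $d^2$) singular values gives $\lVert\mathcal{R}(\mathcal{E}_B)\rVert_1\le d$, i.e.\ $\lVert\mathcal{E}_A\rVert_1\le d^2$.

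There is no real obstacle in this lemma; the only care needed is bookkeeping the singular-value count in the Cauchy–Schwarz step and invoking the already-proven purity identity. It may be worth adding a remark that the bound is essentially tight: for a unitary channel $\mathcal{E}(\rho)=U\rho U^\dagger$ the Liouville representation $\mathcal{E}_A=U\otimes U^*$ is unitary, so all $d^2$ singular values equal $1$ and $\lVert\mathcal{E}_A\rVert_1=d^2$, saturating the inequality.
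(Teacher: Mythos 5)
Your proof is correct and follows essentially the same route as the paper: both bound $\lVert\mathcal{E}_A\rVert_1\le d\,\lVert\mathcal{E}_A\rVert_2$ by Cauchy--Schwarz over the $d^2$ singular values, then use Lemma~\ref{lm:2ndmomentpurity} together with $\Tr(\mathcal{E}_B)^2\le 1$ to get $\lVert\mathcal{E}_A\rVert_2\le d$. Your closing remark about saturation by unitary channels matches the paper's as well.
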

\begin{proof}
    From Lemma~\ref{lm:2ndmomentpurity}, we have 
    \begin{align*}
        \lVert \mathcal{E}_A \rVert_2 = \sqrt{\sum_{i=1}^{d^2}\sigma_i^2} = \sqrt{d^2 \Tr \qty(\mathcal{E}_B)^2} \leq d
    \end{align*}
    From the Cauchy-Schwartz inequality $||\mathcal{E}_A||_1 \leq d ||\mathcal{E}_A||_2$, we obtain 
    \begin{align}
        \lVert \mathcal{E}_A \rVert_1 \leq d^2. 
    \end{align}
    The upper bound is saturated by unitary channels. \\
\end{proof}

\noindent Finally, we also bound the operator norm of arbitrary reshuffled states. 
\begin{lm}\label{lm:stateopnorm}
Let $\mathcal{R}$ denote the reshuffling operation. For a bipartite density matrix $\rho$ in a $D=d^2$ dimensional system, 
    \begin{align}
        \lVert \mathcal{R}(\rho) \rVert_\infty \leq 1. 
    \end{align}
\end{lm}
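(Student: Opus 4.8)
\textit{Proof proposal.} The plan is to reduce to the rank-one case by a spectral decomposition of $\rho$, and then exploit the fact that the reshuffling of a pure bipartite state factorizes into a tensor product of $d\times d$ matrices whose spectral norm is controlled by their Frobenius norm. Concretely, write $\rho=\sum_m p_m \ket{\psi_m}\bra{\psi_m}$ in its spectral decomposition, so that $p_m\geq 0$, $\sum_m p_m=1$, and each $\ket{\psi_m}\in\mathcal{H}\otimes\mathcal{H}$ is a unit vector. Via the operator–vector correspondence already used above, I would identify $\ket{\psi_m}=\sum_{ik}(\Psi_m)_{ik}\ket{i}\otimes\ket{k}$ with a matrix $\Psi_m\in\mathfrak{L}(\mathcal{H})$; the normalization $\braket{\psi_m|\psi_m}=1$ is precisely $\lVert\Psi_m\rVert_2=1$ (the Frobenius norm).

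The steps, in order, are: (i) a direct index computation — equivalently the diagram notation of Appendix~\ref{app::diagram}, or the identity $\mathcal{E}_A=\sum_m K_m\otimes K_m^*$ recorded there for Kraus operators — shows that $\mathcal{R}(\ket{\psi_m}\bra{\psi_m})=\Psi_m\otimes\Psi_m^*$; (ii) bound the spectral norm of this tensor product by $\lVert\Psi_m\otimes\Psi_m^*\rVert_\infty=\lVert\Psi_m\rVert_\infty^2\leq\lVert\Psi_m\rVert_2^2=1$, using the monotonicity $\lVert\cdot\rVert_\infty\leq\lVert\cdot\rVert_2$ of Schatten norms; (iii) invoke linearity of $\mathcal{R}$ together with the triangle inequality for the operator norm to conclude
\begin{align}
\lVert\mathcal{R}(\rho)\rVert_\infty=\Bigl\lVert\sum_m p_m\,\Psi_m\otimes\Psi_m^*\Bigr\rVert_\infty\leq\sum_m p_m\lVert\Psi_m\otimes\Psi_m^*\rVert_\infty\leq\sum_m p_m=1 .
\end{align}
Note this also refines Lemma~\ref{lm:channelopnorm}: applied to $\rho=\mathcal{E}_B$ it reproduces $\lVert\mathcal{R}(\mathcal{E}_B)\rVert_\infty\leq 1/\sqrt d\le 1$, since there $\lVert\Psi_m\rVert_2^2$ sums to $1$ over a normalized Choi state rather than to $d$.

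The only point requiring care is step (i): getting the reshuffling convention's index bookkeeping right so that $\mathcal{R}(\ket{\psi_m}\bra{\psi_m})$ genuinely factors as $\Psi_m\otimes\Psi_m^*$, with the conjugation and the system labels placed consistently with the definition of $\mathcal{R}$. This is not a real obstacle — it is exactly the content of the relation $\mathcal{E}_A=\sum_m K_m\otimes K_m^*$ already established — but it is where a sign or transpose convention could slip. As an alternative one could argue directly from the variational characterization $\lVert\mathcal{R}(\rho)\rVert_\infty=\sup_{\ket\alpha,\ket\beta}\lvert\bra\alpha\mathcal{R}(\rho)\ket\beta\rvert$, writing $\bra\alpha\mathcal{R}(\rho)\ket\beta=\sum_m p_m\Tr[\Psi_m^\dagger A^\dagger\Psi_m B]$ for matrices $A,B$ associated to $\ket\alpha,\ket\beta$ and bounding each trace by Cauchy–Schwarz and $\lVert A\rVert_\infty,\lVert B\rVert_\infty\leq 1$; I expect the reduction argument above to be the cleanest write-up.
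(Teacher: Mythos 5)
Your proof is correct, but it takes a genuinely different route from the paper's. The paper's proof is a one-liner: it bounds the spectral norm by the Schatten $2$-norm and uses the fact that reshuffling merely permutes matrix entries, so $\lVert\mathcal{R}(\rho)\rVert_\infty\leq\lVert\mathcal{R}(\rho)\rVert_2=\lVert\rho\rVert_2=\sqrt{\Tr\rho^2}\leq 1$. You instead decompose $\rho$ into pure states, use the factorization $\mathcal{R}(\ket{\psi_m}\bra{\psi_m})=\Psi_m\otimes\Psi_m^*$ (which is correct under the paper's reshuffling convention and is exactly the mechanism behind $\mathcal{E}_A=\sum_m K_m\otimes K_m^*$), bound each term via $\lVert\Psi_m\rVert_\infty\leq\lVert\Psi_m\rVert_2=1$, and apply the triangle inequality. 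Both arguments are valid; the paper's is shorter and avoids any decomposition, while yours makes the rank-one structure explicit, which is arguably more informative (e.g., it shows the bound is saturated exactly when $\rho$ is pure with a rank-one $\Psi$, i.e., a pure product state, matching the paper's remark). One caveat: your closing aside that the argument ``refines'' Lemma~\ref{lm:channelopnorm} by reproducing $\lVert\mathcal{R}(\mathcal{E}_B)\rVert_\infty\leq 1/\sqrt{d}$ does not follow from what you wrote. Applying your triangle-inequality bound to the Kraus decomposition of a Choi state gives $\frac{1}{d}\sum_m\lVert K_m\rVert_\infty^2\leq\frac{1}{d}\sum_m\Tr K_m^\dagger K_m=1$, not $1/\sqrt{d}$; the sharper $1/\sqrt{d}$ bound in Lemma~\ref{lm:channelopnorm} comes from the variational characterization $\sup_\rho\sqrt{\Tr\mathcal{E}(\rho)^2/\Tr\rho^2}$ and is not recovered by this term-by-term estimate. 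Since that remark is not needed for the lemma, the proof itself stands.
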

\begin{proof}
    \begin{align}
        \lVert \mathcal{R}(\rho)\rVert_\infty \leq \lVert \mathcal{R}(\rho)\rVert_2 = \sqrt{\Tr \rho^2} \leq 1. 
    \end{align}
The upper bound $\lVert \mathcal{R}(\rho)\rVert_\infty=1$ is saturated by pure product states. 
\end{proof}

\newpage
\section{Review of some basic quantum algorithm tools}\label{app::algorithmtools} 
\noindent Here we review some basic algorithmic tools used in the main text.
\subsection{The Hadamard test}
\noindent The Hadamard test calculates $\Tr (\rho U)$ by repeatedly running the following circuit.
\begin{figure}[H]
\centering
\begin{quantikz}[thin lines]
    \lstick{{$\ket{0}$}} & \gate{H} & \ctrl{1} & \gate{H}& \meter{}\qw \\
    \lstick{{$\rho$}}    &          & \gate{U} &&\qw \\
\end{quantikz}
\end{figure}
\noindent More specifically, the binary measurement outcome obeys the probability distribution
\begin{align}
    p(0) = \frac{1+\mathrm{Re}[\Tr (\rho U)]}{2}\nonumber\\
    p(1) = \frac{1-\mathrm{Re}[\Tr (\rho U)]}{2}
\end{align}
so if we record $+1$ when the measurement outcome is $0$, and $-1$ when the measurement outcome is $-1$, taking the average produces an estimate $\hat{X}$ for $\mathrm{Re} [\Tr (\rho U)]$; $\mathbb{E}[\hat{X}] = \Tr (\rho U)$. When the input $\rho$ is the maximally mixed state, it is also called the DQC1~\cite{PhysRevLett.81.5672} circuit. Repeating the circuit $N$ times, we know from the Chernoff bound that
\begin{align}
    \mathrm{Pr}(|\hat{X}-\mathbb{E}[\hat{X}]|\leq \epsilon) \leq 1-2\mathrm{exp}(-2N\epsilon^2). 
\end{align}
Thus, $O(\frac{\log \qty(\frac{1}{\delta})}{\epsilon^2})$ repetition of the circuit allows one to estimate $\Tr (\rho U)$ up to $\epsilon$-precision with probability $P \geq 1-\delta$. Moreover, if $U$ is an $n+a$ qubit unitary that block-encodes an $n$-qubit operator $A$, $\Tr (\rho A)$ can be analogously estimated up to a fixed precision with $O(\frac{\log \qty(\frac{1}{\delta})}{\epsilon^2})$ repetitions.

\subsection{Quantum singular value transformation (QSVT)}

The quantum singular value transformation (QSVT) is a framework that encompasses many well-known algorithms such as amplitude amplification, matrix inversion, and Hamiltonian simulation \cite{Gily_n_2019, mrtc_unification_21}. Specifically, one can show that the core results of quantum signal processing (QSP) \cite{lyc_16_equiangular_gates, lc_17_simulation, Low:2016znh}, a statement about products of parameterized SU(2) elements, can be `lifted' to describe permissible manipulation of linear operators encoded into sub-blocks of a larger unitary. While the details of this argument are somewhat precise, relying on properties of the cosine-sine decomposition exposited quite cleanly in \cite{cs_qsvt_tang_tian}, we quote important definitions and results here for the unfamiliar reader. To start, we reproduce the commonly cited definition of a block-encoding given in the main text.

\begin{dfn}($(\alpha, a, \epsilon)$ block encoding).
    If $A$ is an $n$-qubit operator and $U$ is an $(n+a)$ qubit unitary operator that satisfies
    \begin{align}
        \lVert A-\alpha(\bra{0}^{\otimes a}\otimes I) U (\ket{0}^{\otimes a}\otimes I) \rVert \leq \epsilon,
    \end{align}
    then we say that $U$ is an $(\alpha, a, \epsilon)$\textit{-block encoding} of $A$. In other words, the $(2\times2)$-block unitary $U$ contains an $\alpha$-subnormalized copy of an $\varepsilon$-approximation to $A$ in its top-left block (indexed by $|0\rangle^{\otimes a}$ for both row and column).
\end{dfn}

Given the definition above, we can define precisely what we mean by transforming the singular values of $A$, after which we can provide a useful workhorse theorem from QSVT, which will be all that is required in our application. Specifically, given the simplest case of a $(1, 1, 0)$ block-encoding of some linear operator $A$, QSVT performs the transformation
    \begin{equation}
        \begin{bmatrix}
            A & \ast \\
            \ast & \ast 
        \end{bmatrix}
        \longmapsto
        \begin{bmatrix}
            P^{(SV)}(A) & \ast \\
            \ast & \ast 
        \end{bmatrix},
    \end{equation}
where the blocks here have been implicitly labeled by a single auxiliary qubit state, or more specifically
    \begin{equation}
        A = \sum_{\xi} \xi_k |\ell_k\rangle\langle r_k|
        \mapsto 
        P^{(SV)}(A) :=
        \sum_{\xi} P(\xi_k) |\ell_k\rangle\langle r_k|,
    \end{equation}
with $\xi_k$, $|\ell_k\rangle$, $|r_k\rangle$ the $k$-th singular value, left singular vector, and right singular vector of $A$, respectively. The key insight and surprise of QSVT will be that we can induce this transformation on the singular values without having to know the singular values \emph{or} singular vectors \footnote{In truth we have to know milder things, like a bound on the range of possible singular values, and the span of the left and right singular vectors respectively, but these are often trivially known.}, meaning these manipulations can, in many common cases, be extremely efficient. Moreover, while the referenced theorem below makes mention only of query complexity to the block encoding, the additional required gates are simple and their number scales linearly with the query complexity.

The following theorem, which describes permissable transformations of the singular values of block encoded (sub-normalized, approximate, Hermitian) linear operators, is necessary and sufficient for our applications. Note that extensive results exist on the construction and manipulation of block encodings, some of which may be useful in extensions to this work; the interested reader is directed to common pedagogical texts \cite{Gily_n_2019, mrtc_unification_21, cs_qsvt_tang_tian}.

\begin{thm}(QSVT for Hermitian matrices~\cite{Gily_n_2019}) Suppose $U$ is an $(\alpha, a, \epsilon)$-block encoding of a Hermitian matrix $A$, and $P(x) \in \mathbb{R}[x]$ is a polynomial that satisfies $|P(x)| \leq \frac{1}{2}$ for all $x \in [-1, 1]$. Then, we can implement a unitary $\widetilde{U}$ that is a 
$(1, a+2, 4(\deg P)\sqrt{\frac{\epsilon}{\alpha}}+\delta)$-block-encoding of $P(A/\alpha)$, using $O(\deg P)$ queries to $U$. The description of the circuit can be computed with a classial computer in time $O(\mathrm{poly} (\deg P, \log \frac{1}{\delta}))$. Note also that if the parity of $P(x)$ is definite, then the same statement holds for $|P(x)| \leq 1$ on $x \in [-1, 1]$.
\end{thm}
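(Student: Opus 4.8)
The plan is to reduce the statement to the fundamental theorem of quantum signal processing (QSP) via the cosine--sine (CS) decomposition, which is the standard route to QSVT. First I would treat the exact case $\epsilon = 0$: write the subnormalized operator $A/\alpha = W\Sigma V^\dagger$ in SVD form with $\Sigma = \mathrm{diag}(\sigma_k)$, $\sigma_k \in [0,1]$, and let $\Pi = |0\rangle^{\otimes a}\langle 0|^{\otimes a}\otimes I$ be the projector onto the encoding subspace, so $\Pi U \Pi = \Pi (A/\alpha)$. The CS decomposition of $U$ with respect to the pair $(\mathrm{im}\,\Pi, \mathrm{im}\,\Pi)$ shows that the ambient space splits into a direct sum of (at most) two-dimensional $U$-invariant subspaces, on each of which $U$ together with the reflection $2\Pi - I$ acts as a pair of $SU(2)$ elements with rotation angle fixed by $\sigma_k = \cos\theta_k$. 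Interleaving $U$ and $U^\dagger$ with the projector-controlled phase gates $e^{i\phi(2\Pi - I)}$ (the grey boxes of FIG.~\ref{fig:channelQSVT}(b)) therefore acts, blockwise, exactly as the QSP word $\prod_n e^{i\phi_n \sigma_z} R(\theta_k)$.

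Next I would invoke the QSP existence theorem: for any real $P$ of degree $d=\deg P$ with definite parity and $|P(x)|\le 1$ on $[-1,1]$ there is a complementary real $Q$, $\deg Q \le d-1$, with $|P(x)|^2+(1-x^2)|Q(x)|^2 = 1$, and phases $(\phi_0,\dots,\phi_d)$ realizing $P(\cos\theta)$ as a matrix element of the QSP word; the usual symmetrization handles the fact that QSP natively produces complex polynomials, so that the real $P$ is obtained. Feeding these phases into the alternating circuit built on $U$ then yields a unitary whose $\Pi$-block is $\sum_k P(\sigma_k)|w_k\rangle\langle v_k| = P^{(SV)}(A/\alpha)$, using $d$ queries to $U$ (and $U^\dagger$) and one extra ancilla qubit for the projector-controlled phases, hence $a+1$ ancillas so far.

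To get the Hermitian eigenvalue transformation $P(A/\alpha)$ rather than a singular-value transformation, note that for Hermitian $A$ the left and right singular vectors differ only by $\mathrm{sign}(\lambda_k)$, so SVT with an \emph{odd} polynomial equals $P(A/\alpha)$; for an arbitrary $P$ with $|P|\le 1/2$ I would split $P = P_{\mathrm{even}}+P_{\mathrm{odd}}$, apply the parity-definite construction to each (noting $\|P_{\mathrm{even}}\|, \|P_{\mathrm{odd}}\| \le 1/2$ on $[-1,1]$), and combine them with one Hadamard-test/LCU ancilla. This is exactly where the factor $1/2$ and the last ancilla (giving $a+2$) are spent, and it is unnecessary when $P$ already has definite parity, which is the final remark.

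The genuinely delicate part is robustness, which splits in two. First, the QSP phases have no closed form, but can be computed on a classical computer to precision $\delta$ in time $\mathrm{poly}(d,\log(1/\delta))$ by root-finding or optimization methods; this contributes the additive $\delta$ and the stated classical cost. Second, if $U$ is only an $\epsilon$-approximate encoding, $\|A - \alpha\Pi U\Pi\| \le \epsilon$, one must show the $d$-fold alternating product on the true $U$ stays close to the one on the ideal encoding: a telescoping estimate over the $d$ factors, using that each factor changes by at most the relevant operator-norm perturbation, yields an error $O(d)$ times a term scaling like $\sqrt{\epsilon/\alpha}$ --- the square root arising because the perturbation enters through the off-diagonal $\sqrt{1-\sigma^2}$ entries of the CS blocks --- and a careful accounting gives the clean constant $4(\deg P)\sqrt{\epsilon/\alpha}$. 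I expect pinning down this perturbation bound with the explicit constant (rather than a mere $O(\cdot)$) to be the main obstacle; it is precisely the robustness lemma of \cite{Gily_n_2019}, and for our applications only its asymptotic form is needed.
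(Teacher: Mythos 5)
The paper does not prove this theorem itself: it is quoted verbatim from Gily\'en et al.\ \cite{Gily_n_2019}, with the surrounding text merely pointing the reader to the cosine--sine-decomposition exposition for the argument. Your sketch correctly reconstructs exactly that standard route (CS decomposition into $2\times 2$ invariant blocks, the QSP existence theorem with a complementary polynomial, the even/odd split with an LCU ancilla accounting for the factor $\tfrac{1}{2}$ and the definite-parity relaxation, and the robustness lemma supplying the $4(\deg P)\sqrt{\epsilon/\alpha}$ term), so it is consistent with the cited proof; the only quibble is the bookkeeping of the two extra ancillas, which in \cite{Gily_n_2019} are spent on the real-part construction and the parity combination rather than on the projector-controlled phases.
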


To calculate spectral moments, we need a suitable polynomial approximation of the power function $x^q$. The most common choice for the approximating polynomial is the degree $d$ truncated Chebyshev polynomial
\begin{align}
    \tilde{P}_d(x) \equiv \frac{c_0}{2} + \sum_{k=1}^d c_k T_k(x), 
\end{align}
which can exhibit poor approximation around $x = 0$. For our setting, a more well-behaved approximation (i.e., one which uniformly converges across all $x \in [-1,1]$ at a near optimal rate) will be obtained by averaging over $d$ such polynomials 
\begin{align}
    P_{d'}(x) = \frac{1}{d} \sum_{k=d}^{2d-1} \tilde{P}_d(x).
\end{align}
This modified polynomial was employed in~\cite{Liu_2025} to obtain an estimate for the Tsallis entropy of states to avoid having to make assumptions on the spectrum of the density matrix (e.g., bounds on its condition number). In particular, for a fixed positive integer $r$ and a real number in $(-1,1)$, there is a degree $d=\lceil(\beta/\epsilon)^\frac{1}{r+\alpha} \rceil$ polynomial for any $\epsilon \in (0, \frac{1}{2}]$ such that 
\begin{align}
    \max_{x \in [-1,1]} \left\lvert\frac{1}{2}x^{r-1}|x|^{1+\alpha}-P_d(x)\right\rvert \leq \epsilon, \quad \max_{x \in [-1,1]} \lVert P_d(x)\rVert \leq 1. 
\end{align}
The same reference also notes that a polynomial of degree $O(1/\epsilon''^{\frac{1}{q-1}})$ is generally required to obtain the desired $\epsilon''$-\emph{uniform} approximation to $\frac{1}{2}x^{q-1}$ over $x \in [-1,1]$.

\subsection{Samplizer}

Here we review the notion of the samplizer, which generalizes the density matrix exponentiation trick~\cite{Lloyd_2014}, originally introduced in~\cite{wang_et_al:LIPIcs.ESA.2024.101}. This method converts a block-encoding unitary access model into a quantum state access model.

\begin{thm}(Samplizer) Suppose $C = \{C[U]\}$ is a quantum circuit family with access to $(1,m,0)$-block encoding of $\rho/2$ with query complexity Q. If $m\geq 4$, then, for every $\delta>0$, there is a quantum channel family
$\mathrm{Samplize}_\delta\langle C \rangle$ with sample access to $\rho$ with sample complexity $O\qty(\frac{Q^2}{\delta}\log^2\qty(\frac{Q}{\delta}))$ satisfying for every $\rho$, there is a specific unitary operator $U_\rho$ that is a $(2, m, 0)$-block-encoding of $\rho$ such that
\begin{align}
    \lVert\mathrm{Samplize}_\delta\langle C \rangle(\rho) - C[U_\rho]\rVert_\diamond \leq \delta
\end{align}
\end{thm}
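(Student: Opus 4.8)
The plan is to reduce the statement to two ingredients: (i) a \emph{single-query} sample-based block-encoding, which replaces one application of the ideal unitary $U_\rho$ by a quantum channel built purely from copies of $\rho$, and (ii) a hybrid argument that composes $Q$ such replacements while controlling the accumulated diamond-norm error. The central primitive is density matrix exponentiation (DME)~\cite{Lloyd_2014}: given $O(\theta^2/\epsilon)$ copies of $\rho$, one can implement a channel that is $\epsilon$-close in diamond norm to the conjugation channel $\sigma\mapsto e^{-i\rho\theta}\sigma e^{i\rho\theta}$, and likewise for its controlled version. This is exactly the tool already exploited in the proof of Thm.~\ref{thm1}, and it converts sample access into (approximate) Hamiltonian evolution by $\rho$.

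For ingredient (i), I would fix $\theta = O(1)$ and use controlled-DME to obtain an approximation to controlled-$e^{-i\rho\theta}$ and its inverse. Feeding these into the Hadamard-test sine identity $\sin(\rho\theta) = (\langle+|\otimes I)(\text{c-}U)(Y\otimes I)(\text{c-}U^\dagger)(|+\rangle\otimes I)$ yields a block-encoding of $\sin(\rho\theta)$, and a degree-$O(\log(1/\epsilon'))$ arcsin-QSVT (Ref.~\cite{Gily_n_2019}, Lemma~70) then produces a block-encoding of $\rho$ subnormalized by $1/2$, i.e.\ a $(2,m,0)$-block-encoding, precisely as in the passage from $\widetilde U_{\mathcal{E}_A}$ to the QSVT output in Thm.~\ref{thm1}. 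The $m\geq 4$ hypothesis supplies the ancillas required by the Hadamard-test and QSVT layers. Crucially, had we access to the \emph{ideal} evolution $e^{-i\rho\theta}$, this recipe would output one fixed unitary $U_\rho$; with DME in its place the recipe instead realizes a channel $\Phi$ that is $\delta'$-close in diamond norm to the ideal query channel $\sigma\mapsto U_\rho\sigma U_\rho^\dagger$. Propagating the DME error through the $O(\log(1/\delta'))$ QSVT queries, each an evolution implemented to error $\sim\delta'/\log(1/\delta')$, costs $O(\delta'^{-1}\log^2(1/\delta'))$ samples per single-query replacement.

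For ingredient (ii), I would substitute $\Phi$ for every one of the $Q$ queries to $U_\rho$ in $C$, leaving the fixed gates of $C$ untouched, which turns the whole circuit into the channel $\mathrm{Samplize}_\delta\langle C\rangle(\rho)$. Since channels are contractive and the diamond norm is subadditive under composition, $\lVert\mathrm{Samplize}_\delta\langle C\rangle(\rho) - C[U_\rho]\rVert_\diamond \leq Q\,\delta'$. Choosing $\delta' = \delta/Q$ gives the target accuracy and a total sample count $Q\cdot O((Q/\delta)\log^2(Q/\delta)) = O\big(\tfrac{Q^2}{\delta}\log^2(\tfrac{Q}{\delta})\big)$, matching the claim.

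The main obstacle is making ingredient (i) rigorous as a diamond-norm statement anchored to a \emph{single, query-independent} unitary $U_\rho$: DME is intrinsically a channel (an average over fresh samples), not a unitary, so one must show that (a) its output is genuinely $\delta'$-close to conjugation by one fixed $U_\rho$, (b) inserting this non-unitary object inside QSVT---whose guarantees are stated for exact unitary block-encodings---does not blow up the error, and (c) the same $U_\rho$ can be used consistently at every query so that the hybrid telescoping in (ii) is valid. Working throughout in diamond norm rather than operator norm is what allows the per-query errors to add only linearly across the $Q$ replacements.
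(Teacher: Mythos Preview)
Your proposal is correct and follows essentially the same approach as the paper: write $C[U]=G_Q U_Q\cdots G_1 U_1 G_0$, replace each query by a DME-based channel that is $\epsilon=\delta/Q$-close in diamond norm to conjugation by a fixed $U_\rho$ (at cost $O(\epsilon^{-1}\log^2(1/\epsilon))$ samples per query, coming from the arcsin-QSVT on top of controlled DME exactly as in Thm.~\ref{thm1}), and then telescope the $Q$ errors using subadditivity and contractivity of the diamond norm to get total error $Q\epsilon=\delta$ and total samples $O(Q^2\delta^{-1}\log^2(Q/\delta))$. The paper's proof is terser on ingredient~(i) but relies on the same construction, and your flagged ``obstacles'' (a)--(c) are precisely the points the paper sweeps into the phrase ``one can show,'' deferring to~\cite{Liu_2025}.
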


\begin{proof}
Suppose the quantum circuit has the following form
\begin{align}
    C[U] = G_Q U_Q \cdots G_2 U_2 G_1 U_1 G_0
\end{align}
where $G_i$ consists of one and two qubit gates, and $U_i$s are either unitary, controlled unitary or their inverse. Utilizing the state exponentiation protocol in \cite{Lloyd_2014}, one can show that for $\epsilon=\delta/Q$ we can implement the channel $\mathcal{E}, \mathcal{E}^{\mathrm{inv}}$ up to precision $\epsilon$
\begin{align}
    \lVert\mathcal{E}_\rho - \mathcal{U}_\rho\rVert_\diamond \leq \epsilon, \quad\lVert\mathcal{E}_\rho^{\mathrm{inv}} - \mathcal{U}^{\mathrm{inv}}_\rho\rVert_\diamond \leq \epsilon
\end{align}
where $\mathcal{U}(\sigma) = U_\rho \sigma U_\rho^\dagger, \, \mathcal{U}^{\mathrm{inv}}(\sigma) = U_\rho^\dagger \sigma U_\rho$, using $O(\frac{1}{\epsilon} \log^2 \qty(\frac{1}{\epsilon}))$ samples. By appropriately enlarging the Hilbert space, one can show that 
\begin{align}
    \lVert C'[\rho]-C[U_\rho\otimes I^{\otimes (m-4)}]\rVert_\diamond \leq Q\epsilon = \delta
\end{align}
since 
\begin{align}
    \lVert\mathcal{E}_1 + \mathcal{E}_2\rVert_\diamond \leq \lVert\mathcal{E}_1\rVert_\diamond + \lVert\mathcal{E}_2\rVert_\diamond, \quad 
    \lVert\mathcal{E}_2\mathcal{E}_1\rVert_\diamond \leq \lVert\mathcal{E}_2\rVert_\diamond \lVert\mathcal{E}_1\rVert_\diamond
\end{align}
\end{proof} 

\section{Block-encoding via reshuffling circuit}\label{ap:stateBE}
Here we discuss a different scheme for approximately block-encoding the Hermitized Liouville representation. Our overall scheme proceeds by first constructing a \texttt{SWAP} circuit implementing the \textit{reshuffling} operation in a Hermitized form, followed by density matrix exponentiation. We then see how the reshuffling circuit is useful for directly estimating the full-singular value spectrum of unital quantum channels. 
\begin{thm}\label{thm6} 
    Given a $D=d\times d$-dimensional bipartite state $\rho\in\mathbb{C}^{D\times D}$, we can construct a quantum channel $\widetilde{\mathcal{U}}$ that is $\delta$-close in diamond norm to a $(1, 3, 0)$-block encoding unitary of $\frac{2}{\pi}H$, where 
    \begin{align}
        H:= \frac{1}{2d^{1-k}}\mqty[ O & \mathcal{R}(\rho)\\ \mathcal{R}(\rho)^\dagger & O]
    \end{align}
    using $\widetilde{O}(\frac{d^{2k}}{\delta} \log^2 \frac{1}{\delta})$ copies of $\rho$. In particular, when $\rho=\mathcal{E}_B$, namely, when $\rho$ is the Choi state of the unknown channel $\mathcal{E}$, $H$ is the Hermitized Liouville representation. $k\in [0,1]$, $k \in [0, \frac{3}{2}]$, and $k \in [0, 2]$ ensures $\lVert H \rVert_\infty \leq \frac{1}{2}$ for arbitrary states, Choi states of general channels, and Choi states of unital channels, respectively.
\end{thm}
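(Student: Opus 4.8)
The plan is to mirror the structure of the proof of Theorem \ref{thm1}, but replacing the channel-application gadget (controlled-\texttt{SWAP} on the Choi state) with a purely state-level construction that produces the reshuffled matrix in Hermitized form. First I would build two auxiliary states $\rho_\pm \in \mathfrak{S}(\mathcal{H}_X \otimes \mathcal{H}_Y \otimes \mathcal{H}_Z)$ analogous to Eq.~\eqref{eq:AAdaggerstate}, but now using an extra maximally entangled state $|\Phi_d^+\rangle$ together with $\rho$ and a controlled-\texttt{SWAP}, so that the off-diagonal ($|0\rangle\langle1|$ and $|1\rangle\langle0|$) blocks carry $(\mathcal{R}(\rho)\mathbb{F})^{T_Z}$ and $(\mathbb{F}\mathcal{R}(\rho)^\dagger)^{T_Z}$ up to the appropriate $1/d$ factor; the key identity here is the diagrammatic observation (as in the Diagram notations appendix) that partial transposition of $\rho$ against an auxiliary MES realizes the reshuffling operation at the state level, which is exactly the sense in which "$\mathcal{R}$ is implemented at the state level." Each copy of $\rho_\pm$ consumes $O(1)$ copies of $\rho$.

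Next I would apply the density-matrix-exponentiation / partial-transpose-exponentiation primitive used in Thm.~\ref{thm1}: forming $e^{i\rho_-^{T_Z}\Delta t}e^{-i\rho_+^{T_Z}\Delta t}$ cancels the diagonal blocks and yields, to $O(\Delta t^2)$, the unitary generated by the antidiagonal matrix with blocks $\mathcal{R}(\rho)\mathbb{F}$ and $\mathbb{F}\mathcal{R}(\rho)^\dagger$ times $\Delta t/d$. Choosing $\Delta t = \Theta(\delta/d^k)$ and iterating $O(d^{2k}/\delta)$ times makes the accumulated error $O(\delta)$ and lands on $\widetilde{U} = \exp[-\tfrac{i}{2d^{1-k}}(\text{antidiagonal with } \mathcal{R}(\rho)\mathbb{F},\ \mathbb{F}\mathcal{R}(\rho)^\dagger)]$; conjugating by controlled-\texttt{SWAP}s removes the stray $\mathbb{F}$ and gives $e^{-iH}$ with $H$ the Hermitized reshuffled matrix. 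This reproduces the $\widetilde{O}(d^{2k}/\delta \cdot \log^2(1/\delta))$ copy count once the samplizer / DME overhead is folded in. Then, exactly as in Thm.~\ref{thm1}, from controlled-$e^{-iH}$ one extracts a block-encoding of $\sin H$ via a Hadamard-test-style identity, and composes with the QSVT polynomial approximation of $\tfrac{2}{\pi}\arcsin$ on $[-\tfrac12,\tfrac12]$ (Ref.~\cite{Gily_n_2019}, Lemma 70) to obtain a $\delta$-close approximate $(1,3,0)$-block-encoding of $\tfrac{2}{\pi}H$, after the usual $\epsilon'\to O(\delta)$, $\delta\to\delta/2$ rescaling.

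The remaining ingredient is the operator-norm bookkeeping that fixes the allowed ranges of $k$. Here I would invoke Lemma~\ref{lm:stateopnorm} ($\lVert\mathcal{R}(\rho)\rVert_\infty\le 1$ for arbitrary bipartite $\rho$), which forces $\lVert H\rVert_\infty \le \tfrac{1}{2d^{1-k}}\le \tfrac12$ precisely when $k\le 1$; for $\rho=\mathcal{E}_B$ a Choi state I would instead use $\mathcal{R}(\mathcal{E}_B)=\tfrac1d\mathcal{E}_A$ together with Lemma~\ref{lm:channelopnorm} ($\lVert\mathcal{E}_A\rVert_\infty\le\sqrt d$ in general, $=O(1)$ plus the unital refinement), giving $\lVert H\rVert_\infty = \tfrac{1}{2d^{2-k}}\lVert\mathcal{E}_A\rVert_\infty$, hence $k\le\tfrac32$ for general channels and $k\le 2$ for unital channels. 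The main obstacle I anticipate is not any single inequality but getting the state-level reshuffling gadget exactly right: verifying that the controlled-\texttt{SWAP}-plus-auxiliary-MES circuit really produces $(\mathcal{R}(\rho)\mathbb{F})^{T_Z}$ in the off-diagonal block with the stated $1/d$ normalization (and that the diagonal blocks are exactly the marginals that cancel in $U_-'^\dagger U_+'$), and then tracking that this extra $1/d$ — absent in Thm.~\ref{thm1}, where channel application supplies $\mathcal{E}_A$ directly rather than $\tfrac1d\mathcal{E}_A$ — is what degrades the normalization factor and is the reason this route is sub-optimal compared to the main-text construction.
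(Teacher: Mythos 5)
Your proposal has the right architecture (state-preparation gadget $\to$ density-matrix exponentiation $\to$ commutator trick $\to$ iterate $\to$ $\arcsin$-QSVT), and your operator-norm bookkeeping for the three ranges of $k$ via Lemma~\ref{lm:stateopnorm} and Lemma~\ref{lm:channelopnorm} is exactly what the paper does. But the step you yourself flag as the "main obstacle" --- the state-level reshuffling gadget --- is precisely the content of this theorem, and as described it does not work. A single controlled-\texttt{SWAP} applied to $\ket{\pm}\bra{\pm}\otimes\rho$ produces the off-diagonal block $\tfrac12\rho\,\mathbb{F}=\tfrac12(\mathcal{R}(\rho)\mathbb{F})^{T_Z}$, i.e.\ \emph{without} the extra $1/d$ your subsequent accounting assumes when you write the generator as the antidiagonal matrix "times $\Delta t/d$". (In Theorem~\ref{thm1} the $1/d$ appears only because $\mathcal{E}_A=d\,\mathcal{R}(\mathcal{E}_B)$; rewritten in terms of $\mathcal{R}(\mathcal{E}_B)$ the coefficient there is also $1$, so transplanting that gadget verbatim gives you no $1/d$.) Without the correct normalization the choice $\Delta t=\Theta(\delta/d^k)$ and the iteration count $O(d^{2k}/\delta)$ are unjustified, and the whole error budget collapses.

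The paper closes this gap with a genuinely different gadget: the identity
\begin{align}
\frac{1}{d}\mathcal{R}(\rho)=\Tr_{23}\qty[\qty(\rho\otimes\ket{\Phi_d^+}\bra{\Phi_d^+})\,\mathbb{F}_{23}\,\mathbb{F}_{14}\,\mathbb{F}_{34}],
\end{align}
realized by \emph{three} controlled-\texttt{SWAP}s acting on $\rho$ together with an auxiliary MES, followed by tracing out two of the four $d$-dimensional registers. The trace against the MES is what supplies the $1/d$, and the resulting state $\tilde\rho_\pm$ carries $\tfrac{1}{2d}\mathcal{R}(\rho)$ \emph{directly} in its off-diagonal blocks --- untransposed and with no stray $\mathbb{F}$ --- with marginals $\tfrac{1}{2d}\Tr_2[\rho]\otimes I$ and $\tfrac{1}{2d}I\otimes\Tr_1[\rho]$ on the diagonal that cancel in $e^{i\tilde\rho_-\Delta t}e^{-i\tilde\rho_+\Delta t}$. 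Consequently the paper needs only the plain Lloyd-style exponentiation $e^{-i\mathbb{F}\Delta t}$, not the partial-transpose variant, and no final conjugation by controlled-\texttt{SWAP}s. So your route is both unverified at its crucial step and, even if repaired, more complicated than necessary; to complete the proof you must exhibit a concrete circuit and verify the $\tfrac{1}{2d}$ normalization of the off-diagonal block, since everything downstream depends on it.
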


\begin{proof}
A key observation is that the reshuffling operation $\mathcal{R}$ can be obtained via the relation
\begin{align}
    \frac{1}{d}\mathcal{R}(\rho)  = \Tr_{23}\qty[\qty(\rho \otimes |\Phi_d^+\rangle \langle \Phi_d^+|) \mathbb{F}_{23} \mathbb{F}_{14} \mathbb{F}_{34}]. 
\end{align}
Here, $\mathbb{F}_{ij}$ represents the \texttt{SWAP} operator between system $i$ and $j$. The corresponding circuit is 
\begin{figure}[H]
\centering
\begin{quantikz}[thin lines]
    \lstick{{$\ket{\pm}$}} &&\ctrl{2}&\ctrl{1}&\ctrl{3}& \rstick[wires=5]{$\quad \rho_{\pm}$}\qw \\
    \lstick[wires = 2]{$\rho$} &&&\swap{3}&&\qw \\
    &&\swap{1}&\qw &&\meterD{}\qw \\
    \lstick[wires = 2]{$\ket{\Phi_d^+}$}&
    &\targX{}&&\swap{1}&\meterD{}\qw \\
    &&&\targX{}&\targX{}&\qw 
\end{quantikz}
\end{figure}
\noindent where the white caps represent the trace and the subsystem labels read from top to bottom, following the control qubit $C$ which labels the uppermost system. This circuit prepares a state $\tilde{\rho}_\pm \in \mathfrak{S}(\mathbb{C}^2 \otimes \mathbb{C}^d \otimes \mathbb{C}^d)$ given by
\begin{align}\label{eq:reshuffling}
    \tilde{\rho}_\pm &= |0\rangle \langle 0| \otimes \frac{1}{2d}\Tr_2 [\rho] \otimes I \pm |0\rangle \langle 1| \otimes \frac{1}{2d} \mathcal{R}(\rho) \nonumber\\
    &\pm |1\rangle \langle 0| \otimes \frac{1}{2d} \mathcal{R}(\rho)^\dagger +  |1\rangle \langle 1| \otimes \frac{1}{2d} I \otimes \Tr_1 [\rho]. 
\end{align}
\noindent Next, we use \textit{density matrix exponentiation}~\cite{Lloyd_2014} to approximate the unitary $e^{-i\rho_\pm t}$ from many copies of $\rho_{\pm}$ as
\begin{align}
    \Tr_1 [e^{-i\mathbb{F} \Delta t} (\tilde{\rho}_\pm \otimes \sigma  )e^{i\mathbb{F} \Delta t}] &= (\cos ^2 \Delta t) \sigma + (\sin ^2 \Delta t) \tilde{\rho}_\pm -i\sin \Delta t \cos \Delta t [\tilde{\rho}_\pm, \sigma]\nonumber\\
    &= \sigma -i\Delta t [\tilde{\rho}_\pm, \sigma] + O(\Delta t^2)\nonumber\\
    &= e^{-i\tilde{\rho}_\pm \Delta t}\sigma e^{i\tilde{\rho}_\pm \Delta t}+O(\Delta t^2).
\end{align}
This allows us to obtain an approximation of the unitary
\begin{align}
    \widetilde{U}_{\mathcal{R}(\rho)} &:= e^{i \tilde{\rho}_- \Delta t}e^{-i\tilde{\rho}_+ \Delta t}\nonumber\\
    &= \exp \qty(-\frac{i}{d}\mqty[O & \mathcal{R}(\rho) \\ \mathcal{R}(\rho)^\dagger & O]\Delta t)+O(\Delta t^2).
\end{align}
Choosing $\Delta t = \frac{2\delta}{d^{k}}$ and applying the above procedure $O(\frac{d^{2k}}{4\delta})$ times, we arrive at a $O(\Delta t^2) \times O(\frac{d^{2k}}{4\delta}) = \delta$-close approximation of the unitary 
\begin{align}
    U_{\mathcal{R}(\rho)} = \exp \qty(-\frac{i}{2d^{1-k}}\mqty[O & \mathcal{R}(\rho) \\ \mathcal{R}(\rho)^\dagger & O]) = e^{-iH}.
\end{align}
In particular, when $\rho$ is the Choi state $\mathcal{E}_B$, $\mathcal{R}(\mathcal{E}_B)=\frac{1}{d}\mathcal{E}_A$, so $H$ is the \emph{Hermitized Liouville representation} of the unknown channel $\mathcal{E}$. Note that control-$U_{\mathcal{R}(\rho)}$ can be similarly obtained by replacing $\tilde{\rho}_\pm$ with $|1\rangle \langle 1| \otimes \tilde{\rho}_{\pm}$~\cite{Kimmel_2017}. Now, $H$ satisfies $\lVert H\rVert_\infty \leq \frac{1}{2}$ as long as $k\leq 1$ (see Appendix \ref{app:channelprop}, Lemma~\ref{lm:stateopnorm}). When restricted to Choi states, $k\leq \frac{3}{2}$ and $k\leq 2$ ensures $\lVert H\rVert_\infty \leq \frac{1}{2}$ for general channels and unital channels, respectively (see Appendix \ref{app:channelprop}, Lemma~\ref{lm:channelopnorm}). Then, following the exact same argument provided in the proof of Theorem~\ref{thm1}, we can obtain a quantum channel $\widetilde{\mathcal{U}}$ that is $\delta$-close in diamond norm to a $(1, 3, 0)$-block-encoding of $\frac{2}{\pi} H$, using $O\qty(\frac{d^{2k}}{\delta} \log^2 \frac{1}{\delta})$ samples of $\rho$. \\
\end{proof}
\noindent Next, we discuss how our reshuffling circuit in~\eqref{eq:reshuffling} can be used to estimate the full-singular value spectrum of unital quantum channels.  When the input is the Choi state of a unital channel, the circuit produces the following state; 
\begin{align}
    \rho_+ = \frac{1}{2d^2} |0\rangle \langle 0| \otimes I \otimes I + \frac{1}{2d^2}|0\rangle \langle 1| \otimes \mathcal{E}_A + \frac{1}{2d^2}|1\rangle \langle 0| \otimes \mathcal{E}_A^\dagger
    + \frac{1}{2d^2} |1\rangle \langle 1| \otimes I\otimes I. 
\end{align}
The eigenspectrum $\{ \lambda_i \}$ of this state coincides with $\{\frac{1}{2d^2}\pm \frac{1}{2d^2}\sigma_i\}$, where $\sigma_i$'s are the singular values of the quantum channel. By applying conventional spectrum estimation algorithms~\cite{PhysRevA.64.052311, odonnell2016efficientquantumtomographyii}, we see that the full-singular value spectrum can be estimated to $\epsilon$ precision in total variation distance with $O(d^6/\epsilon^{2})$ samples. Furthermore, since 
\begin{align}
    \left|\qty(\sum_{i=d^2+1}^{2d^2} \hat{\lambda}_i- \sum_{i=1}^{d^2} \hat{\lambda}_i)  - \frac{1}{d}\lVert\mathcal{R}(\mathcal{E}_B)\rVert_1 \right| = \left|\qty(\sum_{i=d^2+1}^{2d^2} \hat{\lambda}_i- \sum_{i=1}^{d^2} \hat{\lambda}_i)  - \qty(\sum_{i=d^2+1}^{2d^2} \lambda_i- \sum_{i=1}^{d^2} \lambda_i) \right|\leq \sum_{i=1}^{2d^2}|\hat{\lambda_i}-\lambda_i| \leq \epsilon, 
\end{align}
we can also estimate the 1st moment $\lVert\mathcal{R}(\mathcal{E}_B)\rVert_1$ up to precision $\epsilon$ using $O(d^6/\epsilon^{2})$ samples of $\rho_+$. \\

\section{Lower bound for the query complexity}\label{ap:samplecomplexity}
Let us define the unitary channel $\mathcal{U}(t)$ as 
\begin{align}
    \mathcal{U}(t)[\cdot] = U(t)[\cdot]U(t)^\dagger, \quad  U(t) = \mathrm{exp} \qty(-i\mqty[O & \mathcal{E}_A\\ \mathcal{E}_A^\dagger & O]t).
\end{align}
Now, suppose that there exists a protocol that universally creates a quantum channel $\widetilde{\mathcal{U}}(\delta, t)$ that is $\delta$-close in diamond norm to $\mathcal{U}(t)$, given the ability to apply the unknown channel $\mathcal{E}$ for $N$ times. 
\begin{align}
    \lVert\widetilde{\mathcal{U}}(\delta, t)-\mathcal{U}(t)\rVert_\diamond \leq \delta.
\end{align}
This process $\mathcal{P}$ can be seen as a quantum superchannel \cite{Chiribella_2008, Gour_2019} which takes $N$ copies of $\mathcal{E}$ as input and produces an output $\widetilde{\mathcal{U}}$. Most generally, $\mathcal{P}$ is a \textit{sequential superchannel}. Since $\mathcal{P}$ is applicable to arbitrary black-box channels, it can be used to solve the following channel discrimination task; Given the ability to apply a quantum channel that is promised to be either $\mathcal{E}_1$ or $\mathcal{E}_2$ for $N$ times, determine whether it is $\mathcal{E}_1$ or $\mathcal{E}_2$. Indeed, by producing  a single copy of $\widetilde{\mathcal{U}}_i$ by applying $\mathcal{P}$ to $\mathcal{E}_i$, the optimal success probability of discriminating between $\widetilde{\mathcal{U}}_1$ and $\widetilde{\mathcal{U}}_2$ is~\cite{Watrous_2018} 
\begin{align}
    \widetilde{P}^{\mathrm{suc.}}_N = \frac{1}{2}+\frac{1}{4}\lVert\widetilde{\mathcal{U}}_1(\delta, t)-\widetilde{\mathcal{U}}_2(\delta, t)\rVert_\diamond. 
\end{align}
In channel discrimination, the most general strategy allows feedback from the output, which is referred to as an \textit{adaptive} strategy. 
If we denote with $P^{\mathrm{suc.}}_N$ the success probability of discriminating between $\mathcal{E}_1$ and $\mathcal{E}_2$ with an adaptive strategy querying the channel $N$ times, 
\begin{align}
    \widetilde{P}^{suc.}_N \leq P^{\mathrm{suc.}}_N. 
\end{align}
Adaptive strategies can strictly outperform parallel strategies, so in general 
\begin{align}
\frac{1}{2}+\frac{1}{4}\lVert\mathcal{E}_1^{\otimes N}-\mathcal{E}_2^{\otimes N}\rVert_\diamond < P^{\mathrm{suc.}}_N. 
\end{align} 
However, it has been shown~\cite{PhysRevLett.118.100502} that the success probability for discriminating between two unknown channels with adaptive strategies reduces to the $N$-copy diamond distance and trace distance of Choi states for the special case of \textit{teleportation-covariant channels}; $\mathcal{E}(U\rho U^\dagger) = V\mathcal{E}(\rho)V^\dagger$, where $U$ is a teleportation unitary $U\in \mathbb{U}_d$ and $V$ is some unitary~\footnote{$\mathbb{U}_d$ consists of Heisenberg-Weyl operators, which is equivalent to Pauli operators for $n$ qubit systems}. For these channels, 
\begin{align}
\frac{1}{2}+\frac{1}{4}\lVert\widetilde{\mathcal{U}}_1(\delta, t)-\widetilde{\mathcal{U}}_2(\delta, t)\rVert_\diamond = \widetilde{P}^{\mathrm{suc.}}_N \leq P^{\mathrm{suc.}}_N = \frac{1}{2}+\frac{1}{4}\lVert\mathcal{E}_1^{\otimes N}-\mathcal{E}_2^{\otimes N}\rVert_\diamond =\frac{1}{2} +\frac{1}{4}\lVert\mathcal{E}_{1B}^{\otimes N}-\mathcal{E}_{2B}^{\otimes N}\rVert_1,
\end{align}
where we denote the Choi states of $\mathcal{E}_1$ and $\mathcal{E}_2$ by $\mathcal{E}_{1B}$ and $\mathcal{E}_{2B}$. It is important to emphasize that $N$ only depends on the parameters $\delta, t$, and is independent of the channel discrimination task. Thus, any lower bound for $N$ derived from a specific choice of the channel in the discrimination task applies generally to the original task of creating $\widetilde{\mathcal{U}}$ from $N$ copies of the black-box channel $\mathcal{E}$. \\

With this in mind, let us focus on the channel discrimination task of two teleportation-covariant channels $\mathcal{E}_1, \mathcal{E}_2$. First, let $f(\delta, t)$ denote the query complexity for obtaining $\widetilde{\mathcal{U}}(\delta, t)$ and let $t_\ast$ denote the minimum $t$ such that $\lVert\mathcal{U}_1(t)-\mathcal{U}_2(t)\rVert_\diamond=2$. We also denote with $\Theta(X)$ the smallest arc containing all the eigenvalues of the unitary $X$. Then, 
\begin{align}
     \lVert \mathcal{U}_1(t)-\mathcal{U}_2(t)\rVert_\diamond -\lVert \mathcal{U}_1(t)-\widetilde{\mathcal{U}}_1(\delta, t)\rVert_\diamond-\lVert \mathcal{U}_2(t)-\widetilde{\mathcal{U}}_2(\delta, t)\rVert_\diamond \leq \lVert\widetilde{\mathcal{U}}_1(\delta, t)-\widetilde{\mathcal{U}}_2(\delta, t)\rVert_\diamond \leq \lVert \mathcal{E}_{1B}^{\otimes {f(\delta, t)}}-\mathcal{E}_{2B}^{\otimes {f(\delta, t)}}\rVert_1.
\end{align}
which gives
\begin{align}
     2-\frac{2}{3} \leq  \lVert \mathcal{E}_{1B}^{\otimes f(\frac{1}{3}, t_\ast)}-\mathcal{E}_{2B}^{\otimes f(\frac{1}{3}, t_\ast)}\rVert_1 \leq 2\sqrt{1-F(\mathcal{E}_{1B}^{\otimes f(\frac{1}{3}, t_\ast)}, \mathcal{E}_{2B}^{\otimes f(\frac{1}{3}, t_\ast)}) } = 2\sqrt{1-F(\mathcal{E}_{1B}, \mathcal{E}_{2B})^{f(\frac{1}{3}, t_\ast)}}.
\end{align}
Here, $F(\rho, \sigma) = ||\sqrt{\rho}\sqrt{\sigma}||_1^2$ denotes the fidelity, and we have used the property $F(\rho_1 \otimes \rho_2, \sigma_1 \otimes \sigma_2)=F(\rho_1, \sigma_1)F(\rho_2, \sigma_2)$ as well as the Fuchs-van de Graaf inequality 
\begin{align}
    1-\sqrt{F(\rho, \sigma)} \leq \frac{1}{2}\lVert \rho-\sigma \rVert_1 \leq \sqrt{1-F(\rho, \sigma)}. 
\end{align}
If we denote $ H_1 :=\mqty[O & \mathcal{E}_{1A}\\ \mathcal{E}_{1A}^\dagger & O], H_2 := \mqty[O & \mathcal{E}_{2A}\\ \mathcal{E}_{2A}^\dagger & O]$, 
\begin{align}
    [H_1, H_2] = \mqty[\mathcal{E}_{1A}\mathcal{E}_{2A}^\dagger- \mathcal{E}_{2A}\mathcal{E}_{1A}^\dagger& O\\ O & \mathcal{E}_{1A}^\dagger\mathcal{E}_{2A}-\mathcal{E}_{2A}^\dagger\mathcal{E}_{1A}]. 
\end{align}
If $[H_1, H_2]=0$, we have 
\begin{align}
    \Theta(U_1^\dagger(t)U_2(t)) = 2d\lVert\mathcal{R}(\Delta \mathcal{E}_B)\rVert_\infty t \quad 
\end{align}
for sufficiently small $\Delta \mathcal{E}_B := \mathcal{E}_{1B}-\mathcal{E}_{2B}$. Then, we have 
\begin{align}
    1-F(\mathcal{E}_{1B}, \mathcal{E}_{2B})^{f\qty(\frac{1}{3}, t_\ast)} \geq \frac{4}{9}, \quad t_\ast=\frac{\pi}{2d\lVert\mathcal{R}(\Delta \mathcal{E}_B)\rVert_\infty}. 
\end{align}
To proceed further, let us specifically choose
\begin{align}
    \mathcal{E}_1(\rho)= \mathcal{E}_p(\rho) := \mathcal{D}_p \circ \Phi(\rho) , \quad \mathcal{E}_2(\rho)= \mathcal{E}_q(\rho) :=\mathcal{D}_{q} \circ \Phi(\rho), \quad (p=1, q=1-\epsilon)
\end{align}
where $\mathcal{D}_a(\rho) :=a\frac{I}{2}+(1-a)\rho, \, (a=p,q)$ denotes the qubit depolarizing channel and $\Phi(\rho)$ is defined as $\Phi(\rho) := \Tr_E (\rho) \otimes |\Phi_E\rangle \langle \Psi_E|$, where $\Tr_E$ denotes the partial trace over $n-1$ qubits and $|\Psi_E\rangle \langle \Psi_E|:=|00\cdots 0\rangle \langle 00\cdots 0|$ is an $n$-qubit pure state. Since the teleportation unitary $U \in \mathbb{U}_d$ consists of $n$ qubit Pauli operators for $d=2^n$, $\Phi$ becomes teleportation covariant. Thus, $\mathcal{E}_1, \mathcal{E}_2$ are also teleportation covariant channels. We can confirm that 
\begin{align}
    \mathcal{E}_{aB} &= \qty(a\frac{I_2}{2}\otimes \frac{I_2}{2}+(1-a)|\Phi_2^+\rangle \langle \Phi_2^+|)\otimes \qty(|\Psi_E\rangle \langle \Psi_E|\otimes \frac{I_{2^{n-1}}}{2^{n-1}}) \\
    \mathcal{R}(\mathcal{E}_{aB}) &= \qty(\frac{a}{2} |\Phi_2^+\rangle \langle \Phi_2^+|+ 2(1-a) \frac{I_2}{2}\otimes \frac{I_2}{2})\otimes \qty(\frac{1}{\sqrt{2^{n-1}}}
    (|\Psi_E\rangle \otimes |\Psi_E\rangle)\langle \Phi_{2^{n-1}}^+|),
\end{align}
so $\mathcal{E}_{1A} \mathcal{E}_{2A}^\dagger =\mathcal{E}_{2A} \mathcal{E}_{1A}^\dagger, \: \mathcal{E}_{1A}^\dagger \mathcal{E}_{2A} = \mathcal{E}_{2A}^\dagger \mathcal{E}_{1A}$. This implies $[H_1, H_2]=0$, and 
\begin{align}
\begin{dcases}
    \lVert \Delta \mathcal{E}_B \rVert_1 = \frac{3}{2}|p-q|\\
    d\lVert \mathcal{R}(\Delta \mathcal{E}_B)\rVert_\infty = \sqrt{\frac{d}{2}}|p-q|\\
    F(\mathcal{E}_{1B},\mathcal{E}_{2B}) = \qty[\sqrt{1-\frac{3}{4}p}\sqrt{1-\frac{3}{4}q}+\frac{\sqrt{p}}{2}\frac{\sqrt{q}}{2} \times 3]^2 = 1-\frac{3}{4}\epsilon^2+O(\epsilon^3)
\end{dcases}
\end{align}
Using the Bernoulli inequality $(1+x)^r \geq 1+rx$ for $x\geq -1$, and $r\in \mathbb{N}$, 
\begin{align}
    f(\frac{1}{3}, t_\ast) \geq \frac{16}{27\epsilon^2} = \frac{4}{3} \frac{1}{\lVert \Delta \mathcal{E}_B\rVert_1^2}=\frac{32}{27\pi^2}\qty(\sqrt{d}t_\ast)^2. 
\end{align}
Finally, if we set $m=\lceil \frac{1}{6\delta} \rceil$, we have $m\delta \leq \frac{1}{3}$, so for $t\geq t_\ast/m$, we have 
\begin{align}
    f(\delta, t) \geq \frac{1}{m} f(m\delta, mt) \geq \frac{1}{m} f(\frac{1}{3}, mt) \geq \frac{32}{27\pi^2} m(\sqrt{d}t)^2\geq \frac{16}{81 \pi^2} \cdot \frac{(\sqrt{d}t)^2}{\delta}. 
\end{align}

\section{Query complexity for learning singular value moments}\label{app::detailedproof} 
Suppose we are given access to the exact block-encoding of the Hermitized Liouville representation $H$, and utilized QSVT to obtain a matrix $M$ that is a $(1,5,\epsilon_F)$-block encoding of $P\qty(\frac{2}{\pi}H)$, where $P(x)$ is the approximation of the function $f(x)=\frac{1}{2}x^{r-1}|x|^{1+\alpha}$ up to precision $\epsilon''$, where $r+\alpha=q-2$, $r$ is a positive integer, and $\alpha \in (-1, 1)$, namely,  
\begin{align}
    |P(x) - \frac{1}{2}x^{r-1}|x|^{1+\alpha}| \leq \epsilon''. 
\end{align}
$P(x)$ has the same parity as $r-1$. Therefore, we can choose $P(x)$ and $f(x)$ to be an even function. Noting that 
\begin{align}
    \mqty[ \Sigma & O\\O & -\Sigma] = \frac{1}{2}\mqty[ U^\dagger & V^\dagger \\U^\dagger & -V^\dagger]\mqty[ O & U\Sigma V^\dagger\\V \Sigma U^\dagger & O]\mqty[ U & U\\V & -V], 
\end{align}
for any even function $f$, we see that 
\begin{align}
    f\qty(\mqty[ O & \mathcal{E}_A\\\mathcal{E}_A^\dagger & O]) = \frac{1}{2}\mqty[ U & U\\V & -V]\mqty[ f(\Sigma) & O\\O & f(-\Sigma)] \mqty[ U^\dagger & V^\dagger \\U^\dagger & -V^\dagger] = \mqty[Uf(\Sigma) U^\dagger & O\\ O & Vf(\Sigma)V^\dagger]. 
\end{align}
Now, we consider the circuit depicted below;  
\begin{figure}[H]
\centering
\begin{quantikz}[row sep = 0.6em, column sep = 1.1em, thin lines]
    \lstick{{$|\pm\rangle$}} &\ctrl{3}&&\ctrl{3}&&\ctrl{1}&&\ctrl{3}&&\ctrl{3}& \gate{H}&\meter{}\qw \\
    \lstick{$|0^a\rangle$} &&&&&\gate[wires=4]{U}&&&&&\qw\\
    \lstick{$|1\rangle \:$}&&&&&&&&&&\qw\\
    \lstick{{$\frac{I}{d} \;\:$}} &\swap{2}&\gate{\mathcal{E}}&\swap{2}&&&&\swap{2}&&\swap{2}&\qw \\
    \lstick[wires = 2]{$|\Phi_d^+\rangle$} &&&&&&&&&&\qw \\
    &\targX{}&&\targX{}&&&&\targX{}&\gate{\mathcal{E}}&\targX{}&\qw 
\end{quantikz}
\end{figure}
\noindent We can show that the probability distribution is 
\begin{align}
\begin{dcases}
    p(0) = \frac{1}{2}+\frac{1}{2}\mathrm{Re}\qty(\Tr [ \mathcal{E}_A^\dagger \mathcal{E}_A \langle 1|M|1\rangle])\\
    p(1) = \frac{1}{2}-\frac{1}{2}\mathrm{Re}\qty(\Tr [ \mathcal{E}_A^\dagger \mathcal{E}_A \langle 1|M|1\rangle]). 
\end{dcases}
\end{align}
Repeating the above circuit $O(\frac{\log (\frac{1}{\delta})}{\epsilon_H^2})$ times gives an estimate $\hat{S}'_q$ such that
\begin{align}
    \mathrm{Pr}(|\hat{S}'_q- \frac{1}{d^2}\Tr [ \mathcal{E}_A^\dagger \mathcal{E}_A \langle 1|M|1\rangle]| \leq \epsilon_H) \geq 1-\delta. 
\end{align}
Since $\Tr |AB| \leq \lVert A \rVert_1 \lVert B\rVert_\infty$,  $\lVert\frac{1}{d^2}\mathcal{E}_A^\dagger \mathcal{E}_A\rVert_1=\Tr \mathcal{E}_B^2 \leq 1$, $\lVert \langle 1|X|1\rangle\rVert_\infty  \leq \rVert X \lVert_\infty $, we have 
\begin{align}
    \frac{1}{d^2}\left| \Tr [\mathcal{E}_A^\dagger \mathcal{E}_A \langle 1|M|1\rangle] - \Tr [\mathcal{E}_A^\dagger \mathcal{E}_A \langle 1|P\qty(\frac{2}{\pi}H))|1\rangle] \right| &\leq \left\lVert M-P\qty(\frac{2}{\pi}H)\right\rVert_\infty \leq \epsilon_F.
\end{align}
We also have 
\begin{align}
    \Tr [\mathcal{E}_A^\dagger \mathcal{E}_A \bra{1} F\qty(\mqty[ O & \mathcal{E}_A\\\mathcal{E}_A^\dagger & O]) \ket{1}]=\Tr [V \Sigma^2 V^\dagger VF(\Sigma)V^\dagger] = \sum_i \sigma_i^2 F(\sigma_i)
\end{align}
for any even function $F$. Therefore, we obtain
\begin{align}
     \frac{1}{d^2}\left|\Tr [\mathcal{E}_A^\dagger \mathcal{E}_A \langle 1|P\qty(\frac{2}{\pi}H))|1\rangle]-\Tr [\mathcal{E}_A^\dagger \mathcal{E}_A \bra{1} f\qty(\mqty[ O & \mathcal{E}_A\\\mathcal{E}_A^\dagger & O]) \ket{1}] \right| 
     &\leq \frac{1}{d^2} \sum_{i=1}^{d^2} \sigma_i^2 \left|P\qty(\frac{\sigma_i}{\pi d^{1-k}})-f\qty(\frac{\sigma_i}{\pi d^{1-k}})\right| \nonumber\\
     &\leq \epsilon'', 
\end{align}
where we have used the fact $\frac{1}{d^2}\sum_i \sigma_i^2 = \Tr \mathcal{E}_B^2\leq 1$ (see Appendix~\ref{app:channelprop}, Lemma~\ref{lm:2ndmomentpurity}). We also have 
\begin{align}
    \frac{1}{d^2} \Tr \qty(\mathcal{E}_A^\dagger \mathcal{E}_A \bra{1} f\qty(\frac{2}{\pi}H)\ket{1}) = \frac{1}{d^2} \qty(\frac{1}{\pi d^{1-k}})^{q-2} \qty(\sum_{i=1}^{d^2}\sigma_i^q).
\end{align}
Utilizing the notion of the samplizer~\cite{wang_et_al:LIPIcs.ESA.2024.101}, we can obtain a quantum channel $\mathcal{U}$ that is $\delta'$-close in diamond norm to the unitary implemented in the above QSVT circuit $\mathcal{C}_{\textrm{QSVT}}$. Performing the Hadamard test with this approximated channel $\mathcal{U}$ produces the estimate $\hat{S}_q$ such that 
\begin{align}
    |\hat{S}_q-\hat{S}'_q| \leq \delta'
\end{align}
since $\Tr [M(\mathcal{C}_{\textrm{QSVT}}[\rho_0])-M(\mathcal{U}(\rho_0))] \leq ||M||_\infty ||\mathcal{C}[\rho_0]-\mathcal{U}(\rho_0)||_1 \leq ||\mathcal{C}_{\textrm{QSVT}}-\mathcal{U}||_\diamond =\delta'$ for any measurement operator $M$. Thus, we arrive at an $\epsilon_H + \epsilon_F+\epsilon''+\delta'$ estimate $\hat{S}_q$ for the quantity $\frac{1}{d^{2+(1-k)(q-2)}}\sum_{i=1}^{d^2} \sigma_i^q$ with 
\begin{align}
    O\qty(\frac{d^{2k+1} Q^2 \log^2 \qty(\frac{Q}{\delta'})\log \frac{1}{\delta} }{\delta' \cdot \epsilon_H^2 })
\end{align}
queries to the black-box channel. Substituting $\epsilon_H=\epsilon_F = \epsilon''=\delta'= \frac{\epsilon}{4}$ and $Q = O(1/\epsilon_F^{\frac{1}{q-2}})$, the overall query complexity for estimating $S_q$ to additive precision $\epsilon$ is $$O\qty(\frac{d^{2k+1} \log \frac{1}{\delta} \log^2 \qty(\frac{1}{\epsilon^{1+\frac{1}{q-1}}})}{\qty(\frac{d^q}{d^{2+(1-k)(q-2)}}\epsilon)^{3+\frac{2}{q-2}}})$$ with $k \leq \frac{1}{2}$ for general channels and $k\leq 1$ for unital channels. Therefore, we obtain the following query complexity; 
\begin{align}
\begin{dcases}
    \widetilde{O}\qty(\frac{d\log \frac{1}{\delta}}{d^{\frac{3}{2}(q-2)}\epsilon^{3+\frac{2}{q-2}}}) \quad(\textrm{general channels})\\
    \widetilde{O}\qty(\frac{ d\log \frac{1}{\delta}}{d^{3(q-2)}\epsilon^{3+\frac{2}{q-2}}}) \quad(\textrm{unital channels})\\
\end{dcases}
\end{align}

\section{\texttt{SWAP} circuit}\label{app:SWAP}
While in the main text we have discussed the general singular value moment $q>1, q\in \mathbb{R}$, when $q$ is an even integer, it is also possible to calculate the singular value moments with \texttt{SWAP} circuits implemented on many copies of the Choi state  $\mathcal{E}_B$. The \texttt{SWAP}s in the circuit are chosen so as to realize the following permutation pattern $J$ and $K$ in system 1 and 2, respectively:
\begin{equation}\label{eq:J}
    J = \mqty(1 & 2 & 3 & 4 & 5 & 6 & 7 & \cdots & 2k\\ 2k & 3 & 2 & 5 & 4 & 7 & 6 & \cdots & 1), \quad K = \mqty(1 & 2 & 3 & 4 & \cdots 2k-1 & 2k\\
    2 & 1 & 4 & 3 & \cdots 2k & 2k-1). 
\end{equation}
In a concrete form, this is expressed as 
\begin{align}
    \frac{1}{d^2}\sum_{i} \sigma_i^2 &= \Tr[\mathbb{F}_{(12)}\otimes \mathbb{F}_{(12)} \rho^{\otimes 2}] = \Tr \rho^2 \nonumber\\
    \frac{1}{d^4}\sum_{i} \sigma_i^4 &= \Tr[\mathbb{F}_{(4321)} \otimes \mathbb{F}_{(2143)} \rho^{\otimes 4}] \nonumber\\
    \frac{1}{d^6}\sum_{i} \sigma_i^6 &= \Tr[\mathbb{F}_{(632541)} \otimes \mathbb{F}_{(214365)} \rho^{\otimes 6}] \nonumber\\
    \frac{1}{d^8}\sum_{i} \sigma_i^8 &= \Tr[\mathbb{F}_{(83254761)} \otimes \mathbb{F}_{(21436587)} \rho^{\otimes 8}] 
\end{align}
\begin{figure}[H]
\centering
\begin{quantikz}[row sep = 0.65em, thin lines]
  \lstick{{$\ket{0}$}} & && \gate{H}&\ctrl{1}&\ctrl{3}&\ctrl{2}&\ctrl{6}&\gate{H}&\meter{}\\
  \lstick{$\ket{0}^{\otimes n}$} & \gate{H} & \ctrl{1} &\gate{\mathcal{E}}  & \swap{6} &&&&&\\
  \lstick{$\ket{0}^{\otimes n}$} && \targ{} &&&&\swap{2}&&&\\
  \lstick{$\ket{0}^{\otimes n}$} & \gate{H} & \ctrl{1} &\gate{\mathcal{E}}  & & \swap{2}&&&&\\
  \lstick{$\ket{0}^{\otimes n}$} && \targ{} &&&&\targX{}&&&\\
  \lstick{$\ket{0}^{\otimes n}$} & \gate{H} & \ctrl{1} &\gate{\mathcal{E}}  & &\targX{}&&&&\\
  \lstick{$\ket{0}^{\otimes n}$} && \targ{} &&&&&\swap{2}&&\\
  \lstick{$\ket{0}^{\otimes n}$} & \gate{H} & \ctrl{1} &\gate{\mathcal{E}}  & \targX{} &&&&&\\
  \lstick{$\ket{0}^{\otimes n}$} && \targ{} &&&&&\targX{}&&\\
\end{quantikz}
\caption{A quantum circuit for calculating the $4$-th singular value moment.}
\end{figure}
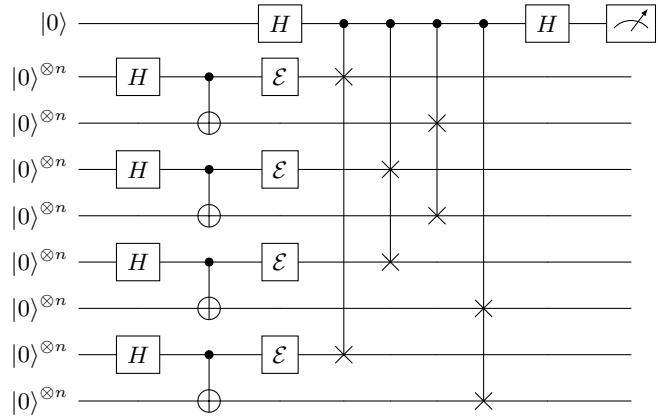
Here, $\mathbb{F}_\sigma$ represents the \texttt{SWAP} operator that implements the permutation $(1, 2, \cdots, 2k)\rightarrow \sigma$. The query complexity for estimating $S_q = \frac{1}{d^q}\sum_i \sigma^q$ up to precision $\epsilon$ is $O\qty(\frac{\log \frac{1}{\delta}}{\epsilon^2})$. We note that the above circuit was originally discussed for general mixed states, rather than focusing on the Choi state of channels in the context of entanglement detection~\cite{PhysRevLett.101.190503}. While this scheme allows us to calculate $S_q$ for the special case where $q$ is an even integer, other moments such as odd integer or more general \textit{real moments} $q\in \mathbb{R}$ are unlikely to be calculated by the above strategy.

\section{1st moment via Fourier-cosine expansion}\label{ap:1stmoment}
Here we follow the methods employed in~\cite{Wei:2024znk} to estimate the 1st singular value moment. First, the Fourier-cosine expansion of the absolute function reads
\begin{align}
    |x| = \frac{\pi}{2}-\sum_{\ell=1}^{\infty} \frac{4}{\pi (2\ell-1)^2} \cos ((2\ell-1)x), \quad x \in [-1, 1].
\end{align}
Since $\lVert\mathcal{R}(\mathcal{E}_B)\rVert_\infty \leq 1$, (see Appendix~\ref{app:channelprop}, Lemma~\ref{lm:stateopnorm}), we can substitute the Hermitized matrix to obtain 
\begin{align}
    2\lVert\mathcal{R}(\mathcal{E}_B)\rVert_1 = \left\lVert\mqty[ O & \mathcal{R}(\mathcal{E}_B)\\ \mathcal{R}(\mathcal{E}_B)^\dagger & O]\right\rVert_1 = \frac{\pi}{2} (2d^2)-\sum_{\ell=1}^{\infty} \frac{4}{\pi (2\ell-1)^2} \Tr \qty[\cos \qty((2\ell-1)\mqty[ O & \mathcal{R}(\mathcal{E}_B)\\ \mathcal{R}(\mathcal{E}_B)^\dagger & O])].
\end{align}
Truncating the series expansion at level $L$ gives 
\begin{align}
    \left| \sum_{\ell = L+1}^\infty \frac{4}{\pi (2\ell-1)^2} \Tr \qty[\cos \qty((2\ell-1)\mqty[ O & \mathcal{R}(\mathcal{E}_B)\\ \mathcal{R}(\mathcal{E}_B)^\dagger & O])] \right | &< \sum_{\ell = L+1}^\infty \left| \frac{4}{\pi (2\ell-1)^2}  \right | \cdot (2d^2) \nonumber\\
    &< \frac{8d^2}{\pi} \int_{L}^\infty \mathrm{d} x \frac{1}{(2x-1)^2}  \nonumber\\
    &= \frac{4d^2}{\pi (2L-1)}
\end{align}
which can be bounded as $\frac{4d^2}{(2L-1)\pi} \leq \epsilon_1$ by choosing $L=\lceil \frac{2d^2}{\pi \epsilon_1} + \frac{1}{2}\rceil $. Next, consider a random variable $\mathbb{X}$ that takes the value $-\frac{\pi}{2}(2d^2), \frac{\pi}{2}(2d^2),0$ with the probability $\sum_{\ell=1}^{L} \frac{8}{\pi^2(2\ell-1)^2} p(0), \sum_{\ell=1}^{L} \frac{8}{\pi^2(2\ell-1)^2} p(1), 1-\sum_{\ell=L+1}^\infty \frac{8}{\pi^2(2\ell-1)^2}$, where
\begin{align}
\begin{dcases}
    p(0) = \frac{1}{2}+\frac{1}{4d^2} \Tr \qty[\cos \qty((2\ell-1)\mqty[ O & \mathcal{R}(\mathcal{E}_B)\\ \mathcal{R}(\mathcal{E}_B)^\dagger & O])]\nonumber\\
    p(1) = \frac{1}{2} -\frac{1}{4d^2} \Tr \qty[\cos \qty((2\ell-1)\mqty[ O & \mathcal{R}(\mathcal{E}_B)\\ \mathcal{R}(\mathcal{E}_B)^\dagger & O])]. 
\end{dcases}
\end{align}
Then, we obtain
\begin{align}
    \mathbb{E}[\mathbb{X}] = -\sum_{\ell=1}^L\frac{4}{\pi (2\ell-1)^2} \Tr \qty[\cos \qty((2\ell-1)\mqty[ O & \mathcal{R}(\mathcal{E}_B)\\ \mathcal{R}(\mathcal{E}_B)^\dagger & O])]. 
\end{align}
Here, $p(0), p(1)$ corresponds to the probability of obtaining the measurement outcomes $0,1$ in the DQC1 circuit with the ideal block-encoding unitary. Let $\widetilde{\mathbb{X}}$ denote the corresponding random variable when we replace the unitary with the approximate channel constructed by our block-encoding scheme. Defining the precision parameter
\begin{align}
    \epsilon_{2\ell-1}' = \frac{\pi(2\ell-1)}{2(2+\log(2L-1))} \frac{\epsilon_2}{2d^2}, 
\end{align}
one obtains 
\begin{align}
    |\mathbb{E}[\widetilde{\mathbb{X}}]-\mathbb{E}[\mathbb{X}]| &\leq \sum_{\ell =1}^L \frac{4(2d^2)}{\pi(2\ell-1)^2} \epsilon_{2\ell-1}'\nonumber\\
    &= \frac{2\epsilon_2}{2+\log (2L-1)} \sum_{\ell=1}^L \frac{1}{2\ell-1} \nonumber\\
    &< \frac{2\epsilon_2}{2+\log (2L-1)} \qty(1+\int_1^L \frac{1}{2x-1} \mathrm{d}x) = \epsilon_2
\end{align}
with $\widetilde{O}(d/\epsilon_{2\ell-1}')$ queries to the unknown channel, utilizing Theorem~\ref{thm1}. Computing the variance, one obtains
\begin{align}
    \mathrm{Var}[\widetilde{\mathbb{X}}] &= \mathbb{E}[\widetilde{\mathbb{X}}^2]-\mathbb{E}[\widetilde{\mathbb{X}}]^2 \nonumber\\
    & \leq |\mathbb{E}[\widetilde{\mathbb{X}}^2]-\frac{\pi^2}{4}(2d^2)^2|+|\mathbb{E}[\widetilde{\mathbb{X}}]^2-\frac{\pi^2}{4}(2d^2)^2|\nonumber\\
    &= \frac{\pi^2 (2d^2)^2}{4} \sum_{\ell=L+1}^\infty \frac{8}{\pi^2 (2\ell-1)^2}+|\mathbb{E}[\widetilde{\mathbb{X}}]-\frac{\pi}{2}(2d^2)||\mathbb{E}[\widetilde{\mathbb{X}}]+\frac{\pi}{2}(2d^2)| \nonumber\\
    &\leq \frac{\pi^2 (2d^2)^2}{4} \sum_{\ell=L+1}^\infty \frac{8}{\pi^2 (2\ell-1)^2}+|\mathbb{E}[\widetilde{\mathbb{X}}]-\frac{\pi}{2}(2d^2)|\qty(|\mathbb{E}[\widetilde{\mathbb{X}}]-\mathbb{E}[\mathbb{X}]|+|\mathbb{E}[\mathbb{X}]+\frac{\pi}{2}(2d^2)|) \nonumber\\
    &<\pi\epsilon_1 d^2+2\pi d^2(\epsilon_2+2\lVert\mathcal{R}(\mathcal{E}_B)\rVert_1+\epsilon_1) = O(d^2 \lVert\mathcal{R}(\mathcal{E}_B)\rVert_1), 
\end{align}
where we have used $|\mathbb{E}[\widetilde{\mathbb{X}}]|\leq \frac{\pi}{2}(2d^2)$, $|\mathbb{E}[\widetilde{\mathbb{X}}]-\mathbb{E}[\mathbb{X}]|< \epsilon_2$ and 
\begin{align}
     \left|\mathbb{E}[\mathbb{X}]+\frac{\pi}{2}(2d^2)\right| &=\left|2\lVert\mathcal{R}(\mathcal{E}_B)\rVert_1-\sum_{\ell = L+1}^\infty \frac{4}{\pi (2\ell-1)^2} \Tr \qty[\cos \qty((2\ell-1)\mqty[ O & \mathcal{R}(\mathcal{E}_B)\\ \mathcal{R}(\mathcal{E}_B)^\dagger & O])]\right|\nonumber\\
     & \leq 2\lVert\mathcal{R}(\mathcal{E}_B)\rVert_1 + \epsilon_1.
\end{align}
Employing the median of means estimation, it suffices to have $O\qty(\mathrm{Var}[\widetilde{\mathbb{X}}]\frac{\log \frac{1}{\delta} }{\epsilon_3^2})=O(d^2||\mathcal{R}(\mathcal{E}_B)||_1 \frac{\log \frac{1}{\delta} }{\epsilon_3^2})$ repetitions to produce an $\epsilon_3$-precise estimation with probability $P^{\textrm{suc.}}>1-\delta$. Thus, choosing $\epsilon_1=\epsilon_2=\epsilon_3=\frac{2\epsilon}{3}$, we can produce an $\epsilon$-close estimate for $||\mathcal{R}(\mathcal{E}_B)||_1$ with probability $P^{\textrm{suc.}}\geq 1-\delta$. The expectation value for the query complexity becomes 
\begin{align}
    O\qty(\sum_{\ell=1}^L  \frac{8}{\pi^2 (2\ell-1)^2}\frac{d(2\ell-1)^2}{\epsilon'_{2L-1}}\cdot d^2\lVert\mathcal{R}(\mathcal{E}_B)\rVert_1 \frac{\log \frac{1}{\delta} }{\epsilon_3^2}) &= O\qty(\sum_{\ell=1}^L  \frac{32(2+\log(2L-1))}{\pi^3(2\ell-1)} \frac{d^{5}}{\epsilon_2}\lVert\mathcal{R}(\mathcal{E}_B)\rVert_1 \frac{\log \frac{1}{\delta} }{\epsilon_3^2})\nonumber\\
    &= O\qty(\frac{16(2+\log(2L-1))^2}{\pi^3} \frac{d^{5}}{\epsilon_2}\lVert\mathcal{R}(\mathcal{E}_B)\rVert_1 \frac{\log \frac{1}{\delta} }{\epsilon_3^2})\nonumber\\
    &= \widetilde{O}\qty(\frac{d^{5} \lVert\mathcal{R}(\mathcal{E}_B)\rVert_1 \log \frac{1}{\delta}}{\epsilon^3}),
\end{align}
where $\widetilde{O}$ ignores logarithmic factors in $d$. Since $\lVert \mathcal{R}(\mathcal{E}_B)\rVert_1 \leq d$ (see Appendix~\ref{app:channelprop}, Lemma~\ref{lm:channeltrnorm}), the query complexity is $\widetilde{O}\qty(\frac{d^{6}\log \frac{1}{\delta}}{\epsilon^2})$ in the worst case. 

\section{Comparison with full-state tomography}\label{ap:fulltomography}
Here we roughly analyze the sample complexity of estimating the first moment via full-state tomography of Choi states. We consider bipartite states $\rho \in \mathbb{C}^{d^2\times d^2}$. Let the spectral decomposition of $\rho_1-\rho_2$ be $\rho_1-\rho_2 = \sum_{i=1}^{d^2} \lambda_i \ket{i}\bra{i}$. From the triangle inequality of the Schatten $p$-norms, we obtain 
\begin{align}
    \left|\lVert\mathcal{R}(\rho_1)\rVert_p -\lVert\mathcal{R}(\rho_2)\rVert_p\right|  &\leq \lVert\mathcal{R}(\rho_1)-\mathcal{R}(\rho_2)\rVert_p\nonumber\\
    &\leq \sum_{i=1}^{d^2} |\lambda_i| \lVert\mathcal{R}(\ket{i}\bra{i})\rVert_p.
\end{align}
From this norm inequality and $\lVert\mathcal{R}(\rho)\rVert_1  \leq d\lVert\mathcal{R}(\rho)\rVert_2  = d\sqrt{\Tr \rho^2} \leq d$, we know that 
\begin{align}
    \left| \lVert \mathcal{R}(\rho_1)\rVert_1 -\lVert \mathcal{R}(\rho_2)\rVert_1\right|  \leq d\sum_{i=1}^{d^2} |\lambda_i| = d \lVert\rho_1- \rho_2\rVert_1.
\end{align}
To estimate $\lVert\rho_1-\rho_2\rVert_1$ up to precision $\epsilon$, it takes $O(d^4/\epsilon^{2})$ samples. Therefore, substituting $\epsilon \rightarrow \epsilon/d$, it takes $O(d^6/\epsilon^{2})$ samples to precisely estimate the 1st moment $\lVert\mathcal{R}(\rho)\rVert_1$ up to precision $\epsilon$ via full-state tomography in the worst case. 
\end{document}